\documentclass[a4paper, 11pt]{article}
\usepackage{graphicx}
\usepackage{my_style}

\newcommand{\concat}{}
\newcommand{\AssocPOVM}[1]{\mathrm{E}^{#1}}
\newcommand{\Outk}{\mathsf{K}}   
\newcommand{\Outl}{\mathsf{L}}  
\newcommand{\Outs}{\mathsf{S}}  

\makeatletter
\newcommand{\circle@arrow}[2]{%
  \m@th
  \ooalign{%
    \hidewidth$#1\circ\mkern1mu$\hidewidth\cr
    $#1#2$\cr
  }%
}

\DeclareRobustCommand{\rcircarrow}{%
  \mathrel{\vphantom{\rightarrow}\mathpalette\circle@arrow\longrightarrow}%
}

\DeclareRobustCommand{\lcircarrow}{%
  \mathrel{\vphantom{\leftarrow}\mathpalette\circle@arrow\longleftarrow}%
}
\makeatother

\usepackage[
  backend=biber,        
  style=numeric-comp,   
  sorting=none,         
  giveninits=true,      
  doi=false,             
  url=false,             
  eprint=true,         
  maxbibnames=99,
  date=year
]{biblatex}
\addbibresource{references.bib}

\DeclareFieldFormat*{title}{\normalfont#1}
\DeclareFieldFormat*{subtitle}{\normalfont#1}   
\DeclareFieldFormat*{booktitle}{\normalfont#1}
\DeclareFieldFormat*{maintitle}{\normalfont#1}
\DeclareFieldFormat*{citetitle}{\normalfont#1}
\DeclareFieldFormat*{journaltitle}{\normalfont#1}
\DeclareFieldFormat{pages}{\mkfirstpage{#1}}
\DeclareFieldFormat[article]{volume}{\mkbibbold{#1}}
\renewbibmacro{in:}{}
\ExecuteBibliographyOptions{isbn=false}
\DeclareBibliographyDriver{article}{%
  \usebibmacro{bibindex}%
  \usebibmacro{begentry}%
  \usebibmacro{author/translator+others}%
  \setunit{\labelnamepunct}\space
  \printfield[title]{title}\addcomma\space
  \href{\thefield{url}}{%
    \printfield{journaltitle}\space
    \printfield{volume},\space
    \printfield[pages]{pages}\space
    (\printfield{year})%
  }%
  \usebibmacro{finentry}%
}
\DeclareBibliographyDriver{misc}{%
  \usebibmacro{bibindex}%
  \usebibmacro{begentry}%
  \usebibmacro{author/translator+others}%
  \setunit{\labelnamepunct}\space
  \printfield[title]{title}\addcomma\space
  \href{\thefield{url}}{arXiv:\thefield{eprint}\space(\printfield{year})}%
  \usebibmacro{finentry}%
}

\DeclareBibliographyDriver{book}{%
  \usebibmacro{bibindex}%
  \usebibmacro{begentry}%
  \usebibmacro{author/editor+others}%
  \setunit{\labelnamepunct}\space
  \href{\thefield{url}}{\printfield[title]{title}}%
  \addcomma\space
  \printlist{publisher}%
  \setunit{\space}%
  \printtext{(\printfield{year})}%
  \usebibmacro{finentry}%
}

\DeclareBibliographyDriver{inproceedings}{%
  \usebibmacro{bibindex}%
  \usebibmacro{begentry}%
  \usebibmacro{author/translator+others}%
  \setunit{\labelnamepunct}\space
  \printfield[title]{title}\addcomma\space
  \iffieldundef{url}
    {%
      \printtext{Proc.\ }%
      \printfield{booktitle}%
      \setunit{\addcomma\space}%
      \printfield[pages]{pages}
      \setunit{\space}%
      \printtext{(\printfield{year})}%
    }{%
      \href{\thefield{url}}{%
        Proc.\ \printfield{booktitle}\space
        \printfield[pages]{pages}\space
        (\printfield{year})%
      }%
    }%
  \usebibmacro{finentry}%
}

\begin{document}

\title{
  Characterizing Space-Constrained Implementability of Quantum Instruments via Signaling Conditions
}

\author{
  Kosuke Matsui
  \affiliation{1}
   \affiliation{2}
  \email{kosuke.matsui@phys.s.u-tokyo.ac.jp}
  \and
  Jun-Yi Wu
  \affiliation{2}
   \affiliation{3}
  \email{junyiwuphysics@gmail.com}
  \and
  Hayata Yamasaki
   \affiliation{4}
  \email{hayata.yamasaki@gmail.com}
  Min-Hsiu Hsieh
   \affiliation{2}
  \email{min-hsiu.hsieh@foxconn.com}
  Mio Murao
   \affiliation{1}
  \email{murao@phys.s.u-tokyo.ac.jp}
  }

\address{1}{
  Department of Physics, Graduate School of Science, The University of Tokyo, Hongo 7-3-1, Bunkyo-ku, Tokyo, Japan
}
\address{2}{
  Hon Hai (Foxconn) Research Institute, Taipei, Taiwan
}
\address{3}{
  Department of Physics, Tamkang University, New Taipei 251301, Taiwan, ROC
}
\address{4}{
  Department of Computer Science, Graduate School of Information Science and Technology, The University of Tokyo, Hongo 7-3-1, Bunkyo-ku, Tokyo, Japan
}

\abstract{
  Scaling up the number of qubits available on quantum processors remains technically demanding even in the long term; it is therefore crucial to clarify the number of qubits required to implement a given quantum operation.
  For the most general class of quantum operations, known as quantum instruments, the qubit requirements are not well understood, especially when mid-circuit measurements and delayed input preparation are permitted.
  In this work, we characterize lower and upper bounds on the number of qubits required to implement a given quantum instrument in terms of the causal structure of the instrument.
  We further apply our results to entanglement distillation protocols based on stabilizer codes and show that, in these cases, the lower and upper bounds coincide, so the optimal qubit requirement is determined.
  In particular, we compute that the optimal number of qubits is 3 for the $[[9,1,3]]$-code-based protocol and 4 for the $[[5,1,3]]$-code-based protocol.
  }

\maketitle

\tableofcontents

\section{Introduction}\label{sec:intro}

The number of qubits available on current quantum processors is still insufficient to execute practically useful quantum operations~\cite{Bravyi2024}. However, scaling up the number of qubits is expected to remain technically challenging because error rates begin to rise once the system exceeds a certain size~\cite{2023google, mohseni2025buildquantumsupercomputerscaling, zhou2025}.
Consequently, to execute quantum operations with a limited number of qubits, it is crucial to clarify the number of qubits required to implement a given quantum operation.
In what follows, we use the term \emph{space} to denote the number of qubits that are simultaneously required to execute a quantum operation.

When the operation is unitary, analyzing the required space is straightforward, since the input system size (equivalently, the output system size) is necessary and sufficient for its implementation.
Indeed, the space must be large enough to hold the entire input or output state; conversely, a system of that size suffices to implement the unitary operation by decomposing it into elementary gates~\cite{nielsen2000}.
Broadening our scope, we next consider quantum operations that employ auxiliary systems and measurements, formalized as quantum instruments.
Since a quantum instrument admits a Stinespring dilation~\cite{Stinespring1955,nielsen2000}, i.e., a realization using a unitary operation together with an auxiliary system followed by a final projective measurement, the required space is upper-bounded by the size of the systems involved in the dilation.
Here, with a slight abuse of terminology, we use the term Stinespring dilation for instruments, originally a term for channels.
However, reasoning at the level of circuit compilation, the above upper bound may not be optimal.
Specifically, by performing mid-circuit measurements and reusing the measured qubits in subsequent operations, one can implement an instrument with less space than the upper bound suggested by its Stinespring dilation, as illustrated in \Cref{fig:example_delay}.
Here, by deferring the initialization of the auxiliary qubit until after the mid-circuit measurement, the number of qubits simultaneously used during the circuit execution is reduced.
Such mid-circuit measurements and qubit reuse are now feasible in multiple physical platforms~\cite{Iqbal2024, Ryan-Anderson2021realtimefaulttolerantqec, Gramham2023midcircuitmeasurements, Lis2023midcircuitoperations, PhysRevApplied.17.014014, Sivak_2023}.
As a further space-saving technique, we consider preparing only part of the input state at the beginning of a circuit and deferring the preparation of the remainder, as illustrated in \Cref{fig:example_no_delay}.
Here, the delayed inputs are loaded after a mid-circuit measurement, which further reduces the simultaneous qubit usage during the circuit execution.
The delayed-input technique is reasonable for algorithms whose input states are product states across qubits, such as entanglement distillation protocols~\cite{bennett1996purification, bennett1996mixedstateentanglement, matsumoto2003}.
It also applies when a quantum processor performs operations while communicating with other processors, for example, in distributed quantum computation~\cite{Eisert2000,Wu2023entanglement,Andres_Martinez_2024,Main_2025}. Even when the input to the local processor is entangled across qubits, inputs may be supplied sequentially by other processors, thereby enabling the delayed-input preparation.

\begin{figure}[h]
  \centering
  \begin{subfigure}{\textwidth}
    \centering
    \includegraphics[width=0.8\linewidth]{./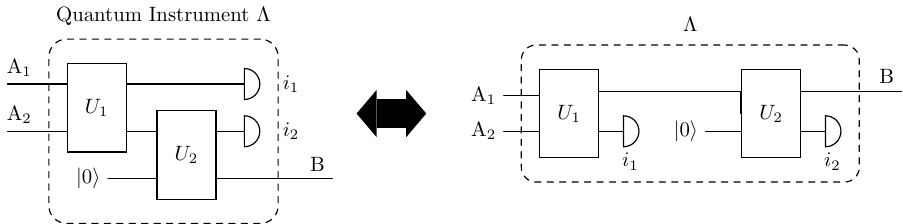}
    \caption{}\label{fig:example_delay}
  \end{subfigure}
  \vspace{2ex}
  \begin{subfigure}{\textwidth}
    \centering
    \includegraphics[width=0.8\linewidth]{./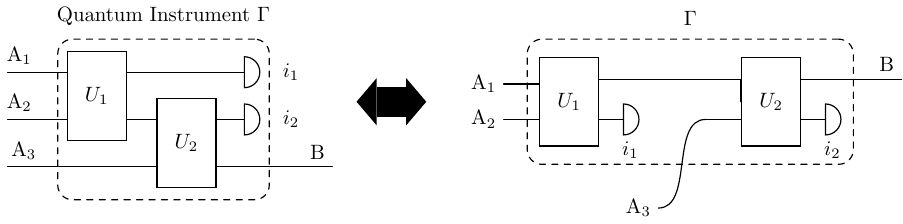}
    \caption{}\label{fig:example_no_delay}
  \end{subfigure}
  \caption{Two examples of reducing the number of qubits required to execute quantum instruments.
  Here, $\Lambda := \{\Lambda_{(i_1, i_2)}\}_{(i_1,i_2)}$ and $\Gamma := \{\Gamma_{(i_1, i_2)}\}_{(i_1,i_2)}$ are quantum instruments, and $U_1$ and $U_2$ represent two-qubit unitary operations.
  In each case, the quantum instrument shown on the left-hand side can be implemented with two qubits, as depicted on the right.
  (a) The quantum instrument $\Lambda$ has two input qubits $\mathrm{A}_1$ and $\mathrm{A}_2$, and employs one auxiliary qubit in its Stinespring dilation.
  The mid-circuit measurement is performed for outcome $i_1$, and the measured qubit is reused for the initialization of the auxiliary qubit.
  (b) The quantum instrument $\Gamma$ has three input qubits $\mathrm{A}_1$, $\mathrm{A}_2$, and $\mathrm{A}_3$. The input state of $\mathrm{A}_3$ is prepared after the circuit execution starts, and loaded after the mid-circuit measurement for outcome $i_1$.}\label{fig:example_space_reduction}
\end{figure}

Various studies have been conducted to investigate space requirements of quantum channels and quantum instruments.
For POVMs, it has been shown that a single auxiliary qubit suffices to implement them without delayed inputs~\cite{Lloyd2001engineeringquantumdynamics,Andersson2008binarysearchtrees,Ivashkov2024povmswithdynamiccircuits}.
Similar techniques have been applied to quantum channels, yielding analogous results~\cite{Shen2017quantumchannelconstruction}.
There is also a study that reduces the space requirements for POVMs not by allowing mid-circuit measurements, but instead by tolerating probabilistic success or depolarizing noise~\cite{kotowski2025prettygoodsimulationquantummeasurements}.
From a resource-theoretic framework, convertibility among families of instruments that does not require additional space has been analyzed~\cite{Ji2024}.
Several works focus on the space requirements of specific quantum algorithms, such as Shor's factoring algorithm~\cite{Griffiths1996, Parker2000, beauregard2003, zalka2006, haner2017, Gidney2021howtofactorbit, chevignard2025, gidney2025factor2048bitrsa}, entanglement distillation protocols~\cite{gidney2023entanglementpurification}, magic state distillation protocols~\cite{Litinski2019gameofsurfacecodes}, and reversible logic synthesis~\cite{bhattacharjee2019}.
These studies employ techniques tailored to each algorithm and demonstrate implementations that use less space than previous work.
In addition, a compilation method has been proposed that reduces space by exploiting circuit connectivity to identify opportunities for mid-circuit measurements~\cite{DeCross2023}.
However, these lines of work leave a fundamental open question:
when delayed inputs are allowed, how can we characterize, in general, the number of qubits required to implement a given quantum instrument?

In this work, we answer that question using signaling conditions, which characterize the causal relations between the input and output of a quantum instrument.
For quantum instruments composed of a unitary operation and projective measurements, we show that certain signaling conditions provide upper and lower bounds on the number of qubits required to implement the instrument with the aid of delayed inputs.
We also apply these results to entanglement distillation protocols based on stabilizer codes, showing that the upper and lower bounds coincide in these cases. For several well-known stabilizer codes, we compute the optimal number of qubits for implementing the instruments used in the corresponding entanglement distillation protocols.

The remainder of this paper is organized as follows.
\Cref{sec:notation} introduces the basic notation.
\Cref{sec:def_space} defines classes of quantum instruments implementable under space constraints.
\Cref{sec:tools} presents analytical tools for studying the space requirements of quantum instruments:
\Cref{sec:pp_inst} covers the composability of quantum instruments and
\Cref{sec:signaling} covers the outcome no-signaling condition.
\Cref{sec:qubit_reduction} presents our main results: lower and upper bounds on the number of qubits required to implement a given quantum instrument under space constraints.
Finally, \Cref{sec:ent_distill} applies our results to entanglement distillation protocols.

\section{Notation and Preliminaries}\label{sec:notation}

We use $\mathbb{N}=\{0,1,2,\ldots\}, \mathbb{Z}_{>0}=\{1,2,\ldots\}$, and $[n]=\{1,2,\ldots,n\}$ for $n\in\mathbb{Z}_{>0}$.
For binary strings $u,v\in\{0,1\}^\ast$, write $u\concat v$ for concatenation.
Hilbert spaces are denoted by $\mathcal{H}$. For a Hilbert space $\mathcal{H}$, let $\mathcal{L}(\mathcal{H})$ denote the space of linear operators on $\mathcal{H}$.

Quantum systems are described by Hilbert spaces; in this work we focus on qubit systems, i.e., $\mathcal{H} \cong (\mathbb{C}^2)^{\otimes n}$ for some $n \in \mathbb{N}$.
Quantum states are described by density operators $\rho \in \mathcal{L}(\mathcal{H})$ that are positive semidefinite with unit trace.
A quantum channel (a deterministic transformation of quantum states) is described by a completely positive, trace-preserving (CPTP) map $\mathcal{E}$.
A quantum instrument (a probabilistic transformation) is described by a set $\{\Lambda_k\}_{k \in \Outk}$ of quantum operations (completely positive, trace-nonincreasing maps) such that $\sum_{k \in \Outk} \Lambda_k$ is trace-preserving. Throughout, we take the outcome set to be $\Outk=\{0,1\}^T$ for some $T \in \mathbb{N}$.
A positive operator-valued measure (POVM) is a special case of a quantum instrument that has only classical outcomes, described by a set $\{E_k\}_{k \in \Outk}$ of positive semidefinite operators satisfying $\sum_{k \in \Outk} E_k = \mathbb{I}$.
A projective measurement is a POVM whose elements are projectors $\{P_k\}_k$ with $P_k^2 = P_k = P_k^\dagger$ and $\sum_k P_k = \mathbb{I}$.

\section{Definitions of Quantum Instruments Implementable under Space Constraints}\label{sec:def_space}
To determine whether a given quantum instrument is implementable under space constraints, we must first formalize the notion of space-constrained implementability, namely, by defining the class of instruments implementable under space constraints.
In this section, we provide two definitions of space-constrained implementable instruments: one for the setting that allows delayed inputs and one for the setting that does not, which are previewed in \Cref{sec:intro}.

To define space-constrained implementable instruments, it is not sufficient to consider the system size involved in its Stinespring dilation; one must examine the decomposability of a circuit-level compilation.
For instance, although the quantum instruments $\Lambda$ and $\Gamma$ in \Cref{fig:example_space_reduction} employ three qubits in their Stinespring dilations, they can be implemented with two qubits by decomposing into a sequence of operations, each of which uses only two qubits.
These observations suggest that space-constrained implementable instruments should be formalized in terms of whether the instrument can be executed as a sequential composition of building-block operations.
Accordingly, in what follows, we first specify the elementary operations admissible under a given space constraint, and then define space-constrained implementable quantum instruments as any instrument obtainable as their composition.

\subsection{Definition for the Setting without Delayed Inputs}\label{sec:def_space_wodi}
In this section, we examine the setting where the entire input state must be present at the beginning; the measured qubits are therefore always reinitialized to a fixed state such as $\ket{0}$.

In this study, we assume the following elementary operations can be performed under an $m$-qubit space constraint in the setting without delayed inputs:
\begin{assumption}[Elementary Operation Set (without Delayed Inputs)]\label{assm:ops_without_delayed}
  Fix the number of the available qubits $m \in \mathbb{N}$. We assume that the following operations can be performed under an $m$-qubit space constraint in the setting without delayed inputs:
  \begin{enumerate}[label=(\alph*)]
    \item Unitary operation on the $m$ qubits that depends on the classical value available at that time.
    \item Computational basis measurements on a subset of the $m$ qubits. The choice of measured qubits can depend on the classical value available at that time.
    \item  Reset a subset of the $m$ qubits to $\ket{0}$s. The choice of reset qubits can depend on the classical value available at that time.
    \item Classical processing on the classical value available at that time.
  \end{enumerate}
\end{assumption}

The operations in \Cref{assm:ops_without_delayed} are formally expressed as follows.
Let $\mathcal H_{\mathrm m} \cong (\mathbb C^2)^{\otimes m}$ denote the system of the $m$ available qubits.
The system before and after each operation is expressed by a set $\{\rho_{k}\}_{k \in \Outk}$, where $\rho_{k} \in \mathcal{L}(\mathcal{H}_{\mathrm{m}})$ is the unnormalized state when the classical value $k$ is obtained,
and the set $\{\rho_{k}\}_{k \in \Outk}$ is then updated as follows, according to the rule for each operation:

\smallskip
\begin{minipage}[t]{0.65\columnwidth}
  \vspace{0pt}
  \noindent\textbf{Unitary operation:} Let $U_{k} \in \mathcal{L}(\mathcal H_{\mathrm{m}})$ be the unitary operator applied when the classical value is $k$. Then,
  \begin{align}
  \{ \rho_{k} \}_{k \in \Outk} \mapsto \{ U_{k} \,\rho_{k}\, U_{k}^\dagger \}_{k \in \Outk}.
  \end{align}
  \vfill
\end{minipage}
\hfill
\begin{minipage}[t]{0.30\columnwidth}
  \vspace{3pt}
  \centering
  \includegraphics[width=0.8\linewidth]{./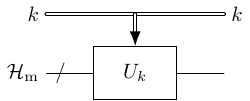}
  \captionof{figure}{Unitary operation.}
  \label{fig:elementary_u}
\end{minipage}

\vspace{0.5cm}
  \begin{minipage}[t]{0.65\columnwidth}
    \noindent\textbf{Computational basis measurement:} Let $S : = S(k) $ be a subset of the $m$ available qubits measured in the computational basis. Then,
    \begin{align}
    \{ \rho_{k} \}_{k \in \Outk} \mapsto \{ M_x \,\rho_{k}\, M_x^\dagger \}_{k x \in \Outk \times \{0,1\}^{|S|}},
    \end{align}
    where $M_x := \ketbra{0}{x}_S \otimes \mathbb I$ is the measurement operator corresponding to the outcome $x \in \{0,1\}^{|S|}$, and $k x$ denotes the concatenation of the binary strings $k$ and $x$, and $\Outk \times \{0,1\}^{|S|} := \{ k x : k \in \Outk, x \in \{0,1\}^{|S|} \}$.
  \end{minipage}
  \hfill
  \begin{minipage}[t]{0.30\columnwidth}
    \vspace{10pt}
    \centering
    \includegraphics[width=0.95\linewidth]{./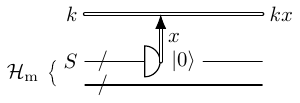}
    \captionof{figure}{Computational basis measurement.}
    \label{fig:elementary_m}
  \end{minipage}

\vspace{0.5cm}
\begin{minipage}[t]{0.65\columnwidth}
  \vspace{0pt}
  \noindent\textbf{State Reset:} Let $S : = S(k) $ be a subset of the $m$ available qubits to be reset to $\ket{0}$. Then,
  \begin{align}
  \{ \rho_{k} \}_{k \in \Outk} \mapsto \qty{ \sum_{x \in \{0,1\}^{|S|}} M_x \rho_{k} M_x^\dagger }_{k \in \Outk}.
  \end{align}
  Since the state-reset operation can be written as a computational basis measurement followed by a classical processing that forgets the measurement outcome, we may omit the state-reset operation from the set of elementary operations without loss of generality in what follows.
\end{minipage}
\hfill
\begin{minipage}[t]{0.30\columnwidth}
  \vspace{20pt}
  \centering
  \includegraphics[width=0.95\linewidth]{./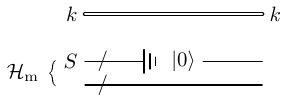}
  \captionof{figure}{State reset.}
  \label{fig:elementary_r}
\end{minipage}

\vspace{0.5cm}
\begin{minipage}[t]{0.65\columnwidth}
  \vspace{0pt}
  \noindent\textbf{Classical processing:}
  Let $f: \Outk \to \Outk'$ be the function used to update the classical value.
  Then,
  \begin{align}
  \{ \rho_{k} \}_{k \in \Outk} \mapsto \qty{ \sum_{k \in f^{-1}(k')} \,\rho_{k} }_{k' \in \Outk'},
  \end{align}
  where $f^{-1}(k') := \{ k \in \Outk : f(k) = k' \}$ is the preimage of $k'$ by $f$, so $f$ may not be injective.
  Here, the updated state is given by the sum of original $\rho_k$ over all $k$ that could have been mapped to $k'$. This means that the classical value $k$ is updated to $k' = f(k)$, and the subsequent operations have access only to $k'$ and do not know which $k$ was mapped to $k'$.
\end{minipage}
\hfill
\begin{minipage}[t]{0.30\columnwidth}
  \vspace{10pt}
  \centering
  \includegraphics[width=0.95\linewidth]{./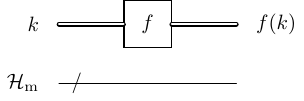}
  \captionof{figure}{Classical processing.}
  \label{fig:elementary_c}
\end{minipage}

\bigskip
These transformations can be uniformly described using quantum instrument formalism:
Let $\{ \rho_{k} \}_{k \in \Outk}, \{ \sigma_{k'} \}_{k' \in \Outk'}$ be the sets of unnormalized states before and after the transformations, respectively, and define a quantum instrument $\{ \Phi_{k'\mid k} \}_{k' \in \Outk'}$ for each $k \in \Outk$ by
\begin{align}
\Phi_{k' \mid k}(\rho)
:=
\begin{cases}
\delta_{k', k} U_{k} \,\rho\, U_{k}^\dagger
& \text{(Unitary operation),} \\
\sum_{x \in \{0,1\}^{|S|}} \delta_{k', k \concat x} \,M_x \,\rho\, M_x^\dagger
& \text{(Computational basis measurement),} \\
\delta_{k', f(k)} \, \rho
& \text{(Classical processing)}.
\end{cases}
\label{eq:elementary_instr_def}
\end{align}
Then, the updated states $\sigma_{k'}$ can be written as
\begin{align}
\sigma_{k'} = \sum_{k \in \Outk} \Phi_{k' \mid k}(\rho_{k})
\qquad \forall \, k' \in \Outk'.
\label{eq:elementary_instr_update}
\end{align}
The equality can be verified by substituting \Cref{eq:elementary_instr_def} into \Cref{eq:elementary_instr_update}, and comparing it with the description of each operation above.

Employing the notation above, we now give a precise definition of the space-constrained implementable quantum instruments as compositions of the operations specified in \Cref{assm:ops_without_delayed}.
\begin{definition}[$m$-qubit Implementable Instruments (without Delayed Inputs)]\label{def:space_instr_wo_delayed}
  Fix the number of the available qubits $m \in \mathbb{N}$, and let $\mathcal H_{\mathrm m} \cong (\mathbb C^2)^{\otimes m}$ denote the system of the available qubits.
  Let
  $
  \{\Lambda_k : \mathcal{L} (\mathcal H_{\mathrm{in}}) \to \mathcal{L}(\mathcal H_{\mathrm{out}}) \}_{k \in \Outk}
  $
  be a quantum instrument where $\mathcal H_{\mathrm{in}} \cong (\mathbb C^2)^{\otimes n_\mathrm{in}}, \mathcal H_{\mathrm{out}} \cong (\mathbb C^2)^{\otimes n_\mathrm{out}}$ for $n_{\mathrm{in}},n_{\mathrm{out}} \in \{0,1,\ldots,m\}$.

  The instrument $\{\Lambda_k \}_{k \in \Outk}$ is $m$-qubit implementable (without delayed inputs)
  if each $\Lambda_k$ can be written as follows, which is also illustrated in \Cref{fig:def_m_space_instr_wodi}:
  \begin{multline}
  \Lambda_{k}(\rho)
  =
  \Tr_{(\mathbb{C}^2)^{\otimes (m-n_\mathrm{out})}}
  \qty[
  \qty(
    \displaystyle\sum_{ k_{1} \in \Outk_{1}\; \cdots \; k_{T-1} \in \Outk_{T-1} }
    \Phi^{(T)}_{k\mid k_{T-1}} \circ \cdots \circ \Phi^{(1)}_{k_1\mid k_0}
  )
  \qty(\rho \otimes \ketbra{0}^{\otimes (m-n_{\mathrm{in}})})]\\
  \quad \forall \rho \in \mathcal{L}(\mathcal H_{\mathrm{in}}), \label{eq:def_space_instr_wo_delayed}
  \end{multline}
  where, for each round $t\in \{1,2, \cdots, T\}$ and each previously obtained classical value $k_{t-1} \in \Outk_{t-1}$, the quantum instrument $\{ \Phi^{(t)}_{k_t\mid k_{t-1}} : \mathcal{L} (\mathcal H_{\mathrm{m}}) \to \mathcal{L} (\mathcal H_{\mathrm{m}})\}_{k_t \in \Outk_t}$
  \footnote{At the first round we identify $\mathcal{H}_{\mathrm{m}}$ with $\mathcal{H}_{\mathrm{in}} \otimes (\mathbb{C}^2)^{\otimes (m-n_{\mathrm{in}})}$, and at the last round we identify $\mathcal{H}_{\mathrm{m}}$ with $\mathcal{H}_{\mathrm{out}} \otimes (\mathbb{C}^2)^{\otimes (m-n_{\mathrm{out}})}$} is given as one of the following:

    \smallskip
   \noindent\textbf{Unitary operation:} For all $ \rho \in \mathcal{L}(\mathcal{H}_{\mathrm{m}})$,
   \begin{align}
    \Phi^{(t)}_{k_t \mid k_{t-1}}(\rho)
    = \delta_{k_t, k_{t-1}} U_{k_{t-1}} \,\rho\, U_{k_{t-1}}^\dagger
    \qquad k_t \in \Outk_t := \Outk_{t-1}, \label{eq:unitary_op}
   \end{align}
   where $U_{k_{t-1}}$ is a unitary operator on $\mathcal H_{\mathrm m}$.

    \smallskip
   \noindent\textbf{Computational basis measurement:} For all $ \rho \in \mathcal{L}(\mathcal{H}_{\mathrm{m}})$,
    \begin{align}
    \Phi^{(t)}_{k_t \mid k_{t-1}}(\rho)
    = \sum_{x \in \{0,1\}^{|S|}} \delta_{k_t, k_{t-1} \concat x} \,
    M_x \,\rho\, M_x^\dagger
    \qquad k_t \in \Outk_t := \Outk_{t-1} \times \{0,1\}^{|S|}, \label{eq:measurement_op}
    \end{align}
    where $S : = S(k) $ is a subset of the $m$ available qubits and $M_x := \ketbra{0}{x}_S \otimes \mathbb I$.

    \smallskip
    \noindent\textbf{Classical processing:} For all $ \rho \in \mathcal{L}(\mathcal{H}_{\mathrm{m}})$,
    \begin{align}
      \Phi^{(t)}_{k_t \mid k_{t-1}}&(\rho)
      = \delta_{k_t, f(k_{t-1})} \, \rho \label{eq:classical_op}
     \qquad k_t \in \Outk_t,
    \end{align}
    where $f: \Outk_{t-1} \to \Outk_{t}$ is a function on classical values.

    Here, $k_0 \in \Outk_0 = \{0\}$ is a fixed initial classical value.

    \begin{figure}[H]
      \centering
      \includegraphics[width=0.9\linewidth]{./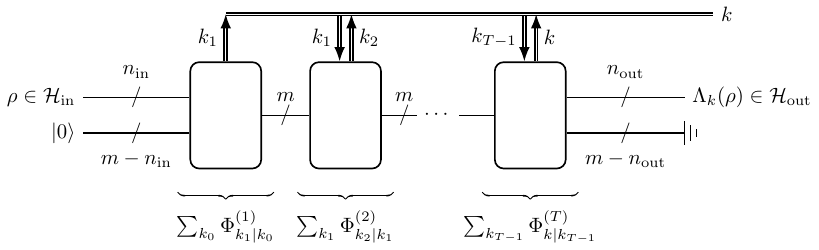}
      \caption{Definition of an $m$-qubit implementable instrument (without delayed inputs), as given in \Cref{eq:def_space_instr_wo_delayed}. A quantum instrument that admits this decomposition is called an $m$-qubit implementable instrument (without delayed inputs). Each instrument $\{\Phi^{(t)}_{k_t\mid k_{t-1}}\}_{k_t \in \Outk_t}$ is one of the following: (a) a unitary operation; (b) a computational basis measurement; (c) a classical processing.}
      \label{fig:def_m_space_instr_wodi}
    \end{figure}

\end{definition}

\paragraph{Notes.}
\begin{itemize}
  \item For notational convenience, \Cref{def:space_instr_wo_delayed} considers only the case where $n_{\mathrm{in}}, n_{\mathrm{out}} \leq m$. If necessary, we can additionally define that quantum instruments with either $n_{\mathrm{in}} > m$ or $n_{\mathrm{out}} > m$ are not $m$-qubit implementable (without delayed inputs), which is a reasonable definition since either the input state or the output state cannot be held in $m$ qubits in that case.
  \item In \Cref{def:space_instr_wo_delayed}, the final-round outcome set $\Outk_T$ may be larger than the original outcome set $\Outk$ in order to make $\{ \Phi^{(T)}_{k_T \mid k_{T-1}} \}_{k_T \in \Outk_T}$ trace-preserving on its input space. The probability of obtaining any additional outcome $k \in \Outk_T \setminus \Outk$ is required to be zero by the trace-preserving property of $\{\Lambda_k\}_{k \in \Outk}$.
\end{itemize}

\subsection{Definition for the Setting with Delayed Inputs}\label{sec:def_space_di}
In this section, we consider the setting where parts of the input state can be prepared in a delayed manner.
As in the previous section, we first define quantum instruments implementable under space constraints in this setting as compositions of elementary operations.
Here, we add the delayed-input loading operation to the elementary operations.
\begin{assumption}[Elementary Operation Set (with Delayed Inputs)]\label{assm:ops_with_delayed}
  Fix the number of the available qubits $m \in \mathbb{N}$. We assume that the following operations can be performed under an $m$-qubit space constraint in the setting with delayed inputs:
  \begin{enumerate}[label=(\alph*)]
    \item Unitary operation on the $m$ qubits that depends on the classical value available at that time. \label{item:unitary_op_wdi}
    \item Computational basis measurements on a subset of the $m$ qubits. The choice of measured qubits can depend on the classical value available at that time. \label{item:measurement_op_wdi}
    \item Reset a subset of the $m$ qubits to $\ket{0}$s. The choice of reset qubits can depend on the classical value available at that time. \label{item:reset_op_wdi}
    \item Classical processing on the classical value available at that time. \label{item:classical_proc_op_wdi}
    \item Input-loading operation: Measure in a subset of the $m$ qubits and loading part of the input state to the measured qubits. The choice of measured qubits can depend on the classical value available at that time. (See \Cref{fig:elementary_l} for an illustration.)\label{item:delayed_input_loading_op_wdi}
  \end{enumerate}
\end{assumption}

\begin{remark}\label{rem:indep_loading}
  We assume that each qubit's input may be prepared independently at any time after the circuit execution starts.
  For example, we do not consider restrictions in which input states in two particular subsystems must be prepared together, or in which input preparation must follow a specific order.
\end{remark}
\begin{remark}\label{rem:fixed_order_loading}
  We assume that the order in which input states are prepared is determined before the circuit execution starts instead of dynamically determined by the classical values during the execution. Allowing the dynamical ordering is left for future work.
\end{remark}

When giving formal descriptions of the operations in \Cref{assm:ops_with_delayed}, special consideration is needed for the input system.
Since the input state does not necessarily reside in the available qubit system $\mathcal{H}_\mathrm{m}$ when the circuit begins, we need to introduce a notional system $\mathcal{H}_\mathrm{in}$ to hold the input state, as illustrated in \Cref{fig:def_m_space_instr}.
This system $\mathcal{H}_\mathrm{in}$ is not counted toward the space cost.
Taking $\mathcal{H}_\mathrm{in}$ into account, the elementary operations in \Cref{assm:ops_with_delayed} can be described as a transformation on sets of unnormalized states $\{\rho_k\}_{k\in \Outk}$ in $\mathcal{H}_\mathrm{m} \otimes \mathcal{H}_\mathrm{in}$.
Operations~\ref{item:unitary_op_wdi} to~\ref{item:classical_proc_op_wdi} in \Cref{assm:ops_with_delayed} can be expressed similarly to those in the setting without delayed inputs, except for the identity operation on $\mathcal{H}_\mathrm{in}$.
The input-loading operation \ref{item:delayed_input_loading_op_wdi} is described as follows:

\begin{minipage}[t]{0.65\columnwidth}
  \vspace{0pt}
  Let $S : = S(k) $ be a subset of the $m$ available qubits, and  $J$ be a subset of the unloaded input qubits satisfying $\dim \mathcal{H}_J = \dim \mathcal{H}_S$. The set $\{\rho_k\}_{k \in \Outk}$ is then updated as
  \begin{align}
    \{ \rho_k \}_{k \in \Outk}
    \mapsto
    \qty{ \qty(\bra{x}_S \otimes \mathbb{I}) \,\rho_k\, \qty(\ket{x}_S \otimes \mathbb{I}) }_{kx \in \Outk \times \{0,1\}^{|S|}} \,\, ,
  \end{align}
  and, thereafter, the system labels are updated to
  \begin{align}
    \mathcal{H}_\mathrm{m} := \mathcal{H}_{S^{\mathrm{c}}} \otimes \mathcal{H}_{J}, \qquad
    \mathcal{H}_\mathrm{in} := \mathcal{H}_{J^{\mathrm{c}}}, \label{eq:label_update_wdi}
  \end{align}
  where $S^{\mathrm{c}}$ is the complement of $S$ in the available qubits, and $J^{\mathrm{c}}$ is the complement of $J$ in the unloaded input qubits.
  Note that $S$ can depend on the classical value $k$, whereas $J$ is fixed before the circuit execution starts, as mentioned in \Cref{rem:fixed_order_loading}.
\end{minipage}
\hfill
\begin{minipage}[t]{0.30\columnwidth}
  \vspace{6pt}
  \centering
  \includegraphics[width=0.95\linewidth]{./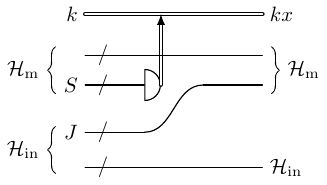}
  \captionof{figure}{Input-loading operation.}
  \label{fig:elementary_l}
\end{minipage}

\medskip
Omitting the state-reset operation \ref{item:reset_op_wdi} as in the setting without delayed inputs, we now give a definition of space-constrained implementable instruments in the setting with delayed inputs as a composition of the elementary operations in \Cref{assm:ops_with_delayed}.

\begin{definition}[$m$-Qubit Implementable Instruments (with Delayed Inputs)]\label{def:m_space_instr_wdi}
  Fix $m \in \mathbb{N}$, and let $\mathcal H_{\mathrm m} \cong (\mathbb C^2)^{\otimes m}$.
  Let
  $
  \{\Lambda_k : \mathcal{L} (\mathcal H_{\mathrm{in}}) \to \mathcal{L}(\mathcal H_{\mathrm{out}}) \}_{k \in \Outk}
  $
  be a quantum instrument where $\mathcal H_{\mathrm{in}} \cong (\mathbb C^2)^{\otimes n_{\mathrm{in}}}$ and
  $\mathcal H_{\mathrm{out}} \cong (\mathbb C^2)^{\otimes n_{\mathrm{out}}}$ for $n_{\mathrm{in}},n_{\mathrm{out}}\in \mathbb{Z}_{\geq 0}$.

  The quantum instrument $\{\Lambda_k \}_{k \in \Outk}$ is $m$-qubit implementable (with delayed inputs)
  if each $\Lambda_k$ can be written as follows, which is also illustrated in \Cref{fig:def_m_space_instr}:
  \begin{multline}
  \Lambda_{k}(\rho_{\mathrm{in}})
  =
  \Tr_{(\mathbb{C}^2)^{\otimes (m-n_\mathrm{out})}}
  \qty[
  \qty(
    \sum_{k_1 \in \Outk_1, \cdots, k_{T-1} \in \Outk_{T-1}}
    \Phi^{(T)}_{k\mid k_{T-1}}
    \circ\cdots\circ
    \Phi^{(1)}_{k_1\mid k_0}
  )
  \qty( \ketbra{0}^{\otimes m} \otimes \rho_{\mathrm{in}})
  ]\\
  \quad \forall \rho_{\mathrm{in}} \in \mathcal{L}(\mathcal H_{\mathrm{in}}),
  \label{eq:def_space_instr_delayed}
  \end{multline}
  where, for each round $t\in \{1,2, \cdots, T\}$ and each previously obtained classical value $k_{t-1} \in \Outk_{t-1}$, the quantum instrument $\{ \Phi^{(t)}_{k_t\mid k_{t-1}} : \mathcal{L} (\mathcal H_{\mathrm{m}} \otimes \mathcal H_{\mathrm{in}}) \to \mathcal{L} (\mathcal H_{\mathrm{m}} \otimes \mathcal H_{\mathrm{in}})\}_{k_t \in \Outk_{t}}$ is one of the following:

  \smallskip
  \noindent\textbf{Unitary operation.} For all $ \rho \in \mathcal{L}(\mathcal{H}_{\mathrm{m}} \otimes \mathcal{H}_{\mathrm{in}})$,
  \begin{align}
    \Phi^{(t)}_{k_t \mid k_{t-1}}(\rho)
    = \delta_{k_t, k_{t-1}} (U_{k_{t-1}} \otimes \mathbb{I}_\mathrm{in})\,\rho\, (U_{k_{t-1}} \otimes \mathbb{I}_\mathrm{in})^\dagger
    \qquad \forall \, k_t \in \Outk_t := \Outk_{t-1},
    \label{eq:def_elementary_u_wdi}
  \end{align}
  where $U_{k_{t-1}}$ is a unitary operator on $\mathcal H_{\mathrm m}$.

  \smallskip
  \noindent\textbf{Computational basis measurement.} For all $ \rho \in \mathcal{L}(\mathcal{H}_{\mathrm{m}} \otimes \mathcal{H}_{\mathrm{in}})$,
  \begin{align}
    \Phi^{(t)}_{k_t \mid k_{t-1}}(\rho)
    = \sum_{x \in \{0,1\}^{|S|}} \delta_{k_t, k_{t-1} \concat x} \,
    (M_x \otimes \mathbb{I}_{\mathrm{in}}) \,\rho\, (M_x \otimes \mathbb{I}_{\mathrm{in}})^\dagger
    \qquad \forall \, k_t \in \Outk_t := \Outk_{t-1} \times \{0,1\}^{|S|},
    \label{eq:def_elementary_m_wdi}
  \end{align}
  where $S : = S(k_{t-1}) $ is a subset of the $m$ available qubits and $M_x := \ketbra{0}{x}_S \otimes \mathbb I$.

  \smallskip
  \noindent\textbf{Classical processing.} For all $ \rho \in \mathcal{L}(\mathcal{H}_{\mathrm{m}} \otimes \mathcal{H}_{\mathrm{in}})$,
  \begin{align}
    \Phi^{(t)}_{k_t \mid k_{t-1}}&(\rho)
    = \delta_{k_t, f(k_{t-1})} \, \rho
    \qquad \forall \, k_t \in \Outk_t,
    \label{eq:def_elementary_c_wdi}
  \end{align}
  where $f: \Outk_{t-1} \to \Outk_t$ is a function on classical values.

  \smallskip
  \noindent\textbf{Input-loading operation.} Let $S : = S(k_{t-1}) $ be a subset of the $m$ available qubits, and $J$ be a subset of the unloaded input qubits satisfying $\dim \mathcal{H}_J = \dim \mathcal{H}_S$. For all $ \rho \in \mathcal{L}(\mathcal{H}_{\mathrm{m}} \otimes \mathcal{H}_{\mathrm{in}})$,
  \begin{align}
    \Phi^{(t)}_{k_t \mid k_{t-1}}(\rho)
    =
    \sum_{x \in \{0,1\}^{|S|}}
    \delta_{k_t, k_{t-1} x}
    \qty(\bra{x}_S \otimes \mathbb{I}) \,\rho\, \qty(\ket{x}_S \otimes \mathbb{I})
    \quad
    k_t \in \Outk_t := \Outk_{t-1} \times \{0,1\}^{|S|},
    \label{eq:def_elementary_l_wdi}
  \end{align}
  and redefine
  $
    \mathcal{H}_\mathrm{m} := \mathcal{H}_{S^{\mathrm{c}}} \otimes \mathcal{H}_{J},
  $ and
  $
    \mathcal{H}_\mathrm{in} := \mathcal{H}_{J^{\mathrm{c}}}
  $.

  Here, $k_0 \in \Outk_0 = \{0\}$ is a fixed initial classical value.
    \begin{figure}[H]
    \centering
    \includegraphics[width=0.8\linewidth]{./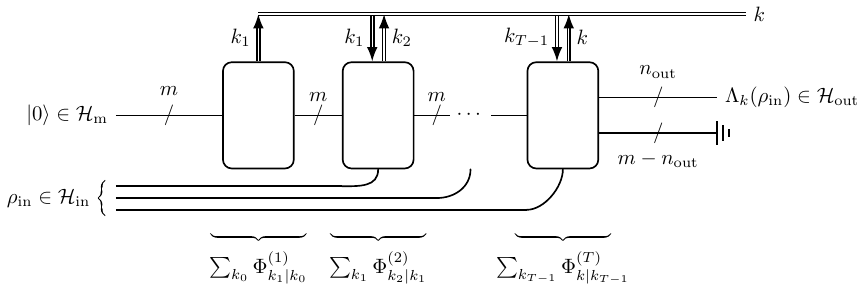}
    \caption{Definition of an $m$-qubit implementable instrument (with delayed inputs), as given in \Cref{eq:def_space_instr_delayed}. A quantum instrument that admits this decomposition is called an $m$-qubit implementable instrument (with delayed inputs). Here, $\mathcal{H}_\mathrm{m}$ is the system of the $m$ available qubits, and $\mathcal{H}_\mathrm{in}$ is the notional system that holds the input state. Each instrument $\{\Phi^{(t)}_{k_t\mid k_{t-1}}\}_{k_t \in \Outk_t}$ is one of the following: (a) a unitary operation; (b) a computational basis measurement; (c) a classical processing; (d) an input-loading operation.}
    \label{fig:def_m_space_instr}
    \end{figure}
\end{definition}

\begin{remark}\label{rem:stairs_wdi}
As an immediate consequence of \Cref{def:m_space_instr_wdi}, a quantum instrument $\Lambda := \{\Lambda_k\}_{k \in \Outk}$ is an $m$-qubit implementable instrument (with delayed inputs) if and only if $\Lambda$ can be expressed in the form illustrated in \Cref{fig:stairs_m_space_instr_wdi}, formally written as follows:
\begin{multline}
    \Lambda_k (\rho_\mathrm{in}) =
    \Tr_{(\mathbb{C}^2)^{\otimes (m-n_\mathrm{out})}}
    \qty[
      \qty(
      \sum_{k_0, \cdots, k_{{n_\mathrm{in}}} }
      \widetilde\Gamma^{({n_\mathrm{in} +1})}_{k\mid k_{{n_\mathrm{in}}}}
      \circ\cdots\circ
      \widetilde\Gamma^{(1)}_{k_1\mid k_0}
    )
    \qty(\ketbra{0}^{\otimes m} \otimes \rho_{\mathrm{in}} )
    ]\\
    \quad \forall \rho_{\mathrm{in}} \in \mathcal{L}(\mathcal H_{\mathrm{in}}), \label{eq:stairs_wdi_1}
  \end{multline}
  where each $\{\widetilde{\Gamma}^{(t)}_{k_t \mid k_{t-1}} \}_{k_t \in \Outk_{t}}$ is a quantum instrument that factors into the $m$-qubit implementable instrument (without delayed inputs) $\{ \Gamma^{(t)}_{k_t \mid k_{t-1}} \}_{k_t \in \Outk_{t}}$ tensored with the identity operation on $\text{A}_{t}, \cdots, \text{A}_{n_\mathrm{in}}$:
  \begin{align}
    \widetilde{\Gamma}^{(t)}_{k_t \mid k_{t-1}} :=
    \Gamma^{(t)}_{k_t \mid k_{t-1}} \otimes \operatorname{id}_{\mathrm{A}_{t}, \cdots, \mathrm{A}_{n_\mathrm{in}}}. \label{eq:stairs_wdi_2}
  \end{align}
  Note that, for $1 \leq t \leq n_\mathrm{in}$, the instrument $\{\Gamma^{(t)}_{k_t \mid k_{t-1}} \}_{k_t \in \Outk_{t}}$ has the input system $(\mathbb{C}^2)^{\otimes m}$ and the output system $(\mathbb{C}^2)^{\otimes (m-1)}$, whereas for $t = n_\mathrm{in}+1$, it has both input and output systems $(\mathbb{C}^2)^{\otimes m}$.

\begin{figure}[H]
\centering
\includegraphics[width=0.99\linewidth]{./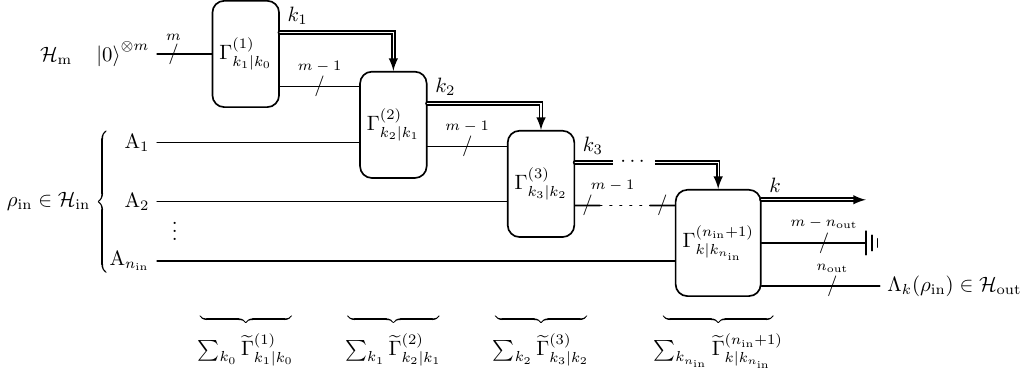}
\caption{An equivalent expression of an $m$-qubit implementable instrument (with delayed inputs). Each instrument $\{\Gamma^{(t)}_{k_t \mid k_{t-1}} \}_{k_t}$ is an $m$-qubit implementable instrument (without delayed inputs) from $\mathcal{L}((\mathbb{C}^2)^{\otimes m})$ to $\mathcal{L}((\mathbb{C}^2)^{\otimes (m-1)})$ for $1 \leq t \leq n_\mathrm{in}$, and from $\mathcal{L}((\mathbb{C}^2)^{\otimes m})$ to $\mathcal{L}((\mathbb{C}^2)^{\otimes m})$ for $t = n_\mathrm{in}+1$.}
\label{fig:stairs_m_space_instr_wdi}
\end{figure}
\end{remark}

\begin{proof}[The proof sketch of \Cref{rem:stairs_wdi}]
  The full proof is given in \Cref{app:pf_stairs_space_inst_wdi}. Here we provide only a sketch.

  Each input-loading operation can be decomposed into a sequence of input-loading operations, each of which loads one qubit, as illustrated in~\Cref{fig:elementary_l_decomp}. Thus, without loss of generality, we may assume that each input-loading operation in \Cref{def:m_space_instr_wdi} loads one qubit.

  \begin{figure}[H]
    \centering
    \includegraphics[width=0.7\linewidth]{./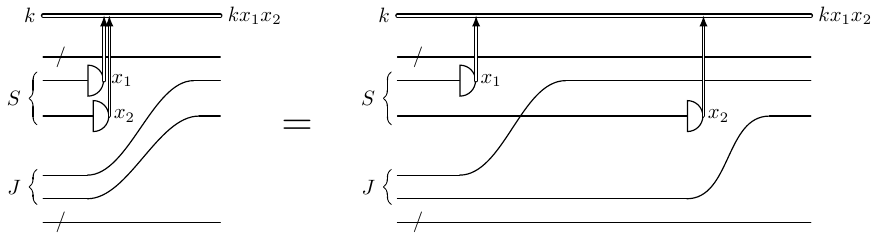}
    \caption{Decomposition of an input-loading operation into input-loading operations, each of which loads one qubit.}
    \label{fig:elementary_l_decomp}
  \end{figure}

  Assume that $\Lambda$ is an $m$-qubit implementable instrument (with delayed inputs) as defined in \Cref{def:m_space_instr_wdi}.
  Partition the sequence of elementary quantum instruments by each input-loading instrument.
  By grouping the elementary instruments between two consecutive input-loading instruments into a single $m$-qubit implementable instrument (without delayed inputs),
  we obtain the expression in \Cref{eq:stairs_wdi_1,eq:stairs_wdi_2}.

  The converse direction is straightforward: any $\Lambda$ of the form \Cref{eq:stairs_wdi_1,eq:stairs_wdi_2} can be implemented as a composition of $m$-qubit implementable instruments (without delayed inputs) and input-loading instruments.
\end{proof}

\begin{remark}
  \Cref{rem:stairs_wdi} holds only under the assumption that each input qubit can be prepared independently at any time after the circuit execution starts (\Cref{rem:indep_loading}).
  This is because the proof step that decomposes an input-loading operation into a sequence of single-qubit input-loading operations is valid only under that assumption.
\end{remark}

\section{Analytical Tools for Space Requirements of Quantum Instruments}\label{sec:tools}
In this section, we introduce analytical tools for studying the space requirements of quantum instruments.
The two main tools are the composability of instruments (\Cref{sec:pp_inst}) and the outcome no-signaling condition (\Cref{sec:signaling}).
Before turning to these tools, we first introduce two preliminary notions concerning quantum instruments, which are particularly convenient for analyzing space requirements and will recur in the lemmas and proofs below.

\vspace{2ex}
\noindent\textbf{POVM Associated with a Quantum Instrument.}

The POVM associated with a quantum instrument $\Lambda := \{ \Lambda_k: \mathcal{L}(\mathcal{H}_\mathrm{in}) \to \mathcal{L}(\mathcal{H}_\mathrm{out}) \}_k$ means the POVM that yields the same outcome probabilities as $\Lambda$ for any input state, formally defined as the POVM $\AssocPOVM{\Lambda} := \{\AssocPOVM{\Lambda}_k\}_{k}$ satisfying
\begin{align}
  \Tr[\AssocPOVM{\Lambda}_k \rho] \;=\; \Tr[\Lambda_k(\rho)]
  \qquad \forall\,\rho \in \mathcal{L}(\mathcal{H}_{\mathrm{in}}) .
\end{align}
In~\cite{Lepp2021}, this notion is called the induced POVM of the quantum instrument.
By definition, when $\Lambda$ admits a Kraus representation $\Lambda_k(\rho) = \sum_i K_{k,i}\,\rho\,K_{k,i}^\dagger$, its associated POVM can be written as $ \AssocPOVM{\Lambda}_k \;=\; \sum_i K_{k,i}^\dagger K_{k,i}\,.$

\vspace{2ex}
\noindent\textbf{Quantum Instruments with Kraus-Rank-1 CP maps.}

In the lemmas and proofs below, we often focus on instruments $\{\Lambda_k\}_k$ in which each quantum operation $\Lambda_k$ has Kraus rank~1. Such instruments have properties preferable for analysis of space requirements (e.g., \Cref{lem:output_dim_indecomp} below). In~\cite{Lepp2021}, such instruments are called indecomposable instruments.

\vspace{2ex}
For any fixed POVM $E=\{E_k\}_k$, the quantum instruments $\Lambda:=\{\Lambda_k\}_k$ whose associated POVM equals $E$ are not unique. Among these, we may further restrict attention to instruments for which each $\Lambda_k$ has Kraus rank~1; these are again non-unique. A canonical example is the L\"uders instrument, given by
\begin{align}
  \Lambda_k(\rho)\;=\;\sqrt{E_k}\,\rho\,\sqrt{E_k}\,.
\end{align}
Within these instruments, those with the smallest output dimension are especially useful for our analysis of space requirements, as shown in \Cref{lem:output_dim_indecomp} below.

\begin{lemma}\label{lem:output_dim_indecomp}
Let $E=\{E_k\}_{k \in \Outk}\subseteq \mathcal{L}(\mathcal{H}_{\mathrm{in}})$ be a POVM, and set
$r_\ast \coloneqq \max_{k \in \Outk}\operatorname{rank}(E_k)$.
\begin{enumerate}
  \item[(i)] (\emph{Existence}) For any integer $r\ge r_\ast$, there exists a quantum instrument
  $\Gamma=\{\Gamma_k:\mathcal{L}(\mathcal{H}_{\mathrm{in}})\to \mathcal{L}(\mathcal{H}_{\mathrm{out}})\}_{k \in \Outk}$
  with $\dim \mathcal{H}_{\mathrm{out}}=r$ such that each $\Gamma_k$ has Kraus rank~$1$ and the
  associated POVM of $\Gamma$ equals $E$.
  \item[(ii)] (\emph{Optimality}) For any quantum instrument $\Gamma$ whose associated POVM is $E$ and for which each
  $\Gamma_k$ has Kraus rank~$1$, one must have $\dim \mathcal{H}_{\mathrm{out}} \ge r_\ast$.
\end{enumerate}
\end{lemma}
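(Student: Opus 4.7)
The plan is to tackle (ii) first, since it will motivate the construction used for (i). If $\Gamma_k$ has Kraus rank $1$, then it admits a single Kraus operator $K_k : \mathcal{H}_{\mathrm{in}} \to \mathcal{H}_{\mathrm{out}}$ with $\Gamma_k(\rho) = K_k\rho K_k^\dagger$, and the associated POVM element takes the form $E_k = K_k^\dagger K_k$. Since $\operatorname{rank}(K_k^\dagger K_k) = \operatorname{rank}(K_k)$ and any linear map into $\mathcal{H}_{\mathrm{out}}$ has rank at most $\dim \mathcal{H}_{\mathrm{out}}$, we obtain $\operatorname{rank}(E_k) \le \dim \mathcal{H}_{\mathrm{out}}$ for every $k$, and maximizing over $k$ yields $\dim \mathcal{H}_{\mathrm{out}} \ge r_\ast$.

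For (i), I will give an explicit polar-decomposition-style construction. Fix an output space $\mathcal{H}_{\mathrm{out}}$ with $\dim \mathcal{H}_{\mathrm{out}} = r \ge r_\ast$. For each $k \in \Outk$, let $\mathcal{V}_k := \operatorname{supp}(E_k) \subseteq \mathcal{H}_{\mathrm{in}}$ and let $\Pi_k$ be the orthogonal projector onto $\mathcal{V}_k$, so that $\dim \mathcal{V}_k = \operatorname{rank}(E_k) \le r_\ast \le r$. Pick any isometry $V_k : \mathcal{V}_k \hookrightarrow \mathcal{H}_{\mathrm{out}}$ (which exists by the dimension count) and extend it to a linear map $V_k : \mathcal{H}_{\mathrm{in}} \to \mathcal{H}_{\mathrm{out}}$ by setting it to be zero on $\mathcal{V}_k^\perp$; by construction $V_k^\dagger V_k = \Pi_k$. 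Then define the Kraus operator $K_k := V_k \sqrt{E_k}$ and set $\Gamma_k(\rho) := K_k \rho K_k^\dagger$.

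It remains to verify three things. First, each $\Gamma_k$ has Kraus rank $1$ directly by construction. Second, the associated POVM of $\Gamma$ equals $E$, since
\begin{equation}
K_k^\dagger K_k \;=\; \sqrt{E_k}\, V_k^\dagger V_k \,\sqrt{E_k} \;=\; \sqrt{E_k}\, \Pi_k \,\sqrt{E_k} \;=\; E_k,
\end{equation}
where the last equality uses that the range of $\sqrt{E_k}$ is contained in $\mathcal{V}_k$, so $\Pi_k \sqrt{E_k} = \sqrt{E_k}$. Third, the map $\sum_k \Gamma_k$ is trace-preserving because $\sum_k K_k^\dagger K_k = \sum_k E_k = \mathbb{I}_{\mathrm{in}}$, using that $E$ is a POVM.

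None of these steps presents a serious obstacle; the only mild subtlety is in the second step, where one must keep track of the supports of $E_k$ and $\sqrt{E_k}$ to justify $\Pi_k \sqrt{E_k} = \sqrt{E_k}$. In essence, the lemma is a direct reformulation of the singular value decomposition applied to each allowed Kraus operator, with (ii) encoding the inequality $\operatorname{rank}(K_k) \le \dim \mathcal{H}_{\mathrm{out}}$ and (i) showing that this inequality is the only constraint.
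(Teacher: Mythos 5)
Your proposal is correct and follows essentially the same route as the paper: part (ii) via $\operatorname{rank}(E_k)=\operatorname{rank}(K_k)\le\dim\mathcal{H}_{\mathrm{out}}$ for the single Kraus operator, and part (i) via the Kraus operator $V_k\sqrt{E_k}$ with a partial isometry $V_k$ supported on $\operatorname{supp}(E_k)=\operatorname{Ran}(\sqrt{E_k})$, which is exactly the paper's construction $L_k=J_k\sqrt{E_k}$. The support bookkeeping you flag ($\Pi_k\sqrt{E_k}=\sqrt{E_k}$) is handled the same way in the paper, so there is no gap.
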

In particular, $r_\ast=\max_{k}\operatorname{rank}(E_k)$ is the smallest achievable output dimension among all quantum instruments whose associated POVM is $E$ and for which each quantum operation has Kraus rank~$1$.

\begin{proof}
  See \Cref{app:pf_lem_output_dim_indecomp}.
\end{proof}

\subsection{Composability of Quantum Instruments} \label{sec:pp_inst}
In \Cref{def:space_instr_wo_delayed,def:m_space_instr_wdi}, the space-constrained quantum instruments are defined via decomposability into the elementary instruments. In this respect, the analysis of space requirements of instruments can be viewed as a special case of decomposability analysis: whether a given quantum instrument can be expressed as a decomposition of other quantum instruments.
Accordingly, in this section, we introduce the notion of \emph{composability of quantum instruments} and collect its properties that are useful for our analysis of space requirements.

Concretely, the composability of quantum instruments formalizes when a given quantum instrument can be decomposed using another given quantum instrument, and is defined as follows:
\begin{definition}[Composability of Quantum Instruments (\cite{Lepp2021})]\label{def:comp_inst}
    Let $\{\Lambda_k : \mathcal{L}(\mathcal{H}_{\mathrm{in}}) \to \mathcal{L}(\mathcal{H}_{\mathrm{out}}) \}_{k \in \Outk}$ and $\{\Gamma_l : \mathcal{L}(\mathcal{H}_{\mathrm{in}})\to \mathcal{L}(\mathcal{H}_{\mathrm{mid}}) \}_{l \in \Outl}$ be quantum instruments.
    We say that $\{\Lambda_k\}_k$ is composable from $\{\Gamma_l\}_l$ and write
    $\{\Gamma_l\}_l \rcircarrow \{\Lambda_k\}_k$,
    if there exists, for each $l \in \Outl$, a quantum instrument $\{\Theta_{k \mid l}:\mathcal{L}(\mathcal{H}_{\mathrm{mid}})\to\mathcal{L}(\mathcal{H}_{\mathrm{out}})\}_{k \in \widetilde{\Outk}}$ such that,
    \begin{align}
    \Lambda_k(\rho)
    =\sum_{l \in \Outl}\Theta_{k \mid l} \circ \Gamma_l(\rho)
    \qquad \forall\rho \in \mathcal{L}(\mathcal{H}_{\mathrm{in}}) \text{ and } k \in \Outk.
    \end{align}
\end{definition}
\begin{figure}[h]
  \centering
  \includegraphics[width=0.5\textwidth]{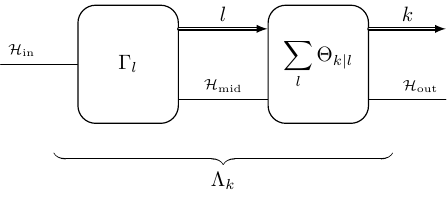}
  \caption{Composability of quantum instruments defined in \Cref{def:comp_inst}. We say that $\{\Lambda_k\}_{k \in \Outk}$ is composable from $\{\Gamma_l\}_{l \in \Outl}$ if there exists a quantum instrument $\{\Theta_{k \mid l}\}_{k \in \widetilde{\Outk}}$ for each $l \in \Outl$ such that $\Lambda_k = \sum_{l \in \Outl}\Theta_{k \mid l} \circ \Gamma_l$ for all $k \in \Outk$. This means that, after performing $\{\Gamma_l\}_{l \in \Outl}$, we can implement $\{\Lambda_k\}_{k \in \Outk}$ by applying an additional quantum instrument $\{\Theta_{k \mid l}\}_{k \in \widetilde{\Outk}}$ depending on the outcome $l$.
  }
  \label{fig:def_postproc}
\end{figure}

The composability of quantum instruments is also referred to as the post-processing relation for quantum instruments in~\cite{Lepp2021}.
\Cref{fig:def_postproc} illustrates the definition of the composability of quantum instruments $\{\Lambda_k\}_{k \in \Outk}$ and $\{\Gamma_l\}_{l \in \Outl}$. In words, $\{\Lambda_k\}_{k \in \Outk}$ is composable from $\{\Gamma_l\}_{l \in \Outl}$ if, after performing $\{\Gamma_l\}_{l \in \Outl}$, we can implement $\{ \Lambda_k \}_{k \in \Outk}$ by applying an additional quantum instrument $\{\Theta_{k \mid l}\}_{k \in \widetilde{\Outk}}$ depending on the outcome $l$.

\begin{remark}\label{rem:enlarge_outcome_set}
  In \Cref{def:comp_inst}, the outcome set of $\{\Theta_{k\mid l}\}_{k \in \widetilde{\Outk}}$ may be larger than that of $\{\Lambda_k\}_{k \in \Outk}$. The extra outcomes $k\in \widetilde{\Outk} \setminus \Outk$ satisfy
  \begin{align}
  \Theta_{k\mid l}\circ \Gamma_l \;=\; 0 \qquad \forall k \in \widetilde{\Outk} \setminus \Outk, \label{eq:zero_outcomes}
  \end{align}
  for every $l \in \Outl$,
  so they never occur when performing after $\Gamma_l$; they are included only to ensure that $\sum_{k \in \widetilde{\Outk}}\Theta_{k\mid l}$ is trace-preserving on $\mathcal H_{\mathrm{mid}}$ for each fixed $l$.
  Indeed, \Cref{eq:zero_outcomes} is obtained as follows:
  \begin{align}
  \Tr[\rho]
  &= \Tr[\sum_{k \in \widetilde{\Outk}} \sum_{l \in \Outl} \; \Theta_{k\mid l}\circ\Gamma_{l}(\rho)] \qquad (\because \text{trace-preserving for } \{\Theta_{k\mid l}\}_{k \in \widetilde{\Outk}}, \{\Gamma_l\}_{l \in \Outl})\\
  &= \Tr[\sum_{k \in \Outk}\Lambda_k(\rho)]
     + \sum_{k \in \widetilde{\Outk} \setminus \Outk,\, l \in \Outl} \Tr[\Theta_{k\mid l} \circ \Gamma_{l}(\rho)] \\
  &= \Tr[\rho]
     + \sum_{k \in \widetilde{\Outk} \setminus \Outk,\, l \in \Outl}\Tr[\Theta_{k\mid l} \circ \Gamma_{l}(\rho)]
  \qquad (\because \text{trace-preserving for } \{\Lambda_k\}_{k \in \Outk}) ,
  \end{align}
  so each nonnegative summand must vanish:
  $\Tr[\Theta_{k\mid l} \circ \Gamma_l(\rho)]=0$ for every $k \in \widetilde{\Outk} \setminus \Outk$ and $l \in \Outl$.
\end{remark}

Furthermore, as a special case of \Cref{def:comp_inst}, the composability of POVMs can be simplified as follows.
Let $\Lambda_k (\rho) = \Tr[ E_k \rho]$ and $\Gamma_l(\rho) = \Tr[F_l \rho]$ for all $\rho \in \mathcal{L}(\mathcal{H})$, where $\{E_k\}_{k \in \Outk}$ and $\{F_l\}_{l \in \Outl}$ are POVMs on $\mathcal{H}$.
In this case, the input and output space of $\{\Theta_{k \mid l}\}_{k \in \widetilde{\Outk}}$ in \Cref{def:comp_inst} are both $\mathbb{C}$, so we may write $\Theta_{k \mid l}: x \mapsto \nu_{k,l} x$ where $\nu_{k,l}$ is a non-negative scalar $\nu_{k,l} \geq 0$ satisfying $\sum_{k \in \Outk} \nu_{k,l} = 1$ for each $l \in \Outl$.
Consequently, if $\{E_k\}_{k \in \Outk}$ is composable from $\{F_l\}_{l \in \Outl}$, then there exists a column-stochastic matrix $\nu = (\nu_{k,l})_{k \in \Outk, l \in \Outl}$ such that
\begin{align}
E_k = \sum_{l \in \Outl} \nu_{k,l} F_l \quad \forall k \in \Outk.
\end{align}
Here, a column-stochastic matrix is a matrix $\nu = (\nu_{k,l})_{k \in \Outk, l \in \Outl}$ that satisfies $\nu_{k,l} \geq 0$ for all $k \in \Outk, l \in \Outl$ and $\sum_{k \in \Outk} \nu_{k,l} = 1$ for all $l \in \Outl$.

When certain conditions are met, the composability of quantum instruments can be characterized via the composability of their associated POVMs.
First, prior work~\cite{Lepp2021} shows that the composability of two quantum instruments implies the composability of their associated POVMs in the opposite direction.

\begin{lemma}[Prop.~8 in \cite{Lepp2021}]\label{lem:nec_cond_pp_indecomp_instr}
Let $\Lambda := \{\Lambda_k\}_{k \in \Outk}$ and $\Gamma := \{\Gamma_l\}_{l \in \Outl}$ be quantum instruments where each $\Lambda_k$ has Kraus rank 1 for every $k \in \Outk$. Then,
\begin{align}
\Gamma \rcircarrow \Lambda
\quad\Longrightarrow\quad
 \AssocPOVM{\Gamma} \lcircarrow \AssocPOVM{\Lambda},
\end{align}
where $\AssocPOVM{\Lambda}$ and $\AssocPOVM{\Gamma}$ are the POVMs associated with $\Lambda$ and $\Gamma$, respectively.
\end{lemma}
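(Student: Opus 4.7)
The plan is to extract the claim from a Choi-matrix argument that exploits the Kraus-rank-$1$ hypothesis, and then to trace out the composition to descend from instruments to POVMs.

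First, I would unpack the composability assumption. By hypothesis, for each $l \in \Outl$ there is an instrument $\{\Theta_{k\mid l}\}_{k \in \widetilde{\Outk}}$ such that
\begin{align}
\Lambda_k \;=\; \sum_{l \in \Outl} \Theta_{k\mid l} \circ \Gamma_l \qquad \forall k \in \Outk.
\end{align}
Fix $k \in \Outk$. Since $\Lambda_k$ has Kraus rank $1$, write $\Lambda_k(\rho)=A_k\rho A_k^\dagger$; equivalently, the Choi matrix $J(\Lambda_k)$ has rank $1$ with range spanned by a single vector $\ket{A_k}$.

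The key step would be to observe that each summand $\Theta_{k\mid l}\circ\Gamma_l$ is itself completely positive and is majorized by $\Lambda_k$ in the sense that
\begin{align}
0 \;\preceq\; J(\Theta_{k\mid l}\circ\Gamma_l) \;\preceq\; J(\Lambda_k).
\end{align}
Since $J(\Lambda_k)$ is rank $1$, every positive operator dominated by it is a nonnegative multiple of it. Consequently there exist constants $c_{k,l}\ge 0$ with
\begin{align}
\Theta_{k\mid l}\circ\Gamma_l \;=\; c_{k,l}\,\Lambda_k \qquad \forall l \in \Outl.
\end{align}
Substituting back into the composability identity and using $\Lambda_k\neq 0$ (trivial case aside) yields $\sum_{l}c_{k,l}=1$ for every $k \in \Outk$.

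For the final step I would take the trace on both sides of the last identity, recalling the defining relation $\Tr[\Lambda_k(\rho)]=\Tr[\AssocPOVM{\Lambda}_k\rho]$ and $\Tr[\Gamma_l(\rho)]=\Tr[\AssocPOVM{\Gamma}_l\rho]$, and then sum over $k \in \Outk$. Using that $\{\Theta_{k\mid l}\}_k$ is trace-preserving for each fixed $l$, i.e.\ $\sum_{k}\AssocPOVM{\Theta}_{k\mid l}=\mathbb{I}$, one obtains
\begin{align}
\Tr[\AssocPOVM{\Gamma}_l\rho] \;=\; \sum_{k \in \Outk} c_{k,l}\,\Tr[\AssocPOVM{\Lambda}_k\rho] \qquad \forall\rho,
\end{align}
so $\AssocPOVM{\Gamma}_l=\sum_{k}c_{k,l}\AssocPOVM{\Lambda}_k$. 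Setting $\mu_{l,k}:=c_{k,l}$ gives a nonnegative matrix with $\sum_{l}\mu_{l,k}=\sum_{l}c_{k,l}=1$, i.e.\ the column-stochasticity condition required for the post-processing $\AssocPOVM{\Gamma}\lcircarrow\AssocPOVM{\Lambda}$.

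The main obstacle I anticipate is the Choi-matrix majorization step: making rigorous that ``CP and dominated in the sum by a rank-$1$ CP map'' forces each summand to be a scalar multiple of that rank-$1$ map. The clean way is to pass to Choi operators, note that $J(\Theta_{k\mid l}\circ\Gamma_l)$ is positive and $\sum_l J(\Theta_{k\mid l}\circ\Gamma_l)=J(\Lambda_k)$, so each term's range lies in the range of $J(\Lambda_k)=\ket{A_k}\!\bra{A_k}$, which is one-dimensional; hence each term equals $c_{k,l}\ket{A_k}\!\bra{A_k}$. Once this is in hand, the remainder of the argument is a bookkeeping computation with traces and outcome sums.
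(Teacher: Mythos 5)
Your argument is correct. Note that the paper itself gives no proof of this lemma — it simply defers to Prop.~8 of \cite{Lepp2021} — so what you have written is a valid self-contained reconstruction, and it follows the standard route for this kind of statement: the Kraus-rank-$1$ (indecomposability) hypothesis is used exactly where it must be, namely to force each CP summand $\Theta_{k\mid l}\circ\Gamma_l$ to be a nonnegative multiple $c_{k,l}\Lambda_k$ via positivity and rank-$1$ domination at the Choi level, after which the trace/summation bookkeeping yields $\AssocPOVM{\Gamma}_l=\sum_{k}c_{k,l}\AssocPOVM{\Lambda}_k$ with $\sum_l c_{k,l}=1$, i.e.\ $\AssocPOVM{\Gamma}\lcircarrow\AssocPOVM{\Lambda}$. (Without the rank-$1$ hypothesis the claim fails — e.g.\ coarse-graining $\Gamma$ to a single-outcome channel — so your placement of that hypothesis is the right one.)

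One small point to tighten: in the final trace step you sum the trace-preservation relation $\sum_{k\in\widetilde{\Outk}}\Theta_{k\mid l}$ over the \emph{enlarged} outcome set but quote the result only over $k\in\Outk$; you should note that the extra outcomes $k\in\widetilde{\Outk}\setminus\Outk$ satisfy $\Theta_{k\mid l}\circ\Gamma_l=0$, which is exactly the content of \Cref{rem:enlarge_outcome_set} (or follows from your own proportionality argument applied to the nonnegative leftover terms). With that citation the bookkeeping closes completely.
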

\begin{proof}
See Prop.~8 in \cite{Lepp2021}.
\end{proof}

We further show that, under additional Kraus-rank constraints and conditions on the associated POVMs, the above POVM-level composability is not only necessary but also sufficient.
This result is useful when we decompose instruments into other instruments based on composability relations of their associated POVMs.

\begin{lemma}\label{lem:pp_cond_indecomp_instr}
Let $\Lambda := \{\Lambda_k\}_{k \in \Outk}$ and $\Gamma := \{\Gamma_l\}_{l \in \Outl}$ be quantum instruments where each $\Lambda_k$ and each $\Gamma_l$ has Kraus rank 1 for every $k \in \Outk$ and $l \in \Outl$. Suppose further that the associated POVM $\AssocPOVM{\Lambda} $ is composable from some projective measurement. Then
\begin{align}
\Gamma \rcircarrow \Lambda
\quad\Longleftrightarrow\quad
 \AssocPOVM{\Gamma} \lcircarrow \AssocPOVM{\Lambda},
\end{align}
where $\AssocPOVM{\Lambda}$ and $\AssocPOVM{\Gamma}$ are the POVMs associated with $\Lambda$ and $\Gamma$, respectively.
\end{lemma}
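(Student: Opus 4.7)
The plan is to establish only the $(\Leftarrow)$ direction, since $(\Rightarrow)$ is \Cref{lem:nec_cond_pp_indecomp_instr}. First I would use the Kraus-rank-one assumption to fix single Kraus operators with $\Lambda_k(\rho)=K_k\rho K_k^\dagger$ and $\Gamma_l(\rho)=G_l\rho G_l^\dagger$, so that $\AssocPOVM{\Lambda}_k=K_k^\dagger K_k$ and $\AssocPOVM{\Gamma}_l=G_l^\dagger G_l$. The projective-measurement hypothesis supplies a projective $\{P_j\}_j$ from which $\AssocPOVM{\Lambda}$ is composable; refining each $P_j$ along an orthonormal basis of its range yields a rank-one projective measurement $\{\ketbra{p_j}\}_j$ that is still finer than $\AssocPOVM{\Lambda}$ by transitivity of composability. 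Without loss of generality I may therefore assume $\AssocPOVM{\Lambda}_k=\sum_j \mu_{k,j}\ketbra{p_j}$ with $\sum_k\mu_{k,j}=1$. Combining this with the hypothesis $\AssocPOVM{\Gamma}\lcircarrow\AssocPOVM{\Lambda}$, which supplies a column-stochastic $(\nu_{l,k})$ satisfying $\sum_l\nu_{l,k}=1$, I obtain $\AssocPOVM{\Gamma}_l=\sum_j\lambda_{l,j}\ketbra{p_j}$ with $\lambda_{l,j}:=\sum_k\nu_{l,k}\mu_{k,j}$.

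The crucial structural observation is that both $K_k^\dagger K_k$ and $G_l^\dagger G_l$ are diagonal in $\{\ket{p_j}\}$. This lets me introduce, for indices with positive eigenvalues, the unit vectors $\ket{\phi_{k,j}}:=\mu_{k,j}^{-1/2}K_k\ket{p_j}$ in $\mathcal{H}_{\mathrm{out}}$ and $\ket{\psi_{l,j}}:=\lambda_{l,j}^{-1/2}G_l\ket{p_j}$ in $\mathcal{H}_{\mathrm{mid}}$; a direct calculation from the diagonal forms above shows that $\{\ket{\phi_{k,j}}\}_j$ (for each fixed $k$) and $\{\ket{\psi_{l,j}}\}_j$ (for each fixed $l$) are orthonormal families. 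I would then define the candidate post-processing to be the Kraus-rank-one map $\Theta_{k\mid l}(\sigma):=A_{k,l}\sigma A_{k,l}^\dagger$, where $A_{k,l}$ is specified on $\mathrm{span}\{\ket{\psi_{l,j}}:\lambda_{l,j}>0\}$ by
\[A_{k,l}\ket{\psi_{l,j}}\;:=\;\sqrt{\nu_{l,k}\mu_{k,j}/\lambda_{l,j}}\,\ket{\phi_{k,j}}\]
and extended by zero on the orthogonal complement.

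The composition identity $\sum_l\Theta_{k\mid l}\circ\Gamma_l=\Lambda_k$ can then be verified on the matrix units $\ket{p_j}\bra{p_{j'}}$: using $A_{k,l}G_l\ket{p_j}=\sqrt{\nu_{l,k}\mu_{k,j}}\,\ket{\phi_{k,j}}$ and $\sum_l\nu_{l,k}=1$, the sum over $l$ collapses to $\sqrt{\mu_{k,j}\mu_{k,j'}}\,\ket{\phi_{k,j}}\bra{\phi_{k,j'}}$, which is exactly $K_k\ket{p_j}\bra{p_{j'}}K_k^\dagger$. A similar matrix-element calculation yields $\sum_k A_{k,l}^\dagger A_{k,l}\ket{\psi_{l,j}}=\lambda_{l,j}^{-1}\bigl(\sum_k\nu_{l,k}\mu_{k,j}\bigr)\ket{\psi_{l,j}}=\ket{\psi_{l,j}}$, so $\sum_k A_{k,l}^\dagger A_{k,l}$ already acts as the identity on the image of $G_l$. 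On the orthogonal complement of that image I would introduce one additional outcome $k^{\ast}\notin\Outk$ whose Kraus operator is an isometry there; by \Cref{rem:enlarge_outcome_set} this dummy outcome carries zero probability after $\Gamma_l$ and does not affect the composability identity.

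The main obstacle is the bookkeeping of the degenerate indices where some of $\mu_{k,j},\nu_{l,k},\lambda_{l,j}$ vanish, so that certain $\ket{\phi_{k,j}}$ or $\ket{\psi_{l,j}}$ are undefined. The formula for $A_{k,l}$ must then be read with the convention that any zero factor annihilates the expression, and one must verify that the identities $\sum_l\Theta_{k\mid l}\circ\Gamma_l=\Lambda_k$ and $\sum_k A_{k,l}^\dagger A_{k,l}=\mathbb{I}$ remain correct on both sides. This reduces to noting that $\lambda_{l,j}=0$ forces $\nu_{l,k}\mu_{k,j}=0$ for every $k$, so the omitted terms are precisely those that already vanish; once this case analysis is handled, the whole proof rests on the column-stochasticity $\sum_l\nu_{l,k}=1$ applied termwise.
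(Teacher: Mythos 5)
Your proof is correct and is essentially the paper's own construction in different clothing: refining the projective measurement to rank one, your vectors $\ket{\phi_{k,j}}$ and $\ket{\psi_{l,j}}$ are precisely the images $W_k\ket{p_j}$ and $V_l\ket{p_j}$ of the polar isometries the paper uses, your weights $\sqrt{\nu_{l,k}\mu_{k,j}/\lambda_{l,j}}$ reproduce its intermediate operators $X_{k\mid l}$, and both arguments close by the same column-stochasticity $\sum_l \nu_{l,k}=1$ together with zero-probability padding outcomes for trace preservation. One small caveat: the padding operator need not exist as a single isometry from $(\operatorname{Ran}G_l)^{\perp}$ into $\mathcal{H}_{\mathrm{out}}$ when $\dim\mathcal{H}_{\mathrm{out}}<\dim(\operatorname{Ran}G_l)^{\perp}$, but, as in the paper, any family $\{R_{k^\ast\mid l}\}$ with $\sum R_{k^\ast\mid l}^\dagger R_{k^\ast\mid l}=\mathbb{I}-\Pi_{\operatorname{Ran}G_l}$ works and leaves the composition identity untouched, since these outcomes occur with probability zero after $\Gamma_l$.
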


\begin{remark}
By construction in the proof in \Cref{app:pf_pp_cond_indecomp_instr}, the quantum instrument $\{\Theta_{k\mid l}\}_{k \in \widetilde{\Outk}}$ such that $\sum_l \Theta_{k\mid l} \circ \Gamma_l = \Lambda_k$ can be chosen so that each $\Theta_{k\mid l}$ has Kraus rank 1 for every $k \in \widetilde{\Outk}$ and $l \in \Outl$.
\end{remark}

\begin{proof}
  See \Cref{app:pf_pp_cond_indecomp_instr}.
\end{proof}

\subsection{No-Signaling Condition for Quantum Instruments}
\label{sec:signaling}

The no-signaling condition for quantum channels is known as a criterion that formalizes the causal relations between a channel's input and output subsystems~\cite{bruss2000, beckman2001, schumacher2004, Oreshkov2012, Lorenz2021, Apadula2024nosignalling}, and it has been used in various tasks such as distributed implementation of bipartite quantum channels~\cite{beckman2001, Eggeling2002} and channel discrimination~\cite{chiribella2012}. Formally, for a quantum channel $\mathcal{E} : \mathcal{L}(\mathcal{H}_\text{A} \otimes \mathcal{H}_\text{B}) \to \mathcal{L}(\mathcal{H}_\text{C} \otimes \mathcal{H}_\mathrm{D})$, the no-signaling condition from $\mathrm{B}$ to $\mathrm{C}$ is defined as the existence of a quantum channel $\mathcal{E}' : \mathcal{L}(\mathcal{H}_\text{A}) \to \mathcal{L}(\mathcal{H}_\text{C})$ such that
\begin{align}
\Tr_{\mathrm{D}}[\mathcal{E}(\rho)]
= \mathcal{E}'(\Tr_{\mathrm{B}}[\rho])
\qquad \forall \rho \in \mathcal{L}(\mathcal{H}_\text{A} \otimes \mathcal{H}_\text{B}).
\label{eq:def_ns}
\end{align}
In \Cref{eq:def_ns}, the left-hand side is the marginal output state on $\mathrm{C}$, while the right-hand side is computed only from the marginal input state on $\mathrm{A}$. Thus, the above no-signaling condition states that the input state on B does not affect the output state on C.

In our analysis of space requirements of quantum instruments, we introduce a no-signaling condition for quantum instruments that characterizes the causal relations between the quantum input and classical outcome of an instrument.
Such a no-signaling condition arises naturally in our analysis: In a space-constrained implementation, when a mid-circuit measurement yields outcome $k$ and a delayed input A is loaded thereafter, the classical outcome $k$ must not depend on the input state in A.
We call this no-signaling condition the \emph{outcome no-signaling condition}, and define it as follows:

\begin{definition}[Outcome No-Signaling Condition]\label{def:ons}
Let $\Lambda := \qty{\Lambda_k : \mathcal{L}(\mathcal{H}_\text{A} \otimes \mathcal{H}_\text{B}) \to \mathcal{L}(\mathcal{H}_\text{C})}_{k \in \Outk}$ be a quantum instrument. We say that $\Lambda$ satisfies the outcome no-signaling condition from $\text{B}$ and write $\text{B} \nrightarrow \mathrm{cl}$, if there exists a POVM $\{F_k \in \mathcal{L}(\mathcal{H}_\text{A})\}_{k \in \Outk}$ such that for all $k \in \Outk$,
\begin{align}
\Tr[\Lambda_k(\rho)]
= \Tr[F_k\,\Tr_\text{B}(\rho)]
\quad \forall \rho \in \mathcal{L}(\mathcal{H}_\text{A} \otimes \mathcal{H}_\text{B}).
\label{eq:def_ons}
\end{align}
\end{definition}

\begin{figure}[h]
  \centering
  \begin{subfigure}[t]{0.50\textwidth}
    \centering
    \includegraphics[width=\linewidth]{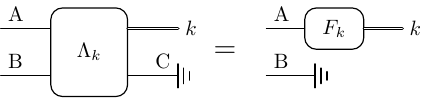}\vspace{0pt}
    \caption{}
    \label{fig:ons_def}
  \end{subfigure}
  \hspace{0.15\textwidth}
  \begin{subfigure}[t]{0.2\textwidth}
    \centering
    \includegraphics[width=\linewidth]{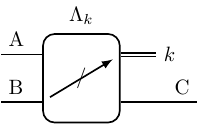}\vspace{0pt}
    \caption{}
    \label{fig:ons_meaning}
  \end{subfigure}
  \caption{ Overview of the outcome no-signaling condition $\text{B} \nrightarrow \mathrm{cl}$ defined in \Cref{def:ons}. (a) The defining equality (\Cref{eq:def_ons}). The left-hand side is the probability of obtaining outcome $k$ from a quantum instrument $\Lambda$, while the right-hand side is computed only from the marginal input state on $\text{A}$. (b) Interpretation of $\text{B} \nrightarrow \mathrm{cl}$. The classical outcome of the quantum instrument $\Lambda$ does not depend on the input state of subsystem B.}
  \label{fig:suff_qubit_reduction}
\end{figure}

With a slight abuse of notation, we say that a POVM $\{E_k\}_{k \in \Outk}$ satisfies the outcome no-signaling condition $\text{B} \nrightarrow \mathrm{cl}$ if there exists a POVM $\{F_k\}_{k \in \Outk}$ on $\mathcal{H}_\text{A}$ such that $\Tr[ E_k \rho] = \Tr[ F_k \,\Tr_\text{B}(\rho)]$ for all $k \in \Outk$ and $\rho \in \mathcal{L}(\mathcal{H}_\text{A} \otimes \mathcal{H}_\text{B})$.

\Cref{fig:ons_def} illustrates the definition of the outcome no-signaling condition. The left-hand side of \Cref{eq:def_ons} gives the probability of obtaining each outcome $k$ from $\Lambda$, whereas the right-hand side depends only on the marginal input state in subsystem A. Hence, the classical outcome of the instrument $\Lambda$ does not depend on the input state of subsystem B, as shown in \Cref{fig:ons_meaning}.

\begin{remark}\label{rem:ons_povm}
The outcome no-signaling condition for $\Lambda$ is equivalent to the outcome no-signaling condition for the POVM associated with $\Lambda$. Formally, for $\Lambda := \{\Lambda_k : \mathcal{L}(\mathcal{H}_\text{A} \otimes \mathcal{H}_\text{B}) \to \mathcal{L}(\mathcal{H}_\text{C})\}_{k \in \Outk}$, we have
\begin{align}
\Lambda \text{ satisfies B} \nrightarrow \mathrm{cl}
\quad\Longleftrightarrow\quad
\AssocPOVM{\Lambda} \text{ satisfies B} \nrightarrow \mathrm{cl}.
\label{eq:ons_povm}
\end{align}
Indeed, by the definition of the associated POVM, the defining equality for outcome no-signaling (\Cref{eq:def_ons}) can be written as
\begin{align}
\operatorname{Tr}\!\big[\AssocPOVM{\Lambda}_k\, \rho\big]
= \operatorname{Tr}\!\big[\Lambda_k(\rho)\big]
= \operatorname{Tr}\!\big[F_k\, \operatorname{Tr}_{\mathrm{B}}(\rho)\big]
\qquad \forall\, \rho \in \mathcal{L}(\mathcal{H}_{\mathrm{A}} \otimes \mathcal{H}_{\mathrm{B}}),\; k \in \Outk.
\end{align}
\end{remark}

Furthermore, the outcome no-signaling condition can be characterized as a decomposition that makes the independence of the input subsystem manifest:

\begin{theorem}\label{thm:ons_decomp}
  Let $\Lambda := \{\Lambda_k : \mathcal{L}(\mathcal{H}_\text{A} \otimes \mathcal{H}_\text{B}) \to \mathcal{L}(\mathcal{H}_\text{C})\}_{k \in \Outk}$ be a quantum instrument. The following two conditions are equivalent:
  \begin{enumerate}[label=(\alph*)]
    \item The quantum instrument $\Lambda$ satisfies the outcome no-signaling condition $\text{B} \nrightarrow \mathrm{cl}$.
    \item There exists a quantum instrument $\Gamma := \{\Gamma_k : \mathcal{L}(\mathcal{H}_{\text{A}}) \to \mathcal{L}(\mathcal{H}_\text{X})\}_{k \in \Outk}$ and a quantum channel $\mathcal{E}^{(k)} : \mathcal{L} (\mathcal{H}_\text{X} \otimes \mathcal{H}_{\text{B}}) \to \mathcal{L} (\mathcal{H}_\text{C})$ for each $k \in \Outk$, such that
    \begin{align}
      \Lambda_k(\rho) =  \qty( \mathcal{E}^{(k)} \circ \qty( \Gamma_k \otimes \operatorname{id}_{\text{B}} )) (\rho)
      \qquad\forall \rho \in \mathcal{L}(\mathcal{H}_{\text{A}} \otimes \mathcal{H}_{\text{B}}).
      \label{eq:ons_decomp_general}
    \end{align}
  \end{enumerate}
\end{theorem}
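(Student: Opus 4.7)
The plan is to establish the two implications separately: (b)$\Rightarrow$(a) is a one-line trace computation, while (a)$\Rightarrow$(b) requires a concrete construction based on Douglas's factorization lemma.

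For (b)$\Rightarrow$(a), I would take the trace of \Cref{eq:ons_decomp_general} and use that each $\mathcal{E}^{(k)}$ is trace-preserving to obtain
\begin{align}
\Tr[\Lambda_k(\rho)] = \Tr[(\Gamma_k \otimes \operatorname{id}_{\mathrm{B}})(\rho)] = \Tr[\Gamma_k(\Tr_{\mathrm{B}}\rho)] = \Tr[\AssocPOVM{\Gamma}_k\, \Tr_{\mathrm{B}}\rho],
\end{align}
so $F_k := \AssocPOVM{\Gamma}_k$ witnesses the outcome no-signaling condition \Cref{eq:def_ons}.

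For (a)$\Rightarrow$(b), I would first use \Cref{rem:ons_povm} to obtain a POVM $\{F_k\}_{k \in \Outk}$ on $\mathcal{H}_{\mathrm{A}}$ with $\AssocPOVM{\Lambda}_k = F_k \otimes \mathbb{I}_{\mathrm{B}}$. I would then take $\mathcal{H}_{\mathrm{X}} := \mathcal{H}_{\mathrm{A}}$ and set $\Gamma_k$ to be the Lüders instrument $\Gamma_k(\sigma) := \sqrt{F_k}\, \sigma\, \sqrt{F_k}$, which is indeed an instrument since $\sum_k \Tr[\Gamma_k(\sigma)] = \sum_k \Tr[F_k \sigma] = \Tr[\sigma]$, and satisfies $\AssocPOVM{\Gamma}_k = F_k$. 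The main work is constructing the channels $\mathcal{E}^{(k)}$. Fix any Kraus representation $\Lambda_k(\rho) = \sum_i K_{k,i}\, \rho\, K_{k,i}^\dagger$. Since $\sum_i K_{k,i}^\dagger K_{k,i} = F_k \otimes \mathbb{I}_{\mathrm{B}} = (\sqrt{F_k} \otimes \mathbb{I}_{\mathrm{B}})^2$, each summand satisfies $K_{k,i}^\dagger K_{k,i} \le (\sqrt{F_k} \otimes \mathbb{I}_{\mathrm{B}})^2$, and Douglas's factorization lemma provides operators $L_{k,i}: \mathcal{H}_{\mathrm{A}} \otimes \mathcal{H}_{\mathrm{B}} \to \mathcal{H}_{\mathrm{C}}$ with
\begin{align}
K_{k,i} = L_{k,i}\,(\sqrt{F_k} \otimes \mathbb{I}_{\mathrm{B}}),
\end{align}
where the canonical (minimum-norm) choice makes each $L_{k,i}$ vanish on $\ker(\sqrt{F_k} \otimes \mathbb{I}_{\mathrm{B}}) = \ker(F_k) \otimes \mathcal{H}_{\mathrm{B}}$. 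Substituting back and using that $\sqrt{F_k}$ is invertible on its range, a short calculation shows $\sum_i L_{k,i}^\dagger L_{k,i} = \Pi_k$, where $\Pi_k$ is the orthogonal projector onto $\mathrm{range}(F_k) \otimes \mathcal{H}_{\mathrm{B}}$.

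With $\sum_i L_{k,i}^\dagger L_{k,i} = \Pi_k \le \mathbb{I}$ in hand, I would complete $\{L_{k,i}\}$ to a CPTP map by defining, for any fixed state $\sigma_0 \in \mathcal{L}(\mathcal{H}_{\mathrm{C}})$,
\begin{align}
\mathcal{E}^{(k)}(\tau) := \sum_i L_{k,i}\, \tau\, L_{k,i}^\dagger + \Tr[(\mathbb{I} - \Pi_k)\tau]\, \sigma_0,
\end{align}
which is manifestly completely positive and whose output trace equals $\Tr[\Pi_k \tau] + \Tr[(\mathbb{I}-\Pi_k)\tau] = \Tr[\tau]$. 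Because $(\Gamma_k \otimes \operatorname{id}_{\mathrm{B}})(\rho)$ is supported in $\mathrm{range}(\sqrt{F_k}) \otimes \mathcal{H}_{\mathrm{B}}$, the correction term vanishes on composition and the first term reproduces $\sum_i K_{k,i}\, \rho\, K_{k,i}^\dagger = \Lambda_k(\rho)$, as required. The one technical subtlety I anticipate is precisely the rank-deficient case: when $F_k$ has nontrivial kernel, $\sum_i L_{k,i}^\dagger L_{k,i}$ is a priori undetermined outside $\mathrm{range}(F_k) \otimes \mathcal{H}_{\mathrm{B}}$, and invoking the canonical Douglas solution (rather than an arbitrary one) is essential to pin it down to $\Pi_k$ and thereby allow the completion to a proper channel.
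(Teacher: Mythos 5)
Your proposal is correct and follows essentially the same route as the paper: the canonical Douglas solution $L_{k,i}=K_{k,i}(F_k\otimes\mathbb{I}_{\mathrm B})^{-1/2}$ is exactly the paper's $M_{k,i}:=K_{k,i}A_k^{-1/2}$ built from the generalized inverse, and your trace-out-and-prepare correction term $\Tr[(\mathbb{I}-\Pi_k)\tau]\,\sigma_0$ plays the same role as the paper's extra Kraus operators $\ketbra{\psi}{e_{k,\ell}}$ handling $\ker A_k$. Your flagged subtlety (needing the minimum-norm solution so that $\sum_i L_{k,i}^\dagger L_{k,i}=\Pi_k$) is exactly the point the paper's explicit construction settles, so no gap remains.
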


\begin{figure}[h]
  \centering
  \includegraphics[width=0.8\linewidth]{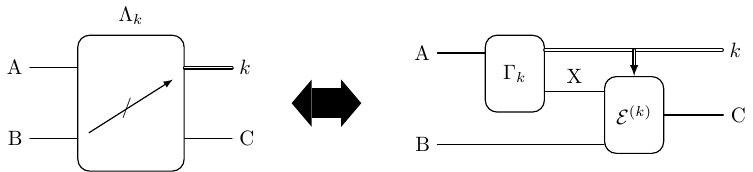}
  \caption{Equivalence between the outcome no-signaling condition $\text{B} \nrightarrow \mathrm{cl}$ and a decomposition in \Cref{eq:ons_decomp_general}. The outcome no-signaling condition $\text{B} \nrightarrow \mathrm{cl}$ on a quantum instrument $\Lambda$ holds if and only if $\Lambda$ can be decomposed into a quantum instrument $\Gamma$ acting only on subsystem A, followed by a quantum channel $\mathcal{E}^{(k)}$ that may depend on the classical outcome $k$.}
  \label{fig:ons_decomp}
\end{figure}

\begin{proof}[Proof Sketch]
  The full proof is given in \Cref{app:pf_thm_ons_decomp}. Here, we provide a proof sketch.

  The implication (b) $\Rightarrow$ (a) is straightforward by taking the trace on both sides of the equality in (b).

  For the converse direction (a) $\Rightarrow$ (b), note first that the outcome no-signaling condition $\mathrm{B} \nrightarrow \mathrm{cl}$ ensures that the POVM associated with $\Lambda$ factorizes as
  \begin{equation}\label{eq:ons_povm_recall}
    \AssocPOVM{\Lambda}_k = F_k \otimes \mathbb{I}_{\mathrm B}
    \qquad \forall\,k \in \Outk,
  \end{equation}
  for some POVM $\{F_k\}_{k\in\Outk}$ on subsystem A. We then take $\Gamma$ to be the L\"uders instrument for $\{F_k\}_{k\in\Outk}$ on A and explicitly construct, for each outcome $k$, a quantum channel that reproduces $\Lambda_k$. This yields the desired decomposition and establishes (a) $\Rightarrow$ (b).
\end{proof}

In our analysis of space requirements of quantum instruments, the spaces needed to implement $\Gamma$ and $\mathcal{E}^{(k)}$ in \Cref{thm:ons_decomp} are crucial. In this respect, we show below that, when restricting to qubit systems and imposing a Kraus-rank constraint on $\Lambda$, each $\mathcal{E}^{(k)}$ can be chosen to be unitary.

\begin{lemma}\label{lem:ons_decomp_mindim}
  Let $\mathcal{H}_\text{A} \cong (\mathbb{C}^{2})^{\otimes n_\text{A}}, \mathcal{H}_\text{B} \cong (\mathbb{C}^{2})^{\otimes n_\text{B}}, \mathcal{H}_\text{C} \cong (\mathbb{C}^{2})^{\otimes n_\text{C}}$ for some $n_\text{A}, n_\text{B}, n_\text{C} \in \mathbb{Z}_{\geq 0}$ and $\Lambda := \{\Lambda_k : \mathcal{L}(\mathcal{H}_\text{A} \otimes \mathcal{H}_\text{B}) \to \mathcal{L}(\mathcal{H}_\text{C})\}_{k \in \Outk}$ be a quantum instrument where each $\Lambda_k$ has Kraus rank 1 for every $k \in \Outk$. The following two conditions are equivalent:
  \begin{enumerate}[label=(\alph*)]
    \item The quantum instrument $\Lambda$ satisfies the outcome no-signaling condition $\text{B} \nrightarrow \mathrm{cl}$.
    \item There exists a quantum instrument $\Gamma := \{\Gamma_k : \mathcal{L}(\mathcal{H}_{\text{A}}) \to \mathcal{L}(\mathcal{H}_\text{X})\}_{k \in \Outk}$ with each $\Gamma_k$ having Kraus rank 1 and a unitary operator $ U_k : \mathcal{H}_\text{X} \otimes \mathcal{H}_{\text{B}} \to \mathcal{H}_\text{C}$ for each $k \in \Outk$, such that
    \begin{align}
      \Lambda_k(\rho) =  U_k \qty( \Gamma_k \otimes \operatorname{id}_{\text{B}} )(\rho) U_k^\dagger
      \qquad\forall \rho \in \mathcal{L}(\mathcal{H}_{\text{A}} \otimes \mathcal{H}_{\text{B}}).
    \end{align}
  \end{enumerate}
\end{lemma}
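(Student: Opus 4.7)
The direction (b) $\Rightarrow$ (a) is immediate: taking traces in the defining equality of (b) and using that $U_k$ is unitary yields $\operatorname{Tr}[\Lambda_k(\rho)] = \operatorname{Tr}[\Gamma_k(\operatorname{Tr}_\text{B}\rho)] = \operatorname{Tr}[\AssocPOVM{\Gamma}_k \operatorname{Tr}_\text{B}\rho]$ for all $\rho$, so the POVM $\AssocPOVM{\Gamma}$ on $\mathcal{H}_\text{A}$ witnesses the outcome no-signaling condition for $\Lambda$.

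For the main direction (a) $\Rightarrow$ (b), the plan has three steps. \emph{Step 1: POVM factorization and dimension count.} Using the Kraus-rank-$1$ hypothesis, I would write $\Lambda_k(\rho) = K_k \rho K_k^\dagger$ with $K_k : \mathcal{H}_\text{A}\otimes\mathcal{H}_\text{B} \to \mathcal{H}_\text{C}$, so that $\AssocPOVM{\Lambda}_k = K_k^\dagger K_k$. By \Cref{rem:ons_povm} together with hypothesis (a), this factorizes as $K_k^\dagger K_k = F_k \otimes \mathbb{I}_\text{B}$ for some POVM $\{F_k\}_{k \in \Outk}$ on $\mathcal{H}_\text{A}$. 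Counting ranks gives $\operatorname{rank}(K_k) = \operatorname{rank}(F_k)\cdot 2^{n_\text{B}} \leq 2^{n_\text{C}}$, so $r_\ast := \max_k \operatorname{rank}(F_k) \leq 2^{n_\text{C}-n_\text{B}}$; in particular $n_\text{C} \geq n_\text{B}$, since $\sum_k F_k = \mathbb{I}_\text{A}$ forces some $F_k$ to be nonzero. \emph{Step 2: construct $\Gamma$ with a tailored output dimension.} Apply \Cref{lem:output_dim_indecomp}(i) to $\{F_k\}$ with the choice $r := 2^{n_\text{C}-n_\text{B}} \geq r_\ast$, obtaining a quantum instrument $\Gamma = \{\Gamma_k\}$ with $\mathcal{H}_\text{X} \cong (\mathbb{C}^2)^{\otimes(n_\text{C}-n_\text{B})}$, each $\Gamma_k$ of Kraus rank $1$ written as $\Gamma_k(\sigma) = L_k \sigma L_k^\dagger$, and $\AssocPOVM{\Gamma} = \{F_k\}$. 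Setting $M_k := L_k \otimes \mathbb{I}_\text{B}$, one then has $M_k^\dagger M_k = F_k \otimes \mathbb{I}_\text{B} = K_k^\dagger K_k$.

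\emph{Step 3: build the unitary $U_k$.} The equality $K_k^\dagger K_k = M_k^\dagger M_k$ lets me pick singular value decompositions $K_k = \sum_i \sigma_i \ket{u_i^K}\bra{v_i}$ and $M_k = \sum_i \sigma_i \ket{u_i^M}\bra{v_i}$ sharing the same singular values and right singular vectors. Extending the left singular systems $\{\ket{u_i^K}\}$ and $\{\ket{u_i^M}\}$ to orthonormal bases of $\mathcal{H}_\text{C}$ and $\mathcal{H}_\text{X}\otimes\mathcal{H}_\text{B}$ respectively, I define $U_k : \mathcal{H}_\text{X}\otimes\mathcal{H}_\text{B} \to \mathcal{H}_\text{C}$ by the bijection $\ket{u_i^M} \mapsto \ket{u_i^K}$; then $K_k = U_k M_k$ by construction, which immediately gives $\Lambda_k(\rho) = U_k (\Gamma_k \otimes \operatorname{id}_\text{B})(\rho) U_k^\dagger$. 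This is the step I expect to be the main obstacle: $U_k$ is unitary only when $\dim(\mathcal{H}_\text{X}\otimes\mathcal{H}_\text{B}) = \dim \mathcal{H}_\text{C}$, and arranging that equality is precisely why both the qubit-system hypothesis and the freedom in \Cref{lem:output_dim_indecomp}(i) to pick any $r \geq r_\ast$ (rather than only the minimum $r_\ast$) are essential; without the latter, one would in general only obtain an isometric $U_k$ rather than a unitary.
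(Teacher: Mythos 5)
Your proposal is correct and follows essentially the same route as the paper: factorize the associated POVM via \Cref{rem:ons_povm}, invoke \Cref{lem:output_dim_indecomp} with the non-minimal output dimension $2^{n_\text{C}-n_\text{B}}$, and convert $K_k^\dagger K_k = (L_k\otimes\mathbb{I}_\text{B})^\dagger(L_k\otimes\mathbb{I}_\text{B})$ into $K_k = U_k(L_k\otimes\mathbb{I}_\text{B})$ with $U_k$ unitary. The only cosmetic difference is that you establish this last step inline via a shared singular value decomposition, whereas the paper packages it as \Cref{lem:isometry_equivalence_of_kraus_op}; your closing remark about why the dimension matching (and hence the freedom to choose $r \geq r_\ast$) is essential matches the paper's reasoning exactly.
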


\begin{proof}
See \Cref{app:pf_lem_ons_decomp}.
\end{proof}

\section{Characterization of Space Requirements of Quantum Instruments}
\label{sec:qubit_reduction}

In this section, we characterize the number of qubits required to implement a given quantum instrument.
Specifically, we present necessary conditions and sufficient conditions in which a given quantum instrument is implementable under space constraints.
These conditions imply upper and lower bounds on the space requirements of quantum instruments.
In what follows, we state our results separately for the settings without and with delayed inputs.

\subsection{Space Requirements for the Setting without Delayed Inputs}
In the setting without delayed inputs, the space required to implement a quantum instrument must be at least as large as its input or output system size (whichever is larger) for holding the entire input state and the entire output state.
However, this lower bound may be insufficient because the Stinespring dilation generally requires auxiliary qubits beyond the input and output systems.
Accordingly, the analysis of space requirements in this setting can be rephrased as clarifying how much additional space is required beyond the input and output systems to implement a given quantum instrument.

The following result on space requirements of POVMs is known from prior work \cite{Lloyd2001engineeringquantumdynamics,Andersson2008binarysearchtrees}.

\begin{lemma}[\cite{Lloyd2001engineeringquantumdynamics, Andersson2008binarysearchtrees}]\label{lem:povm_space}
 Fix $m \in \mathbb{N}$. Every POVM on $(m-1)$ qubits is $m$-qubit implementable (without delayed inputs).
\end{lemma}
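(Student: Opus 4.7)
The plan is to realize the POVM $\{E_k\}_{k \in \{0,1\}^T}$ on $m-1$ qubits iteratively as a sequence of two-outcome POVMs, each implemented by Naimark dilation on the $m-1$ input qubits together with a single auxiliary qubit that is reused after every round, so that the total space never exceeds $m$. To organize the sequence, I would define the coarse-grained POVM elements
\begin{align}
F^{(t)}_{k_1\cdots k_t} := \sum_{(k_{t+1},\ldots,k_T) \in \{0,1\}^{T-t}} E_{k_1\cdots k_T}, \qquad t \in \{0,1,\ldots,T\},
\end{align}
so that $F^{(0)}_\emptyset=\mathbb{I}$ and $F^{(T)}_k=E_k$, and, at round $t$ conditional on the previously obtained bits $k_1,\ldots,k_{t-1}$, perform the two-outcome POVM
\begin{align}
A^{(t)}_{k_1\cdots k_{t-1},b} := \bigl(F^{(t-1)}_{k_1\cdots k_{t-1}}\bigr)^{-1/2}\, F^{(t)}_{k_1\cdots k_{t-1} b}\, \bigl(F^{(t-1)}_{k_1\cdots k_{t-1}}\bigr)^{-1/2}, \quad b \in \{0,1\},
\end{align}
where the inverse square root is interpreted on the support of $F^{(t-1)}_{k_1\cdots k_{t-1}}$; the identity $F^{(t)}_{k_1\cdots k_{t-1} 0}+F^{(t)}_{k_1\cdots k_{t-1} 1}=F^{(t-1)}_{k_1\cdots k_{t-1}}$ ensures that $\{A^{(t)}_{k_1\cdots k_{t-1},0},A^{(t)}_{k_1\cdots k_{t-1},1}\}$ is a valid POVM on that support.

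To execute each two-outcome POVM I would use a single ancilla qubit by Naimark's construction: the isometry $V\ket{\psi} := \sqrt{A^{(t)}_{k_1\cdots k_{t-1},0}}\ket{\psi}\ket{0} + \sqrt{A^{(t)}_{k_1\cdots k_{t-1},1}}\ket{\psi}\ket{1}$ extends to an $m$-qubit unitary $U_{k_1\cdots k_{t-1}}$ that agrees with $V$ when the ancilla is initialized in $\ket{0}$. Applying this classically-conditioned unitary and then measuring the ancilla in the computational basis are elementary operations in \Cref{assm:ops_without_delayed}; moreover, the measurement operator $M_x = \ketbra{0}{x}_S \otimes \mathbb{I}$ in \Cref{def:space_instr_wo_delayed} automatically resets the ancilla to $\ket{0}$, so the same ancilla can be reused in the next round. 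Composing $T$ such rounds according to \Cref{def:space_instr_wo_delayed} then yields an $m$-qubit implementable (without delayed inputs) instrument.

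Finally, I would verify correctness by induction on $t$: using the Lüders-style update after each round, the probability of obtaining the prefix $k_1\cdots k_t$ equals $\Tr\bigl[F^{(t)}_{k_1\cdots k_t}\rho\bigr]$, via the identity $\sqrt{F^{(t-1)}}\,A^{(t)}_{\cdot,b}\,\sqrt{F^{(t-1)}} = F^{(t)}$. At $t=T$ this gives $\Tr[E_k\rho]$, so the POVM associated with the constructed instrument coincides with $\{E_k\}_k$ (the residual quantum register is traced out, as only outcome statistics matter for a POVM). I expect the main delicacy to be the case of rank-deficient $F^{(t-1)}_{k_1\cdots k_{t-1}}$: one must extend $A^{(t)}_{\cdot,b}$ off the support to a genuine POVM on the whole space and argue that this off-support ambiguity never enters the statistics because every intermediate unnormalized state lies in the range of the preceding $F^{(t-1)}$.
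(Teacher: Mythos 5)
Your overall scheme is the same as the paper's (binary refinement of the outcome label, one reusable ancilla, a Naimark dilation in each round, with the measurement operator $M_x=\ketbra{0}{x}$ resetting the ancilla), but there is a genuine gap in the correctness step, and it lies exactly where you suspected trouble. You implement the round-$t$ conditional POVM $\{A^{(t)}_{\cdot,b}\}$ by its L\"uders instrument, i.e.\ with Kraus operators $\sqrt{A^{(t)}_{\cdot,b}}$. Your induction then tacitly assumes that the unnormalized state after $t-1$ rounds is $\sqrt{F^{(t-1)}}\rho\sqrt{F^{(t-1)}}$, so that the identity $\sqrt{F^{(t-1)}}A^{(t)}_{\cdot,b}\sqrt{F^{(t-1)}}=F^{(t)}$ closes the step. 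That assumption fails from round $3$ on: the accumulated Kraus operator is $\sqrt{A^{(t-1)}}\cdots\sqrt{A^{(1)}}=W_{t-1}\sqrt{F^{(t-1)}}$ for a history-dependent partial isometry $W_{t-1}$ which is nontrivial whenever consecutive coarse-grainings do not commute. Consequently the intermediate state lies in $W_{t-1}\,\mathrm{Ran}\,(F^{(t-1)})$, not in $\mathrm{Ran}\,(F^{(t-1)})$, your off-support argument breaks down, and the round-$t$ POVM $A^{(t)}$ (defined in the original frame) is measured in the wrong frame, giving cumulative elements $\sqrt{F^{(t-1)}}W_{t-1}^\dagger A^{(t)}_{\cdot,b}W_{t-1}\sqrt{F^{(t-1)}}\neq F^{(t)}$ in general. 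A concrete single-qubit failure: take $F^{(1)}_0=\mathrm{diag}(1,\tfrac12)$, $F^{(2)}_{00}=\tfrac12\ketbra{\psi}$ with $\ket{\psi}=(\ket{0}+\ket{1})/\sqrt{2}$, and $F^{(3)}_{000}=\tfrac{\lambda}{2}\ketbra{\psi}$; your scheme outputs outcome $000$ with probability $\tfrac{\lambda}{2}\,|\langle\hat\phi|\psi\rangle|^2\,\bra{\psi}\rho\ket{\psi}$ where $\ket{\hat\phi}\propto (F^{(1)}_0)^{-1/2}\ket{\psi}$ and $|\langle\hat\phi|\psi\rangle|^2=(3+2\sqrt{2})/6<1$, instead of $\tfrac{\lambda}{2}\bra{\psi}\rho\ket{\psi}$.

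The repair is precisely the paper's choice of per-round Kraus operators: instead of the positive square roots $\sqrt{A^{(t)}_{\cdot,b}}$, take
\begin{align}
K^{(t)}_{b\mid k_1\cdots k_{t-1}}\;:=\;\sqrt{F^{(t)}_{k_1\cdots k_{t-1}b}}\;\bigl(F^{(t-1)}_{k_1\cdots k_{t-1}}\bigr)^{-1/2},
\end{align}
which realizes the same conditional POVM (since $K^\dagger K=A^{(t)}_{\cdot,b}$) but a different state update, chosen so that the accumulated Kraus operator telescopes exactly to $\sqrt{F^{(t)}_{k_1\cdots k_t}}$; then the cumulative instrument is the L\"uders instrument of the coarse-grained POVM, every intermediate state does lie in $\mathrm{Ran}\,(F^{(t-1)})$, and your Naimark/ancilla-reuse argument (the isometry $\ket{\phi}\mapsto\sum_b\ket{b}\otimes K^{(t)}_{b\mid \cdot}\ket{\phi}$, valid because $\sum_b K^\dagger K$ is the support projector) goes through verbatim. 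With that single change your proof coincides with the paper's.
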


\begin{proof}[Proof Sketch]
  The full proof is given in \Cref{app:pf_povm_space} or in prior work \cite{Lloyd2001engineeringquantumdynamics, Andersson2008binarysearchtrees}; here, we provide a proof sketch in our notation.

  Let $E:=\{E_k\}_{k\in K}$ be a POVM on $\mathcal{H}\cong (\mathbb{C}^2)^{\otimes (m-1)}$. If necessary, enlarge the outcome set by adding $E_k=0$ so that $|K|=2^{T}$ for some $T\in\mathbb{N}$. Write each outcome label in binary as
  $k = k_{(1)} \concat k_{(2)} \concat \cdots \concat k_{(T)} \in \{0,1\}^{T}$, and for $t\ge 1$ write the prefix $k_{(< t)}:=k_{(1)}\concat \cdots \concat k_{(t-1)}$.

  We prove the claim by explicitly realizing $E$ as illustrated in \Cref{fig:povm_m_space}, namely,
  \begin{align}
    \Tr[E_k \rho]
    \;=\;
    \Tr\!\Big[
      \Gamma^{(T)}_{k_{(T)} \mid k_{(< T)}} \circ \cdots \circ
      \Gamma^{(2)}_{k_{(2)} \mid k_{(< 2)}} \circ
      \Gamma^{(1)}_{k_{(1)} \mid \emptyset}
      \,(\rho)
      \Big]
      \qquad \forall\, \rho \in \mathcal{L}(\mathcal{H}),\; k \in K.
      \label{eq:seq_const_povm}
    \end{align}
    At round $t\in\{1,\dots,T\}$, the instrument $\{\Gamma^{(t)}_{k_{(t)}\mid k_{(< t)}}:\mathcal L(\mathcal H)\to\mathcal L(\mathcal H)\}_{k_{(t)}\in\{0,1\}}$ reveals the $t$-th bit $k_{(t)}$ of the final outcome $k$.

    \begin{figure}[H]
    \centering
    \includegraphics[width=0.7\linewidth]{./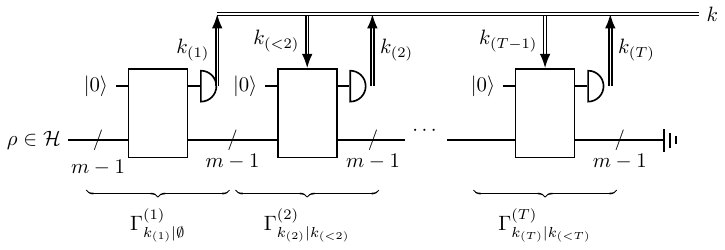}
    \caption{Realizing the POVM $E$ as a composition of the elementary operations in \Cref{def:space_instr_wo_delayed}. At round $t$, the instrument $\{\Gamma^{(t)}_{k_{(t)}\mid k_{(< t)}}\}_{k_{(t)}\in\{0,1\}}$ reveals $k_{(t)}$ (the $t$-th bit of outcome $k$). Each box in the figure represents an $m$-qubit unitary operation.}
    \label{fig:povm_m_space}
    \end{figure}

    As shown in the full proof, these instruments can be chosen recursively so that the cumulative instrument up to round $t$ coincides with the Lüders instrument for the coarse-grained POVM:
    \begin{align}
      \Big(
        \Gamma^{(t)}_{l_{(t)} \mid l_{(< t)}} \circ \cdots \circ
        \Gamma^{(1)}_{l_{(1)} \mid \emptyset}
      \Big) (\rho)
      \;=\;
      \sqrt{R_l}\,\rho\,\sqrt{R_l}
      \qquad \forall\, \rho \in \mathcal{L}(\mathcal{H}),
    \end{align}
    where, for any binary string $l$ of length $t$,
    \begin{align}
      \mathcal K_l \;:=\; \{\,k\in\{0,1\}^{T}:\ \text{the first $t$ bits of $k$ equal $l$}\,\}
      \qquad
      R_l \;:=\; \sum_{k\in \mathcal K_l} E_k .
    \end{align}

    By construction, each instrument $\{\Gamma^{(t)}_{k_{(t)}\mid k_{(< t)}}\}_{k_{(t)}\in\{0,1\}}$ has both input and output systems $\mathcal{H} \cong (\mathbb{C}^2)^{\otimes (m-1)}$ and has two outcomes. Hence, each instrument admits a Stinespring realization that (i) appends one ancilla qubit initialized to $\ket{0}$, (ii) applies a unitary operation on $\mathbb{C}^2 \otimes \mathcal{H} \cong (\mathbb{C}^2)^{\otimes m}$, and (iii) measures the ancilla in the computational basis, as illustrated in \Cref{fig:povm_m_space}. Therefore, the sequential construction in \Cref{eq:seq_const_povm} is $m$-qubit implementable (without delayed inputs).
\end{proof}

Based on \Cref{lem:povm_space}, we derive sufficient conditions for a quantum instrument to be $m$-qubit implementable (without delayed inputs) in terms of the associated POVM.

\begin{theorem} \label{thm:range_m_space_inst_wodi}
  Let $m \in \mathbb{N}$.
  Let $\Lambda := \qty{\Lambda_k : \mathcal{L}( \mathcal{H}_{\mathrm{in}}) \to \mathcal{L}( \mathcal{H}_{\mathrm{out}}) }_{k \in \Outk}$ be a quantum instrument
  where $\mathcal H_{\mathrm{in}} \cong (\mathbb C^2)^{\otimes n_\mathrm{in}}$ and $\mathcal H_{\mathrm{out}} \cong (\mathbb C^2)^{\otimes n_\mathrm{out}}$ with $n_\mathrm{in}, n_\mathrm{out} \in \{0, 1, \cdots m\}$.
  If there exist disjoint sets $\Outk_0, \Outk_1$ such that $\Outk = \Outk_0 \cup \Outk_1$ and, for each $b \in \{0,1\}$,
  \begin{align}
    \sum_{k \in \Outk_b} \AssocPOVM{\Lambda}_k &= P_b,\\
    \operatorname{rank}(P_b) &\leq 2^{m-1},
  \end{align}
  where $\qty{\AssocPOVM{\Lambda}_k}_{k \in \Outk}$ is the POVM associated with $\Lambda$, and $\{P_b\}_{b \in \{0,1\}}$ is a projective measurement on $\mathcal{H}_\mathrm{in}$, then $\Lambda$ is $m$-qubit implementable (without delayed inputs).
\end{theorem}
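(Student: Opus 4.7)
My plan is to factor $\Lambda$ through a two-outcome, Kraus-rank-$1$ instrument that realizes the coarse-grained projective measurement $\{P_0,P_1\}$ together with a compression onto $m-1$ qubits, and then to implement the residual conditional sub-instruments in the remaining space. Since $\operatorname{rank}(P_b)\le 2^{m-1}$, for each $b\in\{0,1\}$ I can choose a partial isometry $W_b:\mathcal{H}_{\mathrm{in}}\to(\mathbb{C}^2)^{\otimes(m-1)}$ with $W_b^\dagger W_b=P_b$, and define the Kraus-rank-$1$ instrument $\widetilde{\Pi}_b(\rho):=W_b\rho W_b^\dagger$. By construction, $\widetilde{\Pi}=\{\widetilde{\Pi}_b\}_{b\in\{0,1\}}$ has associated POVM $\{P_0,P_1\}$ and produces a classical outcome $b$ together with an output state on $m-1$ qubits.

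First I would verify that $\widetilde{\Pi}$ is $m$-qubit implementable without delayed inputs. When $n_{\mathrm{in}}<m$ at least one ancilla qubit is free, and I would take the isometry $V$ on $\mathcal{H}_{\mathrm{in}}\otimes\mathbb{C}^2$ defined by $V(\ket{\psi}\otimes\ket{0})=\sum_b (W_b\ket{\psi})\otimes\ket{b}$; this is an isometry because $\sum_b W_b^\dagger W_b=\sum_b P_b=\mathbb{I}_{\mathrm{in}}$, and I would extend it to an $m$-qubit unitary and then measure the ancilla. When $n_{\mathrm{in}}=m$, the rank constraints force $\operatorname{rank}(P_0)=\operatorname{rank}(P_1)=2^{m-1}$, so I can find an $m$-qubit unitary $V$ with $V P_b V^\dagger=\ket{b}\!\bra{b}_1\otimes\mathbb{I}_{m-1}$; applying $V$ and measuring qubit~$1$ yields outcome $b$ and leaves the compressed state on the remaining $m-1$ qubits.

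Next I would show that $\Lambda$ is composable from $\widetilde{\Pi}$. Taking a Kraus decomposition $\Lambda_k(\rho)=\sum_i A_{k,i}\rho A_{k,i}^\dagger$, the hypothesis $\sum_{k\in\Outk_b}\AssocPOVM{\Lambda}_k=P_b$ together with positivity of the summands forces $\AssocPOVM{\Lambda}_k\le P_b$ and hence $A_{k,i}=A_{k,i}P_b=(A_{k,i}W_b^\dagger)W_b$ for every $k\in\Outk_b$. For $k\in\Outk_b$ I would then set $\widetilde{\Theta}_{k\mid b}(\sigma):=\sum_i(A_{k,i}W_b^\dagger)\sigma(A_{k,i}W_b^\dagger)^\dagger$ and $\widetilde{\Theta}_{k\mid b'}:=0$ for $b'\ne b$, which gives $\Lambda_k=\sum_{b'}\widetilde{\Theta}_{k\mid b'}\circ\widetilde{\Pi}_{b'}$. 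Since $W_b W_b^\dagger$ is a projector of rank $\operatorname{rank}(P_b)$, $\sum_{k\in\Outk_b}\widetilde{\Theta}_{k\mid b}$ is trace-preserving precisely on that projector's range, so I would complete $\{\widetilde{\Theta}_{k\mid b}\}_{k}$ to a full quantum instrument on $(\mathbb{C}^2)^{\otimes(m-1)}$ by enlarging the outcome set with zero-probability dummy outcomes as in \Cref{rem:enlarge_outcome_set}.

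The hard part will be to show that each conditional sub-instrument $\{\widetilde{\Theta}_{k\mid b}\}_{k}$ is itself $m$-qubit implementable: its input lives on $m-1$ qubits, its output on at most $m$ qubits, and only the single qubit freed by the compression is available as ancilla, so a naive Stinespring dilation with an outcome register of size $\lceil\log_2|\Outk_b|\rceil$ would exceed the $m$-qubit budget. My plan here is to mirror the construction in \Cref{lem:povm_space}: extract the outcome $k\in\Outk_b$ one bit at a time via a sequence of two-outcome elementary instruments on the $m$ available qubits, choosing each round's unitary so that, after the last bit is read, the state on the appropriate subset of qubits coincides with $\widetilde{\Theta}_{k\mid b}\circ\widetilde{\Pi}_b(\rho)$. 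Composing the realization of $\widetilde{\Pi}$ with these conditional extractions, together with a final classical relabeling identifying the pair $(b,k)$ with the original outcome $k\in\Outk$, would then yield the desired $m$-qubit implementation of $\Lambda$ in the sense of \Cref{def:space_instr_wo_delayed}.
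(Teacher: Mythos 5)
Your first two steps are sound and essentially reproduce the paper's skeleton: the compression instrument $\widetilde{\Pi}_b(\rho)=W_b\rho W_b^\dagger$ with $W_b^\dagger W_b=P_b$ is the paper's first instrument $\Gamma_b(\rho)=K_b\rho K_b^\dagger$, and your direct verification that $\Lambda_k=\widetilde{\Theta}_{k\mid b}\circ\widetilde{\Pi}_b$ for $k\in\Outk_b$ (via $A_{k,i}=A_{k,i}P_b$) is a clean way to get the composability. The problem is that the step you yourself flag as the hard part is exactly where the substance of the theorem lies, and the plan you sketch for it does not go through as stated. Because you kept a general Kraus decomposition, each conditional map $\widetilde{\Theta}_{k\mid b}$ may have Kraus rank greater than $1$. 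But any composition of the elementary unitary and computational-basis-measurement operations of \Cref{def:space_instr_wo_delayed}, conditioned on a full outcome string and without classical coarse-graining, is a single-Kraus-operator (rank-$1$) map; so no choice of per-round unitaries in a bit-by-bit extraction \`a la \Cref{lem:povm_space} can make the conditional output state equal $\widetilde{\Theta}_{k\mid b}(\sigma)$ when that map is not rank $1$. Moreover, even in the rank-$1$ case, the sequential construction of \Cref{lem:povm_space} produces L\"uders-type post-measurement states $\sqrt{N_{k\mid b}}\,\sigma\,\sqrt{N_{k\mid b}}$, not the target states, and one cannot implement a general correction channel afterwards within the budget without further argument; what saves the day is that for rank-$1$ maps the Kraus operators have equal Gram matrices, so a final outcome-dependent \emph{unitary} correction exists by \Cref{lem:isometry_equivalence_of_kraus_op}, with a case analysis on $n_{\mathrm{out}}\in\{0,\dots,m-1\}$ versus $n_{\mathrm{out}}=m$ to see that at most one fresh $\ket{0}$ qubit is needed and is indeed available.

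The paper closes this gap by (i) refining $\Lambda$ at the outset into the rank-$1$ instrument $\{\widetilde\Lambda_{(k,\alpha_k)}\}$ and recovering $\Lambda$ by the classical postprocessing $(k,\alpha_k)\mapsto k$, (ii) implementing, after the compression, the L\"uders instrument of the POVM $\{N_{k\mid b}:=K_b\AssocPOVM{\Lambda}_kK_b^\dagger\}_{k\in\Outk_b}$ on the $(m-1)$-qubit space by the sequential construction (note you must invoke the \emph{proof} of \Cref{lem:povm_space}, which realizes the L\"uders instrument, not just its statement about outcome probabilities), and (iii) applying the unitary correction $\mathcal{W}_k$ obtained from \Cref{lem:isometry_equivalence_of_kraus_op}, with the three-case dimension bookkeeping. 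To repair your proposal you should insert the rank-$1$ refinement before defining $\widetilde{\Theta}_{k\mid b}$ (so each has a single Kraus operator), carry the refined labels through the bit extraction, add the final unitary-correction round, and only then coarse-grain classically back to $k\in\Outk$; as written, the proposal asserts the conclusion of the hard step without an argument that can succeed.
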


\begin{proof}[Proof Sketch]
  The full proof is given in \Cref{app:pf_range_m_space_inst_wodi}; here, we provide a proof sketch.

Write a Kraus representation $\Lambda_k(\rho)=\sum_{\alpha_k} A_{k,\alpha_k}\rho A_{k,\alpha_k}^\dagger$.
Since $\Lambda$ can be implemented by the refined instrument $\{\widetilde\Lambda_{(k,\alpha_k)}\}_{(k,\alpha_k)}$ with
$\widetilde\Lambda_{(k,\alpha_k)}(\rho):=A_{k,\alpha_k}\rho A_{k,\alpha_k}^\dagger$,
followed by the classical postprocessing $(k,\alpha_k)\mapsto k$, it suffices to treat the case in which each $\Lambda_k$ has Kraus rank~$1$, i.e., $\Lambda_k(\rho)=A_k\rho A_k^\dagger$.

We prove the claim by explicitly implementing $\Lambda$ as depicted in \Cref{fig:half_cut}, namely
\begin{align}
\Lambda_k(\rho) \;=\; \big(\mathcal{W}_k \circ \Theta_{k\mid b} \circ \Gamma_b\big)(\rho),
\end{align}
with components defined as follows.

\begin{figure}[H]
\centering
\includegraphics[width=0.8\linewidth]{./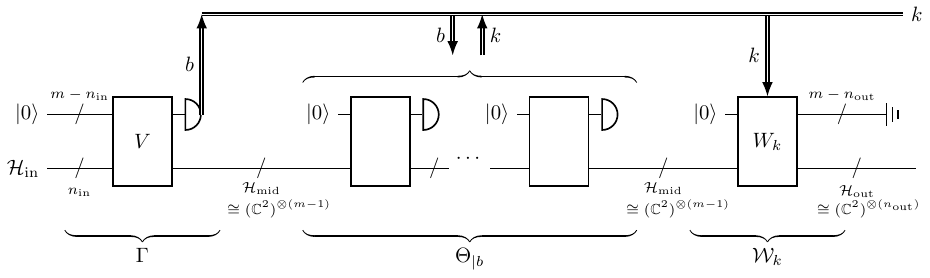}
\caption{Decomposition of the target instrument $\Lambda=\{\Lambda_k\}_{k\in\Outk}$ into three parts: the first instrument $\Gamma=\{\Gamma_b\}_{b\in\{0,1\}}$, the intermediate instrument $\Theta_{\mid b}=\{\Theta_{k\mid b}\}_{k\in\Outk_b}$, and the final channel $\mathcal{W}_{k}$. Each box in the figure represents an $m$-qubit unitary operation.}
\label{fig:half_cut}
\end{figure}

\noindent\textbf{First step ($\Gamma$).}
Perform the instrument $\Gamma=\{\Gamma_b:\mathcal{L}(\mathcal{H}_{\mathrm{in}})\to \mathcal{L}(\mathcal{H}_{\mathrm{mid}})\}_{b\in\{0,1\}}$
where its associated POVM is $\{P_b\}_{b\in\{0,1\}}$ and each $\Gamma_b$ has Kraus rank~$1$.
By \Cref{lem:output_dim_indecomp} and $\operatorname{rank}(P_b)\le 2^{m-1}$, we can choose $\dim\mathcal{H}_{\mathrm{mid}}=2^{m-1}$.
As a two-outcome instrument, $\Gamma$ admits a Stinespring realization that (i) appends $(m-n_{\mathrm{in}})$ ancilla qubits initialized to $\ket{0}$, (ii) applies an $m$-qubit unitary $V$, and (iii) measures the ancillas in the computational basis, as illustrated in \Cref{fig:half_cut}. Write a Kraus representation
\begin{align}
\Gamma_b(\rho):=K_b\rho K_b^\dagger.
\end{align}

\noindent\textbf{Intermediate step ($\Theta_{\mid b}$).}
For each $b\in\{0,1\}$, consider a set $\{N_{k \mid b} := K_b \AssocPOVM{\Lambda}_k K_b^\dagger\}_{k\in\Outk_b}$, which is a POVM on $\mathcal{H}_{\mathrm{mid}}$ with a trivial adjustment.
As in the proof of Lemma~\ref{lem:povm_space}, the Lüders instrument
\begin{align}
\Theta_{k\mid b}(\rho):=\sqrt{N_{k\mid b}}\;\rho\;\sqrt{N_{k\mid b}},
\end{align}
is implementable by repeating the operations “append one ancilla in $\ket{0}$, apply an $m$-qubit unitary, measure the ancilla,” as illustrated in \Cref{fig:half_cut}.

\noindent\textbf{Final step ($\mathcal{W}_k$).}
At this point we have implemented $\{\Theta_{k\mid b}\circ\Gamma_b\}_{k\in\Outk}$, where we may index outcomes only by $k \in \Outk$ since $b$ can be uniquely identified from $k \in \Outk_b$.
Its associated POVM agrees with that of $\Lambda$: $\AssocPOVM{\Theta_{k\mid b}\circ \Gamma_b}_k=\AssocPOVM{\Lambda}_k$.
By \Cref{lem:isometry_equivalence_of_kraus_op}, there exists a quantum channel
$\mathcal{W}_k:\mathcal{L}(\mathcal{H}_{\mathrm{mid}})\to\mathcal{L}(\mathcal{H}_{\mathrm{out}})$ of the form
\begin{align}
\mathcal{W}_k(\rho):=\Tr_{(\mathbb{C}^2)^{\otimes (m-n_{\mathrm{out}})}}\!\big[\,W_k\,(\ketbra{0}\otimes \rho)\,W_k^\dagger\,\big],
\end{align}
for some $m$-qubit unitary $W_k$, such that $\Lambda_k(\rho)=(\mathcal{W}_k\circ \Theta_{k\mid b}\circ \Gamma_b)(\rho)$.

As indicated in \Cref{fig:half_cut}, this composition uses only the elementary unitary operations and computational basis measurements specified in \Cref{def:space_instr_wo_delayed}. Hence $\Lambda$ is $m$-qubit implementable (without delayed inputs).

\end{proof}

From \Cref{thm:range_m_space_inst_wodi}, we obtain the following corollary.

\begin{corollary}\label{cor:range_m_space_inst_wodi_trivial}
  Let $m \in \mathbb{N}$, and let $\Lambda := \qty{\Lambda_k : \mathcal{L}( (\mathbb{C}^2)^{\otimes n_{\mathrm{in}}}) \to \mathcal{L}( (\mathbb{C}^2)^{\otimes n_{\mathrm{out}}}) }_{k \in \Outk}$ be a quantum instrument with
  $n_{\mathrm{in}} \le m-1$ and $n_{\mathrm{out}} \le m$.
  Then $\Lambda$ is $m$-qubit implementable (without delayed inputs).
\end{corollary}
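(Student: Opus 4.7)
The plan is to reduce the corollary directly to Theorem~\ref{thm:range_m_space_inst_wodi} by making a trivial choice of projective measurement and outcome partition. The hypothesis $n_{\mathrm{in}}\le m-1$ is exactly what makes the rank bound in Theorem~\ref{thm:range_m_space_inst_wodi} automatic: the identity on $\mathcal{H}_{\mathrm{in}}$ already has rank at most $2^{m-1}$, so no nontrivial partitioning of the outcome set is required.

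Concretely, I would set $\Outk_0 := \Outk$ and $\Outk_1 := \emptyset$, and choose the projective measurement $\{P_0, P_1\} := \{\mathbb{I}_{\mathrm{in}}, 0\}$ on $\mathcal{H}_{\mathrm{in}}\cong (\mathbb{C}^2)^{\otimes n_{\mathrm{in}}}$. Then $\{P_0,P_1\}$ is a valid projective measurement (both operators are projectors and sum to $\mathbb{I}_{\mathrm{in}}$), and
\begin{align}
\operatorname{rank}(P_0) \;=\; 2^{n_{\mathrm{in}}} \;\le\; 2^{m-1},\qquad \operatorname{rank}(P_1)\;=\;0\;\le\; 2^{m-1},
\end{align}
using $n_{\mathrm{in}}\le m-1$. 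The partition condition on the associated POVM holds because $\sum_{k \in \Outk_0}\AssocPOVM{\Lambda}_k = \sum_{k\in\Outk}\AssocPOVM{\Lambda}_k = \mathbb{I}_{\mathrm{in}} = P_0$ by the POVM normalization, and $\sum_{k\in\Outk_1}\AssocPOVM{\Lambda}_k = 0 = P_1$ by convention on the empty sum.

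All the hypotheses of Theorem~\ref{thm:range_m_space_inst_wodi} are thus satisfied, so $\Lambda$ is $m$-qubit implementable without delayed inputs. There is no substantive obstacle here; the only thing to verify is that the trivial choice $\Outk_1=\emptyset$ is admissible in Theorem~\ref{thm:range_m_space_inst_wodi} (which it is, since the hypotheses there are stated as sums over the $\Outk_b$ without requiring either to be nonempty), and that the zero operator qualifies as a projector in the degenerate projective measurement $\{\mathbb{I}_{\mathrm{in}},0\}$, which holds since $0^2=0=0^\dagger$.
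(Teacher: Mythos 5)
Your proposal is correct and matches the paper's own proof: it applies \Cref{thm:range_m_space_inst_wodi} with the trivial partition $\Outk_0=\Outk$, $\Outk_1=\emptyset$ and the degenerate projective measurement $\{\mathbb{I}_{\mathrm{in}},0\}$, with the rank bound following from $n_{\mathrm{in}}\le m-1$. The extra checks you spell out (empty partition admissible, $0$ a valid projector) are fine but not substantively different from the paper's argument.
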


\begin{proof}
  Apply \Cref{thm:range_m_space_inst_wodi} with the disjoint sets $\Outk, \emptyset$, which trivially satisfy $\Outk = \Outk \cup \emptyset$.
  Then $\sum_{k \in \Outk} \AssocPOVM{\Lambda}_k = \mathbb{I}_{\mathrm{in}}$ and
  $\operatorname{rank}(\mathbb{I}_{\mathrm{in}}) = \dim \mathcal{H}_{\mathrm{in}} = 2^{n_{\mathrm{in}}} \le 2^{m-1}$,
  which satisfies the rank condition.
\end{proof}

\subsection{Space Requirements for the Setting with Delayed Inputs}
In the setting with delayed inputs, the space required to implement a quantum instrument must be at least as large as the output system size for holding the entire output state, but it can be smaller than the input system size because the input state does not need to be held in the space at the same time.

In \Cref{thm:necc_qubit_reduction,thm:suff_qubit_reduction} below, we focus on space requirements of quantum instruments $\Lambda := \{\Lambda_k\}_k$ that can be written as
\begin{align}
  \Lambda_k(\rho) := \Tr_\mathrm{R} \qty[(\ketbra{k}_\mathrm{R} \otimes \mathbb{I}_\mathrm{out} ) U \rho \, U^\dagger] \qquad \forall \rho \in \mathcal{L}(\mathcal{H}_\mathrm{in}), \label{eq:target_instr}
\end{align}
where $U : \mathcal{H}_\mathrm{in} \to \mathcal{H}_\mathrm{R} \otimes \mathcal{H}_\mathrm{out}$ is a unitary operator and $\mathcal{H}_\mathrm{in} \cong (\mathbb{C}^2)^{\otimes n_\mathrm{in}},\mathcal{H}_\mathrm{out} \cong (\mathbb{C}^2)^{\otimes n_\mathrm{out}}$ for some $n_\mathrm{in}, n_\mathrm{out} \in \mathbb{N}$, which is illustrated in \Cref{fig:target_instr}. Quantum instruments executed in our primary application, entanglement distillation protocols based on stabilizer codes, admit the expression in \Cref{eq:target_instr}. Since $\Lambda$ is trivially $n_\mathrm{in}$-qubit implementable, analyzing its space requirements is rephrased as determining the largest $T \in \{0,1,\ldots,n_\mathrm{in}\}$ such that $\Lambda$ is $(n_\mathrm{in}-T)$-qubit implementable. Accordingly, we present necessary conditions and sufficient conditions for $\Lambda$ to be $(n_\mathrm{in}-T)$-qubit implementable for each $T \in \{0,1,\ldots,n_\mathrm{in}\}$ below.

\begin{figure}[h]
  \centering
  \centering
  \includegraphics[width=0.6\linewidth]{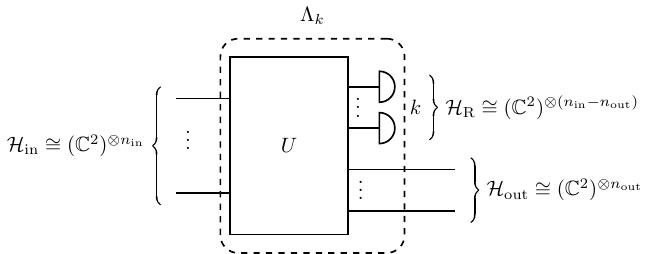}\vspace{0pt}
  \caption{Quantum instrument $\Lambda := \{\Lambda_k\}_k$ whose space requirements are analyzed in \Cref{thm:suff_qubit_reduction,thm:necc_qubit_reduction}. It is composed of an $n_\mathrm{in}$-qubit unitary operator $U: \mathcal{H}_\mathrm{in} \to \mathcal{H}_\mathrm{R} \otimes \mathcal{H}_\mathrm{out}$ followed by a computational basis measurement on $\mathcal{H}_\mathrm{R}$. Since $\Lambda$ is trivially $n_\mathrm{in}$-qubit implementable, \Cref{thm:suff_qubit_reduction,thm:necc_qubit_reduction} analyze the conditions under which $\Lambda$ is $(n_\mathrm{in}-T)$-qubit implementable for each $T \in \{0,1,\ldots,n_\mathrm{in}\}$.}
  \label{fig:target_instr}
\end{figure}

\begin{remark}\label{rem:equiv_def_target_instr}
  Up to this point, we have defined $\Lambda$ by giving an explicit form in \Cref{eq:target_instr}. Alternatively, $\Lambda$ can be characterized by the following equivalent conditions.

  Let $\Lambda := \{\Lambda_k : \mathcal{L}(\mathcal{H}_\mathrm{in}) \to \mathcal{L}(\mathcal{H}_\mathrm{out})\}_{k \in \Outk}$ be a quantum instrument with $\mathcal{H}_\mathrm{in} \cong (\mathbb{C}^2)^{\otimes n_\mathrm{in}}$ and $\mathcal{H}_\mathrm{out} \cong (\mathbb{C}^2)^{\otimes n_\mathrm{out}}$ for some $n_\mathrm{in}, n_\mathrm{out} \in \mathbb{N}$. The following are equivalent:
  \begin{enumerate}[label=(\alph*)]
    \item There exists a unitary operator $U :\mathcal{H}_\mathrm{in} \to \mathcal{H}_\mathrm{R} \otimes \mathcal{H}_\mathrm{out}$ such that $\Lambda_k(\rho) = \Tr_\mathrm{R} \qty[(\ketbra{k}_\mathrm{R} \otimes \mathbb{I}_\mathrm{out} ) U \rho \, U^\dagger]$ for all $\rho \in \mathcal{L}(\mathcal{H}_\mathrm{in})$.
    \item Each $\Lambda_k$ has Kraus rank~$1$, and the associated POVM $\AssocPOVM{\Lambda}$ is a projective measurement with $\operatorname{rank} \AssocPOVM{\Lambda}_k=2^{n_\mathrm{out}}$ for all $k \in \Outk$.
  \end{enumerate}

  The equivalence is proved as follows. The direction (a) $\Rightarrow$ (b) is straightforward by computing the Kraus ranks and the associated POVM. For (b) $\Rightarrow$ (a), since the associated POVM is a projective measurement and each element has rank $2^{n_\mathrm{out}}$, the number of outcomes is
  $
    |\Outk|=\dim \mathcal{H}_\mathrm{in}/2^{n_\mathrm{out}} = 2^{\,n_\mathrm{in}-n_\mathrm{out}}.
  $
  Hence there exists an isometry $U : \mathcal{H}_\mathrm{in} \to \mathcal{H}_\mathrm{R} \otimes \mathcal{H}_\mathrm{out}$ with
  $\mathcal{H}_\mathrm{R} \cong (\mathbb{C}^2)^{\otimes (n_\mathrm{in}-n_\mathrm{out})}$ such that
  $
    \Lambda_k(\rho) = \Tr_\mathrm{R}\!\big[(\ketbra{k}_\mathrm{R} \otimes \mathbb{I}_\mathrm{out})\, U \rho\, U^\dagger\big].
  $
  Because the input and output dimensions coincide, this isometry $U$ is in fact unitary.
\end{remark}

We now present sufficient conditions under which $\Lambda$ is $(n_\mathrm{in}-T)$-qubit implementable in terms of the POVM associated with $\Lambda$ and outcome no-signaling conditions. Any value of $(n_\mathrm{in}-T)$ satisfying these conditions, in particular the smallest such value, yields an upper bound on the number of qubits required to implement $\Lambda$.

\begin{theorem}\label{thm:suff_qubit_reduction}
  Let $\Lambda := \qty{\Lambda_k : \mathcal{L}(\mathcal{H}_\mathrm{in}) \to \mathcal{L}(\mathcal{H}_\mathrm{out})}_{k \in \Outk}$ be the quantum instrument defined by
  $\Lambda_k(\rho) := \Tr_\mathrm{R} \qty[(\ketbra{k}_\mathrm{R} \otimes \mathbb{I}_\mathrm{out} ) U \rho \, U^\dagger]$ for all $\rho \in \mathcal{L}(\mathcal{H}_\mathrm{in})$,
  where $U : \mathcal{H}_\mathrm{in} \to \mathcal{H}_\mathrm{R} \otimes \mathcal{H}_\mathrm{out}$ is a unitary operator and $\mathcal{H}_\mathrm{in} \cong (\mathbb{C}^2)^{\otimes n_\mathrm{in}},\mathcal{H}_\mathrm{out} \cong (\mathbb{C}^2)^{\otimes n_\mathrm{out}}$ for some $n_\mathrm{in}, n_\mathrm{out} \in \mathbb{N}$.

  Suppose there exist projective measurements $E^{(1)}, E^{(2)}, \cdots, E^{(T)}$ on $\mathcal{H}_\mathrm{in}$ such that
  \begin{itemize}
    \item The composability conditions hold: $E^{(1)} \lcircarrow E^{(2)} \lcircarrow \cdots \lcircarrow E^{(T)} \lcircarrow \AssocPOVM{\Lambda}$.
    \item Each $E^{(t)}$ satisfies the outcome no-signaling condition $\mathrm{A}_t \nrightarrow \mathrm{cl}$, where $\{\mathrm{A}_1, \mathrm{A}_2, \ldots, \mathrm{A}_T\}$ is an ordered subset of the input qubits.
    \item Each $E^{(t)} := \{ E^{(t)}_{k_t} \}_{k_t \in \Outk_t} $ is a projective measurement satisfying $\rank E^{(t)}_{k_t} = 2^{n_\mathrm{in} - t}$ for all $k_t \in \Outk_t$.
  \end{itemize}
  Then the quantum instrument $\Lambda$ is $(n_\mathrm{in}-T)$-qubit implementable (with delayed inputs).
\end{theorem}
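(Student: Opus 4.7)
My plan is to construct the implementation via the stairs form of \Cref{rem:stairs_wdi}, loading the input qubits in the order $A_{T+1}, A_{T+2}, \ldots, A_{n_\mathrm{in}}, A_1, A_2, \ldots, A_T$ (non-delayed first, delayed last). The first thing I would establish is a \emph{cumulative no-signaling property}: each $E^{(t)}$ satisfies the outcome no-signaling condition $A_s \nrightarrow \mathrm{cl}$ for every $s \in \{t, t+1, \ldots, T\}$, not merely $s = t$. I would prove this by downward induction on $t$, using that outcome no-signaling is preserved under classical post-processing (which only acts on classical outcomes) combined with the hypothesis $E^{(t)} \lcircarrow E^{(t+1)}$. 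As an immediate consequence (via \Cref{def:ons} and \Cref{rem:ons_povm}), each $E^{(t)}_{k_t}$ factorizes as $F^{(t)}_{k_t} \otimes \mathbb{I}_{A_t A_{t+1} \cdots A_T}$, where $F^{(t)}_{k_t}$ is a rank-$2^{m-1}$ projector on the $(m+t-1)$-qubit subsystem $A_1 \otimes \cdots \otimes A_{t-1} \otimes A_{T+1} \otimes \cdots \otimes A_{n_\mathrm{in}}$, with $m := n_\mathrm{in}-T$.

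Next, I would define the stairs-form instruments $\Gamma^{(1)}, \ldots, \Gamma^{(n_\mathrm{in}+1)}$. For each $t \in \{1, \ldots, m\}$, $\Gamma^{(t)}$ is the trivial one-outcome instrument that discards one ancilla (initialized to $\ket{0}$) so that the next non-delayed input can be loaded. For each $s \in \{1, \ldots, T\}$, $\Gamma^{(m+s)}$ realizes the conditional $2$-outcome refinement of $E^{(s-1)}$ to $E^{(s)}$, taking $E^{(0)}$ as the trivial one-outcome POVM. By the composability $E^{(s-1)} \lcircarrow E^{(s)}$ combined with the rank constraints, this refinement corresponds, conditional on prior outcomes $k_{<s}$, to a $2$-outcome projective POVM on $\mathcal{H}_\mathrm{m}$ with rank-$2^{m-1}$ elements, which is realizable as an $m$-qubit unitary $W_{s, k_{<s}}$ followed by a computational-basis measurement of one qubit, hence $m$-qubit implementable without delayed inputs by \Cref{assm:ops_without_delayed}. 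The cumulative no-signaling guarantees this refinement does not require the yet-to-be-loaded inputs $A_s, \ldots, A_T$. The final instrument $\Gamma^{(n_\mathrm{in}+1)}$ implements the refinement from $E^{(T)}$ to $\AssocPOVM{\Lambda}$: conditional on $k_{\leq T}$, it is a Kraus-rank-$1$ instrument realized by an $m$-qubit unitary followed by the measurement of $m - n_\mathrm{out}$ qubits, leaving the remaining $n_\mathrm{out}$ qubits as the output.

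To verify that the composition reproduces $\Lambda$, I would apply \Cref{lem:pp_cond_indecomp_instr} iteratively. Since $\Lambda$ has Kraus rank $1$ and projective associated POVM (\Cref{rem:equiv_def_target_instr}) and each L\"uders instrument of a projective $E^{(t)}$ enjoys the same two properties, the POVM-level composability chain $E^{(1)} \lcircarrow \cdots \lcircarrow E^{(T)} \lcircarrow \AssocPOVM{\Lambda}$ lifts to an instrument-level decomposition $\Lambda = \Psi^{(T+1)} \circ \cdots \circ \Psi^{(2)} \circ \Psi^{(1)}$, where $\Psi^{(1)}$ is the L\"uders instrument of $E^{(1)}$ and each $\Psi^{(s)}$ for $s \geq 2$ is a Kraus-rank-$1$ refinement instrument. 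By construction, each $\Gamma^{(m+s)}$ is designed to reproduce the $\mathcal{H}_\mathrm{m}$-supported part of $\Psi^{(s)}$ through the factorization $F^{(s)} \otimes \mathbb{I}$, while the interleaved input loadings supply the complementary identity factor at the correct round.

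The main technical obstacle will be choosing the unitaries $W_{s, k_{<s}}$ \emph{consistently} across rounds so that their composition with the input loadings indeed realizes the refinements $\Psi^{(s)}$. After round $m+s-1$, the state on $\mathcal{H}_\mathrm{m}$ lies in the image of $F^{(s-1)}_{k_{<s}}$ as further modified by all preceding unitaries, and the freshly loaded $A_{s-1}$ occupies the slot that was just measured. I must then choose $W_{s, k_{<s}}$ so that applying it and projecting onto qubit $1$ realizes $F^{(s)}_{k_s \mid k_{<s}}$ on the current state. Existence of such $W_{s, k_{<s}}$ is guaranteed because any $2$-outcome projective POVM with rank-$2^{m-1}$ elements on $m$ qubits is unitarily equivalent to the computational-basis measurement on qubit $1$; the delicate part is the bookkeeping of how the basis changes induced by earlier $W$'s propagate through the input-loading steps.
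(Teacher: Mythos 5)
Your construction is correct, and it reaches the conclusion by a genuinely different organization than the paper: you build the full staircase of \Cref{rem:stairs_wdi} forward, round by round, whereas the paper argues by induction on $T$, peeling off the last delayed input $\mathrm{A}_T$ via \Cref{lem:pp_cond_indecomp_instr} (to compose $\Lambda$ from an instrument with associated POVM $E^{(T)}$), \Cref{lem:ons_decomp_mindim} (to factor that instrument as $G_{k_T}\otimes\operatorname{id}_{\mathrm{A}_T}$ conjugated by a unitary), and \Cref{thm:range_m_space_inst_wodi} (to implement the residual block without delayed inputs), then invoking the induction hypothesis on $G$. The ingredients are shared — your ``cumulative no-signaling'' factorization $E^{(t)}_{k_t}=F^{(t)}_{k_t}\otimes\mathbb{I}_{\mathrm{A}_t\cdots\mathrm{A}_T}$ is exactly what the paper derives inside its induction step, and your observation that composability between projective measurements forces $0/1$ coefficients (so each refinement is a balanced two-outcome splitting) is the paper's step (d) — but your route trades the paper's black-box lemmas for explicit bookkeeping. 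That flagged ``main technical obstacle'' is closable with one invariant you should state: because every round is of the form ``$m$-qubit unitary, then project one qubit onto $\ket{k_s}$, then load one input,'' the accumulated conditional Kraus operator $M_{k_{<s}}$ from the already-loaded inputs to the current register is automatically a coisometry ($M M^\dagger=\mathbb{I}$) with support projector $M^\dagger M = F^{(s-1)}_{k_{<s}}\otimes\mathbb{I}_{\mathrm{A}_{s-1}}$; hence the sub-projectors $F^{(s)}_{k_{\le s}}$ pull forward to a rank-balanced binary projective measurement on the current $m$ qubits, and the choice of $W_{s,k_{<s}}$ at each round is unconstrained by later rounds. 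One further point your verification paragraph leaves implicit: matching $\Lambda$ as an instrument (not merely its outcome statistics) requires a final outcome-conditioned correction; since at the end $M_k^\dagger M_k=\AssocPOVM{\Lambda}_k=A_k^\dagger A_k$ with both sides Kraus rank~$1$, \Cref{lem:isometry_equivalence_of_kraus_op} supplies a unitary on the $n_\mathrm{out}\le m$ output qubits (conditioned on $k$, an allowed elementary operation), which completes the identification with $\Lambda_k$. With that invariant and correction made explicit, your forward construction is a complete proof; what the paper's induction buys is precisely that this bookkeeping is absorbed into \Cref{lem:ons_decomp_mindim} and \Cref{thm:range_m_space_inst_wodi}, while your version makes the actual circuit more concrete.
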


\begin{proof}[Proof Sketch]
  The full proof is given in \Cref{app:pf_thm_suff_qubit_reduction}; here, we provide a proof sketch.

  For convenience, set $m := n_\mathrm{in} - T$. We prove the claim by induction on $T$.

\emph{Base case $T=0$.}
Here $m = n_\mathrm{in}$. By definition, $\Lambda$ is $n_\mathrm{in}$-qubit implementable (with delayed inputs), so the base case holds.

\emph{Induction step.}
Assume the theorem holds for $T-1$ (induction hypothesis). In the induction step, we obtain the decomposition shown in \Cref{fig:induction_step_decomp}:
\begin{align}
  \Lambda_k \;=\; \sum_{k_T}\Psi_{k\mid k_T}\circ\bigl(G_{k_T}\otimes \mathrm{id}_{A_T}\bigr),
  \label{eq:lambda_decomp_induction}
\end{align}
with the components described below.

\begin{figure}[H]
  \centering
  \includegraphics[width=0.7\linewidth]{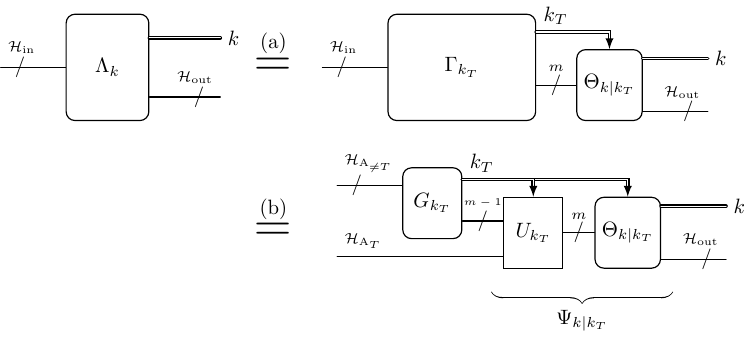}
  \caption{Decomposition of the instrument $\{\Lambda_k\}_k$ used in the induction step.
  The first equality comes from composability of $\{\Lambda_k\}_k$ from $\{\Gamma_{k_T}\}_{k_T}$, as explained in part (a).
  The second uses the outcome no-signaling condition $\mathrm A_T\nrightarrow\mathrm{cl}$ for $\{\Gamma_{k_T}\}_{k_T}$, as explained in part (b).
  Once $\{G_{k_T}\}_{k_T}$ and $\{\Psi_{k\mid k_T}\}_k$ are shown to be $m$-qubit implementable (with and without delayed inputs, respectively), it follows that $\{\Lambda_k\}_k$ is $m$-qubit implementable (with delayed inputs).}
  \label{fig:induction_step_decomp}
\end{figure}

First, define an instrument $\Gamma := \{\Gamma_{k_T}\}_{k_T\in\Outk_T}$ such that each $\Gamma_{k_T}$ has Kraus rank~1 and $\AssocPOVM{\Gamma}=E^{(T)}$. Because the associated POVM coincides, $\Gamma$ satisfies $\mathrm A_T\nrightarrow\mathrm{cl}$ by \Cref{rem:ons_povm}. By the Kraus rank condition and Lemma~\ref{lem:output_dim_indecomp}, the output system of $\Gamma$ can be taken to have $m$ qubits.

(a) Composability:
From the composability for the associated POVMs $\AssocPOVM{\Gamma}\lcircarrow \AssocPOVM{\Lambda}$ and \Cref{lem:pp_cond_indecomp_instr}, there exists an instrument $\Theta_{\mid k_T}=\{\Theta_{k\mid k_T}\}_k$ such that
\begin{align}
  \Lambda_k=\sum_{k_T}\Theta_{k\mid k_T}\circ \Gamma_{k_T}.
\end{align}

(b) Outcome no-signaling:
By \Cref{lem:ons_decomp_mindim}, the outcome no-signaling condition $\mathrm A_T\nrightarrow\mathrm{cl}$ yields
\begin{align}
  \Gamma_{k_T} \;=\; W_{k_T}\bigl(G_{k_T}\otimes \mathrm{id}_{A_T}\bigr)W_{k_T}^\dagger,
\end{align}
for some instrument $G=\{G_{k_T}\}_{k_T}$ and unitaries $W_{k_T}$. Define $\Psi_{k\mid k_T}(\sigma):=\Theta_{k\mid k_T}\!\bigl(W_{k_T}\sigma W_{k_T}^\dagger\bigr)$; then \Cref{eq:lambda_decomp_induction} holds.

From the original POVMs $E^{(1)},\ldots,E^{(T-1)}$, we can construct $(T-1)$ POVMs that satisfy the hypotheses of the theorem for $G$. Therefore, applying the induction hypothesis, $G$ is $m$-qubit implementable (with delayed inputs).
Also, from the rank conditions for $\AssocPOVM{\Gamma}$ and $\AssocPOVM{\Lambda}$ and by \Cref{thm:range_m_space_inst_wodi}, each $\Psi_{\mid k_T}$ is $m$-qubit implementable (without delayed inputs).

Combining these facts, \Cref{eq:lambda_decomp_induction} is the form \Cref{eq:stairs_wdi_1}; therefore $\Lambda$ is $m$-qubit implementable (with delayed inputs).
\end{proof}

We next present necessary conditions that any $(n_\mathrm{in}-T)$-qubit implementable instrument of $\Lambda$ must satisfy, stated in terms of the POVM associated with $\Lambda$ and outcome no-signaling conditions.
Any value of $(n_\mathrm{in}-T)$ failing to satisfy these conditions, in particular the largest such value, yields a lower bound on the number of qubits required to implement $\Lambda$.
The main difference from \Cref{thm:suff_qubit_reduction} is that $E^{(t)}$ need not be projective measurements; general POVMs are allowed, and the rank equalities are replaced with inequalities.

\begin{theorem}\label{thm:necc_qubit_reduction}
  Let $\Lambda := \qty{\Lambda_k : \mathcal{L}(\mathcal{H}_\mathrm{in}) \to \mathcal{L}(\mathcal{H}_\mathrm{out})}_{k \in \Outk}$ be the quantum instrument defined by
  $\Lambda_k(\rho) := \Tr_\mathrm{R} \qty[(\ketbra{k}_\mathrm{R} \otimes \mathbb{I}_\mathrm{out} ) U \rho \, U^\dagger]$ for all $\rho \in \mathcal{L}(\mathcal{H}_\mathrm{in})$,
  where $U : \mathcal{H}_\mathrm{in} \to \mathcal{H}_\mathrm{R} \otimes \mathcal{H}_\mathrm{out}$ be a unitary operator and $\mathcal{H}_\mathrm{in} \cong (\mathbb{C}^2)^{\otimes n_\mathrm{in}},\mathcal{H}_\mathrm{out} \cong (\mathbb{C}^2)^{\otimes n_\mathrm{out}}$ for some $n_\mathrm{in}, n_\mathrm{out} \in \mathbb{N}$.

  If the quantum instrument $\Lambda$ is an $(n_\mathrm{in}-T)$-qubit implementable instrument (with delayed inputs), then there exist POVMs $E^{(1)}, E^{(2)}, \cdots, E^{(T)}$ on $\mathcal{H}_\mathrm{in}$ such that
  \begin{itemize}
    \item The composability conditions hold: $E^{(1)} \lcircarrow E^{(2)} \lcircarrow \cdots \lcircarrow E^{(T)} \lcircarrow \AssocPOVM{\Lambda}$.
    \item Each $E^{(t)}$ satisfies the outcome no-signaling condition $\mathrm{A}_t \nrightarrow \mathrm{cl}$, where $\{\mathrm{A}_1, \mathrm{A}_2, \ldots, \mathrm{A}_T\}$ is an ordered subset of the input qubits.
    \item Each $E^{(t)} := \{ E^{(t)}_{k_t} \}_{k_t \in \Outk_t} $ is a POVM satisfying $\rank E^{(t)}_{k_t} \leq 2^{n_\mathrm{in} - t}$ for all $k_t \in \Outk_t$.
  \end{itemize}
\end{theorem}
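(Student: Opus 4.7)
My plan is to invoke \Cref{rem:stairs_wdi} with $m := n_\mathrm{in} - T$ to obtain a stairs decomposition of $\Lambda$, then read off the required POVMs from cumulative stages. Writing $\widetilde{\Gamma}^{(s)}_{k_s|k_{s-1}} = \Gamma^{(s)}_{k_s|k_{s-1}} \otimes \operatorname{id}_{\mathrm{A}_s, \ldots, \mathrm{A}_{n_\mathrm{in}}}$ for $s \le n_\mathrm{in}$, let $F^{(s)}$ denote the POVM on $\mathcal{H}_\mathrm{in}$ whose element indexed by the history $k_s = (k_1, \ldots, k_s)$ gives the probability of that history after the first $s$ rounds. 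I take $E^{(t)} := F^{(m+t)}$ and $\mathrm{A}_t := \mathrm{A}_{m+t}$ for $t \in \{1, \ldots, T\}$, which identifies the theorem's ordered subset with the last $T$ qubits in the stairs loading order. Without loss of generality I assume the stairs has no non-injective intermediate classical processing, since any such processing can be deferred to the final round by propagating the un-coarsened $k_s$ through all subsequent conditional controls.

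The composability chain $E^{(1)} \lcircarrow \cdots \lcircarrow E^{(T)}$ and the no-signaling condition both follow from the factored form of $\widetilde{\Gamma}^{(s)}$. Under the WLOG assumption, every intermediate round either appends classical bits (measurement/loading) or is a unitary, so marginalizing the bits added at round $s+1$ returns $F^{(s)}$ by trace preservation, giving $F^{(s)} \lcircarrow F^{(s+1)}$. Similarly, because $\operatorname{id}_{\mathrm{A}_s, \ldots, \mathrm{A}_{n_\mathrm{in}}}$ propagates through composition, each $F^{(s)}_{k_s}$ factors as $\widetilde F^{(s)}_{k_s} \otimes \mathbb{I}_{\mathrm{A}_s, \ldots, \mathrm{A}_{n_\mathrm{in}}}$, so $E^{(t)}$ satisfies $\mathrm{A}_t \nrightarrow \mathrm{cl}$. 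The last link $E^{(T)} \lcircarrow \AssocPOVM{\Lambda}$ requires the Kraus-rank-one structure of $\Lambda_k$ (via \Cref{rem:equiv_def_target_instr}): the isometric freedom of Kraus representations forces every history Kraus operator $K'_h$ producing final outcome $k = g(h)$ (where $g$ is the deferred classical processing) to satisfy $K'_h = \alpha_{h,k} K_k$ with $\sum_{h : g(h)=k} |\alpha_{h,k}|^2 = 1$. This yields $\Pr[h|\rho] = |\alpha_{h,k}|^2 \Tr[\AssocPOVM{\Lambda}_k \rho]$, and summing over histories with round-$(m+T)$ outcome $k_T$ gives $F^{(m+T)}_{k_T} = \sum_k \nu_{k_T, k} \AssocPOVM{\Lambda}_k$ with a column-stochastic $\nu$, as required.

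The rank bound is the main technical step. Each elementary operation in \Cref{def:m_space_instr_wdi} has Kraus rank one for each classical outcome, so the cumulative CP map $\Omega^{(s)}_{k_s}$ defining $F^{(s)}_{k_s}$ admits a single Kraus operator $K_{k_s}: \mathcal{H}_\mathrm{in} \to \mathcal{H}^{(s)}$. For $s \le n_\mathrm{in}$, the post-round-$s$ space $\mathcal{H}^{(s)}$ consists of the $m-1$ memory qubits output by $\Gamma^{(s)}$ together with the $n_\mathrm{in} - s + 1$ still-unloaded input qubits, giving $\dim \mathcal{H}^{(s)} = 2^{m + n_\mathrm{in} - s}$. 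Therefore $\rank F^{(s)}_{k_s} = \rank(K_{k_s}^\dagger K_{k_s}) \le 2^{m + n_\mathrm{in} - s}$, which at $s = m + t$ becomes $\rank E^{(t)}_{k_t} \le 2^{n_\mathrm{in} - t}$. The main obstacle is carefully tracking the Kraus-rank-one structure through the stairs, which is needed both for the last composability link with $\AssocPOVM{\Lambda}$ and for the rank bound; the remaining conditions reduce to routine verification using the factorized form of $\widetilde{\Gamma}^{(s)}$.
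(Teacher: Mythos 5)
Your proposal follows essentially the same route as the paper's proof: defer all intermediate classical processing to the final round, apply the stairs decomposition of \Cref{rem:stairs_wdi} with $m=n_\mathrm{in}-T$, take $E^{(t)}$ to be the accumulated history POVMs at rounds $m+t$ (whose Kraus-rank-one branches give both the rank bound via the dimension of the intermediate system and the final composability link with $\AssocPOVM{\Lambda}$, which the paper packages as \Cref{lem:nec_cond_pp_indecomp_instr}), with the no-signaling conditions read off from the tensored identities on the unloaded qubits. The only slight imprecision is that, because of the final partial trace over $(\mathbb{C}^2)^{\otimes(m-n_\mathrm{out})}$, a history branch need not have a single Kraus operator $K'_h$ into $\mathcal{H}_\mathrm{out}$; rather each Kraus operator $(\mathbb{I}_\mathrm{out}\otimes\bra{j})K'_h$ of that branch is proportional to the Kraus operator of $\Lambda_{g(h)}$, which yields the same proportionality and column-stochasticity conclusion.
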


\begin{proof}[Proof Sketch]
The full proof is given in \Cref{app:pf_thm_necc_qubit_reduction}; here we provide a sketch.

Since classical processing specified by a function $f$ can be viewed as either a relabeling (if $f$ is injective) or a grouping (if $f$ is not injective) of outcomes, any $m$-qubit implementable instrument (with delayed inputs) can, without loss of generality, be written so that the elementary classical-processing operation appears only once, at the final round of the sequence of elementary operations.

Assume $\Lambda$ is $(n_\mathrm{in}-T)$-qubit implementable (with delayed inputs). By \Cref{rem:stairs_wdi} and the above argument, $\Lambda$ admits the expression depicted in \Cref{fig:necc_stairs_wdi}.
Here, for each $t$, $\Gamma^{(t)}_{\mid k_{t-1}}:=\{\Gamma^{(t)}_{k_t\mid k_{t-1}}\}_{k_t\in\Outk_t}$ is $(n_\mathrm{in}-T)$-qubit implementable (without delayed inputs) and contains no elementary classical-processing operation.

\begin{figure}[H]
\centering
\includegraphics[width=0.99\linewidth]{./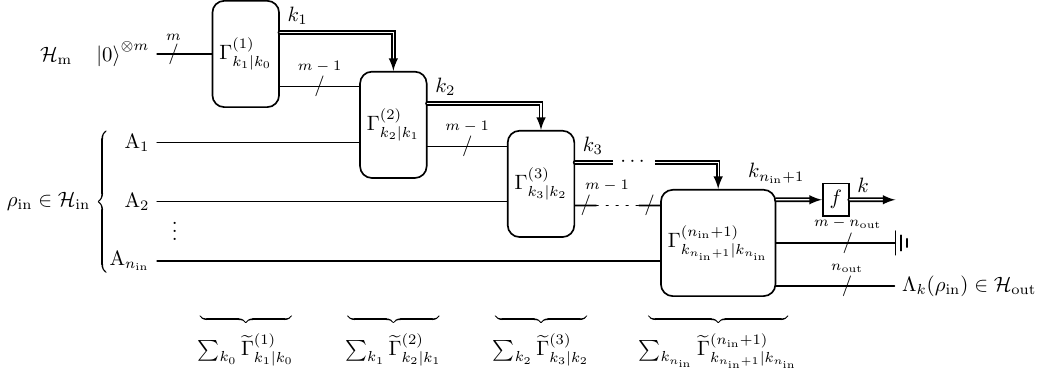}
\caption{An implementation of $\{\Lambda_k\}_k$ as an $(n_\mathrm{in}-T)$-qubit implementable instrument (with delayed inputs). Each $\{\Gamma^{(t)}_{k_t \mid k_{t-1}}\}_{k_t}$ is $(n_\mathrm{in}-T)$-qubit implementable (without delayed inputs) and contains no classical-processing operation. For brevity, we may write $m := n_\mathrm{in}-T$ in this figure.}
\label{fig:necc_stairs_wdi}
\end{figure}

For each $t\in [n_\mathrm{in}]$, define the accumulated instrument $\Xi^{(t)} := \{\Xi^{(t)}_{k_t}\}_{k_t\in\Outk_t}$ by
\begin{align}
  \Xi^{(t)}_{k_t}
  \;:=\;
  \sum_{k_0,\ldots,k_{t-1}}
  \widetilde{\Gamma}^{(t)}_{k_t\mid k_{t-1}}\circ \cdots \circ \widetilde{\Gamma}^{(1)}_{k_1\mid k_0}
  \qquad k_t\in\Outk_t.
  \label{eq:def_Xi}
\end{align}
The associated POVMs $\AssocPOVM{\Xi^{(t)}}$ for $t=n_\mathrm{in}-T+1,\ldots,n_\mathrm{in}$ satisfy all the required conditions in the theorem statement.
Composability follows from the instrument-level composability conditions $\Xi^{(n_\mathrm{in}-T+1)} \rcircarrow \cdots \rcircarrow \Xi^{(n_\mathrm{in})} \rcircarrow \Lambda $ together with \Cref{lem:pp_cond_indecomp_instr}, and the outcome no-signaling conditions follow from the same conditions on $\Xi^{(t)}$ plus \Cref{rem:ons_povm}.
Moreover, since no $\Gamma^{(t)}_{\mid k_{t-1}}$ contains classical processing, each $\Xi^{(t)}_{k_t}$ has Kraus rank~$1$; hence, by \Cref{lem:output_dim_indecomp}, the stated rank bound holds.

\end{proof}

\section{Application to Entanglement Distillation Protocols}\label{sec:ent_distill}
The primary application of \Cref{thm:suff_qubit_reduction,thm:necc_qubit_reduction} is entanglement distillation protocols~\cite{bennett1996purification, bennett1996mixedstateentanglement, matsumoto2003}.
Entanglement distillation protocols are bipartite protocols that transform multiple noisy Bell pairs into a smaller number of less-noisy Bell pairs by LOCC.
They are utilized to create high-fidelity entanglement between distant quantum processors in distributed quantum computing settings.
The local operations performed at each party in entanglement distillation protocols are a pertinent application of \Cref{thm:suff_qubit_reduction,thm:necc_qubit_reduction} for the following two reasons:
(i) Because the input to local operations of entanglement distillation protocols is a tensor product across qubits, the assumption that the inputs can be prepared sequentially is readily satisfied.
(ii) When these protocols are employed in distributed quantum computing settings, analyzing space requirements is crucial, since each quantum processor typically has a limited number of qubits.

In particular, we focus on entanglement distillation protocols based on stabilizer codes~\cite{matsumoto2003}.
From any $[[n,k]]$ stabilizer code, one can construct an entanglement distillation protocol that takes $n$ noisy Bell pairs as input and distills $k$ less-noisy Bell pairs.
In this protocol, each bipartite party performs a quantum instrument $\Lambda^{\mathrm{dist}} := \{\Lambda^{\mathrm{dist}}_s : \mathcal{L}(\mathcal{H}_\mathrm{in}) \to \mathcal{L}(\mathcal{H}_\mathrm{out})\}_{s \in \mathbb{F}_2^{n-k}}$ defined as
\begin{align}
  \Lambda^{\mathrm{dist}}_s(\rho) := \Tr_\mathrm{R}\!\qty[( \ketbra{s}_\mathrm{R} \otimes \mathbb{I}_\mathrm{out} )\, U_{\mathrm{enc}}^\dagger \rho \, U_{\mathrm{enc}}]
  \qquad \forall \rho \in \mathcal{L} (\mathcal{H}_\mathrm{in}), \label{eq:target_instr_ent}
\end{align}
where $\mathcal{H}_\mathrm{in} \cong (\mathbb{C}^2)^{\otimes n}, \mathcal{H}_\mathrm{out} \cong (\mathbb{C}^2)^{\otimes k}$, and $\mathcal{H}_\mathrm{R} \cong (\mathbb{C}^2)^{\otimes (n-k)}$.
Here, $U_{\mathrm{enc}}: \mathcal{H}_\mathrm{in} \to \mathcal{H}_\mathrm{R} \otimes \mathcal{H}_\mathrm{out} $ is the encoding unitary of the underlying stabilizer code, which satisfies
\begin{align}
  U_{\mathrm{enc}} \big(  Z_i \otimes \mathbb{I}_\mathrm{out} \big) U_{\mathrm{enc}}^\dagger =  g_i  \qquad \forall i \in \{1, \cdots, n-k\},
\end{align}
where $Z_i$ is the Pauli-$Z$ operator acting on the $i$-th qubit of $\mathcal{H}_\mathrm{R}$.
The measurement outcome $s$ corresponds to the error syndrome of the stabilizer code. After applying the instrument $\Lambda^{\mathrm{dist}}$, the parties communicate their measurement outcomes and perform a recovery operation according to the combined error syndrome.
For an $[[n,k]]$ stabilizer code, because $U_{\mathrm{enc}}$ acts on an $n$-qubit system, an implementation of $\Lambda^{\mathrm{dist}}$ without any space-reduction techniques requires $n$ qubits.

As \Cref{thm:no_gap_ent_distill} below states, for the quantum instrument $\Lambda^{\mathrm{dist}}$ defined above, the necessary conditions in \Cref{thm:necc_qubit_reduction} are also sufficient, and hence the smallest $(n-T)$ satisfying the conditions in \Cref{thm:necc_qubit_reduction} gives the optimal number of qubits for implementing $\Lambda^{\mathrm{dist}}$.
\begin{theorem} \label{thm:no_gap_ent_distill}
  Let $\mathcal{C}$ be an $[[n,k]]$ stabilizer code with stabilizer generators $\{g_1, \ldots, g_{n-k}\}$ and let $U_{\mathrm{enc}}: \mathcal{H}_\mathrm{in} \to \mathcal{H}_\mathrm{R} \otimes \mathcal{H}_\mathrm{out}$ be the encoding unitary of $\mathcal{C}$ satisfying $U_{\mathrm{enc}} \big(  Z_i \otimes \mathbb{I}_\mathrm{out} \big) U_{\mathrm{enc}}^\dagger = g_i$ for all $i \in \{1, \cdots, n-k\}$, where $\mathcal{H}_\mathrm{in} \cong (\mathbb{C}^2)^{\otimes n}$, $\mathcal{H}_\mathrm{out} \cong (\mathbb{C}^2)^{\otimes k}$, and $\mathcal{H}_\mathrm{R} \cong (\mathbb{C}^2)^{\otimes (n-k)}$.
  Let $\Lambda^{\mathrm{dist}} := \{\Lambda^{\mathrm{dist}}_s : \mathcal{L}(\mathcal{H}_\mathrm{in}) \to \mathcal{L}(\mathcal{H}_\mathrm{out})\}_{s \in \mathbb{F}_2^{n-k}}$ be the quantum instrument defined by
  \begin{align}
   \Lambda_s^{\mathrm{dist}}(\rho) := \Tr_\mathrm{R} \qty[( \ketbra{s}_\mathrm{R} \otimes \mathbb{I}_\mathrm{out} )\, U_{\mathrm{enc}}^\dagger \rho \, U_{\mathrm{enc}}]
   \qquad \forall \rho \in \mathcal{L}(\mathcal{H}_\mathrm{in}), \quad s \in \mathbb{F}_2^{n-k}.
  \end{align}

  For any fixed $T \in \{1, 2, \ldots, n\}$, if there exist POVMs $E^{(t)}$ for $t = 1, 2, \cdots, T$ that satisfy the conditions for $\Lambda^{\mathrm{dist}}$ stated in \Cref{thm:necc_qubit_reduction}, then there exist projective measurements $P^{(t)}$ for $t = 1, 2, \cdots, T$ that satisfy the conditions for $\Lambda^{\mathrm{dist}}$ stated in \Cref{thm:suff_qubit_reduction}.
\end{theorem}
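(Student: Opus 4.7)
The plan is to reduce the theorem to a combinatorial question about subspaces of $\mathbb{F}_2^{n-k}$ associated with the stabilizer code. The key structural fact is that $\AssocPOVM{\Lambda^{\mathrm{dist}}}=\{P_s\}_{s\in\mathbb{F}_2^{n-k}}$ is the syndrome projective measurement, whose rank-$2^k$ projectors are mutually orthogonal. Consequently, any POVM $E$ composable from $\AssocPOVM{\Lambda^{\mathrm{dist}}}$ admits an expansion $E_k=\sum_s\nu_{k,s}\,P_s$ with $\nu$ column-stochastic, and by orthogonality its rank equals $\lvert\mathrm{supp}\,\nu_k\rvert\cdot 2^k$.

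First, I would analyze each $E^{(t)}_{k_t}=\sum_s\nu^{(t)}_{k_t,s}P_s$ using the Pauli expansion $P_s=2^{-(n-k)}\sum_{g\in S}\chi_s(g)\,g$, where $S$ denotes the stabilizer group and $\chi_s$ its characters. Let $S_t\subseteq S$ be the subgroup of elements acting as $I$ on $\mathrm{A}_t$. By \Cref{rem:ons_povm}, the outcome no-signaling condition $\mathrm{A}_t\nrightarrow\mathrm{cl}$ forces $E^{(t)}_{k_t}$ to commute with every single-qubit Pauli on $\mathrm{A}_t$, which causes the coefficients $\sum_s\nu^{(t)}_{k_t,s}\chi_s(g)$ to vanish for $g\in S\setminus S_t$. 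By Fourier duality on $\mathbb{F}_2^{n-k}$, this forces $\nu^{(t)}_{k_t,\cdot}$ to be constant on cosets of the annihilator $S_t^{\perp}\subseteq\hat S\cong\mathbb{F}_2^{n-k}$, so $\mathrm{supp}\,\nu^{(t)}_{k_t,\cdot}$ is a nonempty union of $S_t^{\perp}$-cosets. Combining this with the rank bound $\lvert\mathrm{supp}\,\nu^{(t)}_{k_t,\cdot}\rvert\le 2^{n-t-k}$ yields the dimensional inequality $\dim S_t^{\perp}\le n-t-k$ for every $t\in\{1,\ldots,T\}$.

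Next, I would show that any nested chain of subspaces $H_T\subseteq H_{T-1}\subseteq\cdots\subseteq H_1\subseteq\hat S$ with $\dim H_t=n-t-k$ and $H_t\supseteq S_t^{\perp}$ produces projective measurements with all required properties via the coset construction
\begin{equation*}
P^{(t)}_{k_t}\;:=\;\sum_{s\in k_t+H_t}P_s,
\end{equation*}
where $k_t$ ranges over cosets of $H_t$. The rank equality $\rank P^{(t)}_{k_t}=2^{n-t}$ follows from orthogonality of the $P_s$'s; the composability chain $P^{(1)}\lcircarrow\cdots\lcircarrow P^{(T)}\lcircarrow\AssocPOVM{\Lambda^{\mathrm{dist}}}$ follows from nesting of the $H_t$'s (finer cosets refine coarser ones); and a second application of Fourier duality shows that $H_t\supseteq S_t^{\perp}$ is equivalent to $\mathrm{A}_t\nrightarrow\mathrm{cl}$ holding for $P^{(t)}$.

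The main obstacle is producing a single nested chain $H_T\subseteq\cdots\subseteq H_1$ that simultaneously satisfies the $T$ containment requirements $H_t\supseteq S_t^{\perp}$. The critical structural input is that $\dim S_t^{\perp}\in\{0,1\}$: since the restriction of $S$ to the single-qubit Pauli group on $\mathrm{A}_t$ has image of order $1$ or $2$, the subgroup $S_t$ has codimension at most $1$ in $S$. I would build the chain by induction starting from $H_T$: an initial $H_T\supseteq S_T^{\perp}$ of dimension $n-T-k$ exists by Step~1; given $H_t$, one sets $H_{t-1}$ to be any $(\dim H_t+1)$-dimensional subspace containing both $H_t$ and $S_{t-1}^{\perp}$, which exists because $\dim(H_t+S_{t-1}^{\perp})\le\dim H_t+\dim S_{t-1}^{\perp}\le n-t-k+1$ and any shortfall can be filled by extending by an arbitrary vector in $\hat S\setminus H_t$ (available since $\dim H_t<n-k$). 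This completes the construction of the projective measurements $P^{(t)}$ and hence the theorem.
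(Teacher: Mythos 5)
Your construction of the projective measurements (summing syndrome projectors over cosets of a nested chain of subspaces, with containment of the per-qubit annihilators guaranteeing the no-signaling conditions) is essentially the paper's construction, but the step where you establish that such a nested chain exists has a genuine gap. The structural claim you rely on, $\dim S_t^{\perp}\in\{0,1\}$, is false: the restriction of a stabilizer group to a single qubit can be the full single-qubit Pauli group (order $4$), so $S_t^{\perp}=\operatorname{span}\{x_t,z_t\}$ can have dimension $2$; this happens for every qubit of the $[[5,1,3]]$ code, one of the paper's main examples. With $\dim S_{t-1}^{\perp}=2$ and $S_{t-1}^{\perp}\cap H_t=\{0\}$, your inductive step $\dim(H_t+S_{t-1}^{\perp})\le\dim H_t+1$ fails, so the chain cannot be completed by your argument.

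The deeper problem is that the per-qubit inequalities $\dim S_t^{\perp}\le n-k-t$, which are all you extract from the hypotheses, are too weak to imply the chain's existence: a nested chain with $\dim H_t=n-k-t$ and $H_t\supseteq S_t^{\perp}$ forces $H_t\supseteq S_t^{\perp}+S_{t+1}^{\perp}+\cdots+S_T^{\perp}$, so you actually need the joint bound $\dim\bigl(\operatorname{span}\{x_\tau,z_\tau:\tau=t,\ldots,T\}\bigr)\le n-k-t$ for every $t$, and this does not follow from the single-qubit bounds (e.g., $n-k=3$, $\dim S_1^{\perp}=2$, $\dim S_2^{\perp}=1$, with the two spans jointly of dimension $3$ satisfies your conditions but admits no chain). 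The paper closes exactly this gap by using the composability chain $E^{(t)}\lcircarrow E^{(t+1)}\lcircarrow\cdots\lcircarrow E^{(T)}$: composability through column-stochastic mixing transfers the outcome no-signaling conditions $\mathrm{A}_\tau\nrightarrow\mathrm{cl}$ for all $\tau\ge t$ onto $E^{(t)}$, so each row of $\nu^{(t)}$ is constant on cosets of $\Outl_{[t,T]}^{\perp}=\operatorname{span}\{x_\tau,z_\tau:\tau\ge t\}$, and the support bound $2^{\,n-k-t}$ then yields the joint dimension bound. Your derivation of the dimension constraints never invokes the composability hypothesis, and that is precisely the missing ingredient; once you add it, the chain can be built exactly as in the paper (scanning $x_T,z_T,\ldots,x_1,z_1$ and padding with arbitrary independent vectors), and the rest of your argument goes through.
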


\begin{remark}
The definition of space-constrained implementability (\Cref{def:m_space_instr_wdi}) requires that, for \emph{every} input state $\rho \in \mathcal{L}(\mathcal{H}_\mathrm{in})$, the outputs $\Lambda_k(\rho)$ be realizable within the given space constraint. In entanglement distillation protocols, however, the input states are restricted to tensor products of $n$ noisy Bell pairs, so we only need to implement $\Lambda^{\mathrm{dist}}_{s}(\rho)$ on this restricted set of states. This restriction may permit smaller space requirements than lower bounds inferred from \Cref{thm:necc_qubit_reduction}.
\end{remark}

\begin{proof}[Proof Sketch]
  The full proof is given in \Cref{app:pf_thm_no_gap_ent_distill}; here, we provide a proof sketch.

Write $\Lambda := \Lambda^{\mathrm{dist}}$ for brevity.
Assume there exist POVMs $E^{(t)} := \{E^{(t)}_{s_t}\}_{s_t \in \Outs_t}$ for $t=1,\ldots,T$ satisfying \Cref{thm:necc_qubit_reduction}.
By composability, there are column-stochastic matrices $\nu^{(t)} := (\nu^{(t)}_{s_t,s})_{s_t,s}$ with
\begin{align}
  E^{(t)}_{s_t} \;=\; \sum_{s} \nu^{(t)}_{s_t, s}\, \AssocPOVM{\Lambda}_s .
  \label{eq:def_column_ent}
\end{align}
Since $\AssocPOVM{\Lambda}$ is a projective measurement and each element has rank $2^k$, the rank bound for $E^{(t)}$ is equivalent to
\begin{align}
  \text{each row of $\nu^{(t)}$ has at most $2^{\,n-k-t}$ nonzero entries.}
  \label{eq:nonzero_count}
\end{align}

From the property of the encoding unitary, the associated POVM with $\Lambda$ can be expanded in terms of the generators:
\begin{equation}\label{eq:Ps-Fourier}
  \AssocPOVM{\Lambda}_s \;=\; \frac{1}{2^{n-k}}\sum_{r\in\mathbb F_2^{n-k}}(-1)^{s\cdot r}\, g^r,
  \qquad g^r:=\prod_{i=1}^{n-k} g_i^{\,r_{(i)}}.
\end{equation}
Applying \Cref{lem:fourier_support}, the outcome no-signaling constraints for $E^{(t)}$ are equivalent to the following
\emph{coset-constancy} of $\nu^{(t)}$: for each $s_t\in\Outs_t$,
\begin{align}
  \nu^{(t)}_{s_t,(s+\ell)} \;=\; \nu^{(t)}_{s_t,s}
  \quad \forall\, s\in\mathbb F_2^{n-k},\ \forall\, \ell \in \Outl_{[t,T]}^\perp,
  \label{eq:coset_constancy}
\end{align}
where
\begin{align}
  \Outl_{[t,T]}^\perp \;=\; \operatorname{span}\{x_\tau,z_\tau:\ \tau=t,\ldots,T\},
\end{align}
and $x_\tau,z_\tau\in\mathbb F_2^{n-k}$ are the $\tau$-th binary columns of the check matrix.
In words, each row of $\nu^{(t)}$ has the same entries in every coset of $\Outl_{[t,T]}^\perp$.

From \Cref{eq:nonzero_count} and \Cref{eq:coset_constancy} we deduce
\begin{align}
  \dim \Outl_{[t,T]}^\perp \;\le\; n-k-t \qquad \text{for each } t\in[T],
\end{align}
because, under coset constancy, the size of a coset must be no greater than the upper bound on the number of nonzero entries.

Once this dimension condition holds, we can construct new $0/1$ column-stochastic matrices $\mu^{(t)}$ for $t = 1,\cdots,T$ such that each row satisfies the coset-constancy condition, satisfying the same conditions as \Cref{eq:coset_constancy} and has exactly $2^{n-k-t}$ ones. These matrices yield projective measurements by $P^{(t)}_{s'_t} = \sum_s \mu^{(t)}_{s'_t, s} \AssocPOVM{\Lambda}_s$ required in \Cref{thm:suff_qubit_reduction}.
\end{proof}

For several well-known stabilizer codes, we compute the optimal space requirements for the corresponding instruments $\Lambda^{\mathrm{dist}}$ as the smallest $n-T$ such that $\Lambda^{\mathrm{dist}}$ is $(n-T)$-qubit implementable (with delayed inputs).
The results are summarized in \Cref{tab:ent_distill}.

\begin{table}[ht]
\centering
\caption{Optimal number of qubits for implementing the entanglement distillation instrument $\Lambda^{\mathrm{dist}}$ for several stabilizer codes.}
\label{tab:ent_distill}
\renewcommand{\arraystretch}{1.2}
\begin{tabular}{@{}c@{\hspace{10mm}}c@{}}
\hline\hline
\makecell{Underlying stabilizer code} &
\makecell{Optimal qubit requirements} \\
\hline
$[[5,1,3]]$ code~\cite{Laflamme1996perfectqec,nielsen2000}       & 4 \\
$[[7,1,3]]$ Steane code~\cite{Steane1996errorcorrectingcodes,nielsen2000} & 4 \\
$[[9,1,3]]$ Shor code~\cite{Shor1995decoherence,nielsen2000}   & 3 \\
\hline\hline
\end{tabular}
\end{table}

\newpage
\printbibliography

\clearpage
\addtocontents{toc}{\protect\setcounter{tocdepth}{1}}
\appendix
\section{Technical lemmas}

\subsection{\Cref{lem:isometry_equivalence_of_kraus_op}}

\begin{lemma} \label{lem:isometry_equivalence_of_kraus_op}
  Let $\mathcal{H}_1,\mathcal{H}_2,\mathcal{H}_3$ be finite-dimensional Hilbert spaces.
  If operators $A : \mathcal{H}_1 \to \mathcal{H}_2$ and $B : \mathcal{H}_1 \to \mathcal{H}_3$ satisfy $A^\dagger A = B^\dagger B$, then there exists a unitary operator $V : \operatorname{Ran}(A) \to \operatorname{Ran}(B)$ such that $B = V A$.
  Especially, if $ \dim \mathcal{H}_2 = \dim \mathcal{H}_3$, there exists a unitary operator $U: \mathcal{H}_2 \to \mathcal{H}_3$ such that $B = U A$.
\end{lemma}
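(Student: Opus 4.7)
The plan is to define $V$ pointwise on $\operatorname{Ran}(A)$ by the natural rule $V(A\psi) := B\psi$ and then verify the three things needed: well-definedness, isometry, and surjectivity onto $\operatorname{Ran}(B)$. Well-definedness is the first checkpoint: if $A\psi_1 = A\psi_2$ then $A(\psi_1-\psi_2)=0$, hence $\langle \psi_1-\psi_2, A^\dagger A (\psi_1-\psi_2)\rangle = 0$; using the hypothesis $A^\dagger A = B^\dagger B$, the same inner product with $B^\dagger B$ is zero, which forces $B(\psi_1-\psi_2)=0$, i.e., $B\psi_1 = B\psi_2$. Linearity of $V$ is immediate from linearity of $A$ and $B$.

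Next I would show $V$ is an isometry from $\operatorname{Ran}(A)$ to $\mathcal{H}_3$: for any $\psi \in \mathcal{H}_1$,
\begin{align}
\|V(A\psi)\|^2 = \|B\psi\|^2 = \langle \psi, B^\dagger B \psi\rangle = \langle \psi, A^\dagger A\psi\rangle = \|A\psi\|^2.
\end{align}
Surjectivity onto $\operatorname{Ran}(B)$ follows from the definition since any $B\psi$ is the image of $A\psi$. A linear surjective isometry between finite-dimensional inner product spaces is a unitary, which yields the first claim. (Along the way one also sees $\operatorname{rank}(A) = \dim \operatorname{Ran}(A) = \dim \operatorname{Ran}(B) = \operatorname{rank}(B)$, which will be used in the next step.)

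For the second part, assume $\dim \mathcal{H}_2 = \dim \mathcal{H}_3$. Since $\operatorname{rank}(A) = \operatorname{rank}(B)$, the orthogonal complements $\operatorname{Ran}(A)^{\perp} \subseteq \mathcal{H}_2$ and $\operatorname{Ran}(B)^{\perp} \subseteq \mathcal{H}_3$ have equal dimension. I would choose any unitary $W:\operatorname{Ran}(A)^{\perp} \to \operatorname{Ran}(B)^{\perp}$ (which exists purely from the dimension matching) and define $U := V \oplus W$ with respect to the orthogonal decompositions $\mathcal{H}_2 = \operatorname{Ran}(A)\oplus \operatorname{Ran}(A)^{\perp}$ and $\mathcal{H}_3 = \operatorname{Ran}(B)\oplus \operatorname{Ran}(B)^{\perp}$. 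Then $U$ is unitary and $UA = VA = B$.

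I do not expect a real obstacle here; the only subtle point is well-definedness, which is exactly where the hypothesis $A^\dagger A = B^\dagger B$ is used. The rest is a routine isometric extension argument in finite dimensions.
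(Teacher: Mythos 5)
Your proposal is correct and follows essentially the same route as the paper: define $V(A\psi):=B\psi$, use $A^\dagger A = B^\dagger B$ for well-definedness and the isometry property, conclude $V$ is unitary onto $\operatorname{Ran}(B)$, and extend by a unitary $W$ on the orthogonal complements (which match in dimension since $\operatorname{rank}(A)=\operatorname{rank}(B)$) to get $U=V\oplus W$. The only cosmetic difference is that you verify norm preservation while the paper checks inner-product preservation directly; these are equivalent here.
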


\begin{proof}
Define $V : \operatorname{Ran}(A) \to \operatorname{Ran}(B)$ by
\begin{align}
  V(Ax)\;:=\;Bx
  \qquad \forall \, x\in\mathcal{H}_1.
\end{align}
This is well-defined because the value of $V$ is uniquely determined for each input: If $Ax=Ay$, then $
  0=\|A(x-y)\|^2=\langle x-y,\,A^\dagger A(x-y)\rangle
  =\langle x-y,\,B^\dagger B(x-y)\rangle=\|B(x-y)\|^2,$
so $Bx=By$.

Moreover, for any $x,y\in\mathcal{H}_1$ we have
\begin{align}
  \langle V(Ax),V(Ay)\rangle
  =\langle Bx,By\rangle
  =\langle x,B^\dagger B\,y\rangle
  =\langle x,A^\dagger A\,y\rangle
  =\langle Ax,Ay\rangle.
\end{align}
Thus $V$ is an isometry from $\operatorname{Ran}(A)$ to $\operatorname{Ran}(B)$.

Since $A^\dagger A=B^\dagger B$, we have $\|Ax\|=\|Bx\|$ for all $x\in\mathcal H_1$, hence $\ker A=\ker B$ and therefore
\begin{align}
\dim\operatorname{Ran}(A)=\dim\operatorname{Ran}(B) < \infty.
\end{align}
Therefore, $V$ is a surjective isometry, that is, a unitary operator.

\noindent\textbf{Finite-dimensional equal-dimension case:}

Write the orthogonal decompositions
$
\mathcal H_2=\operatorname{Ran}(A)\oplus \operatorname{Ran}(A)^\perp, \,
\mathcal H_3=\operatorname{Ran}(B)\oplus \operatorname{Ran}(B)^\perp .
$
From the equalities above and $\dim\mathcal H_2=\dim\mathcal H_3$ we get
\begin{align}
\dim\operatorname{Ran}(A)^\perp=\dim\operatorname{Ran}(B)^\perp,
\end{align}
which implies the existence of a unitary operator
$
W:\operatorname{Ran}(A)^\perp \longrightarrow \operatorname{Ran}(B)^\perp.
$
Define $U:\mathcal{H}_2 \to \mathcal{H}_3$ by
\begin{align}
U:=V\oplus W.
\end{align}
Then $U$ is unitary and, since $A(\mathcal H_1)\subseteq\operatorname{Ran}(A)$,
\begin{align}
UAx=V(Ax)=Bx
\qquad \forall \, x\in\mathcal H_1,
\end{align}
so $B=UA$ with $U$ unitary.

\end{proof}

\subsection{\Cref{lem:fourier_support}}

\begin{lemma}\label{lem:fourier_support}
Let $f:\mathbb{F}_2^m\to\mathbb{R}$ and define its unnormalized Fourier transform by
\begin{align}
\widetilde f(j)\;=\;\sum_{i\in\mathbb{F}_2^m} (-1)^{\,i\cdot j}\,f(i)
\qquad \forall \, j\in\mathbb{F}_2^m,
\end{align}
where $i\cdot j=\sum_{k=1}^m i_{(k)} j_{(k)}\in\mathbb{F}_2$ is the standard inner product over $\mathbb F_2$. Let $L$ be a linear subspace of $\mathbb{F}_2^m$ and $L^\perp=\{v\in\mathbb{F}_2^m:\ v\cdot \ell=0\ \forall\,\ell \in L\}$ its orthogonal complement. Then the following are equivalent:
\begin{enumerate}
\item[(i)] $\widetilde f(j)=0$ for all $j\notin L$ (i.e., the Fourier support of $f$ is contained in $L$).
\item[(ii)] $f(x+v)=f(x)$ for all $x\in\mathbb{F}_2^m$ and all $v\in L^\perp$ (equivalently, $f$ is constant on every coset of $L^\perp$).
\end{enumerate}
\end{lemma}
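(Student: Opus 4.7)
The plan is to treat this as a standard Pontryagin-style duality on the finite group $\mathbb F_2^m$: shifts in the position variable correspond to phases in the frequency variable and vice versa, so shift-invariance under a subgroup matches vanishing of the Fourier transform off its annihilator. I would prove the two implications by short direct calculations, with the single nontrivial ingredient being the double-orthogonal identity $(L^\perp)^\perp=L$, which holds because the standard bilinear form $(i,j)\mapsto i\cdot j$ on $\mathbb F_2^m$ is non-degenerate (if $v\cdot e_i=0$ for all standard basis vectors $e_i$ then $v=0$), so $\dim L+\dim L^\perp=m$ and taking the perpendicular twice returns the original subspace.

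For the direction (ii) $\Rightarrow$ (i), I would fix any $j\notin L=(L^\perp)^\perp$ and use non-degeneracy to pick $v\in L^\perp$ with $v\cdot j=1$. In the defining sum for $\widetilde f(j)$ I would reindex $i=i'+v$, observe that $(-1)^{(i'+v)\cdot j}=-(-1)^{i'\cdot j}$ and that $f(i'+v)=f(i')$ by the coset-constancy hypothesis, and conclude $\widetilde f(j)=-\widetilde f(j)=0$.

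For (i) $\Rightarrow$ (ii), I would invoke the Fourier inversion formula
\begin{equation}
f(x)=\frac{1}{2^m}\sum_{j\in\mathbb F_2^m}(-1)^{x\cdot j}\,\widetilde f(j),
\end{equation}
which is a direct consequence of the orthogonality relation $\sum_{j\in\mathbb F_2^m}(-1)^{(x-y)\cdot j}=2^m\,\mathbf 1_{\{x=y\}}$ and would be stated in one line. By hypothesis only $j\in L$ contribute, and for any $v\in L^\perp$ each such $j$ satisfies $v\cdot j=0$; hence $(-1)^{(x+v)\cdot j}=(-1)^{x\cdot j}$ term by term, giving $f(x+v)=f(x)$.

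I do not expect any real obstacle. The only two subtleties to flag in the write-up are (a) the non-degeneracy argument that justifies $(L^\perp)^\perp=L$ (needed to choose the witness $v$ in the first direction), and (b) a one-line justification of the inversion formula, since the lemma uses the unnormalized transform so the factor $2^{-m}$ in the inversion must be tracked carefully; both are routine.
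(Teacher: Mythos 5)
Your proposal is correct and follows essentially the same route as the paper: Fourier inversion plus $v\cdot j=0$ for $j\in L$ gives (i)$\Rightarrow$(ii), and a witness $v\in L^\perp$ with $v\cdot j=1$ gives the sign-flip cancellation for (ii)$\Rightarrow$(i) (your single global reindexing $i\mapsto i+v$ is just a slightly slicker packaging of the paper's coset-by-coset pairing). Your explicit justification of $(L^\perp)^\perp=L$ via non-degeneracy is a welcome detail the paper leaves implicit.
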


\begin{proof}
The inverse transform (with normalization $2^{-m}$) is
\begin{align}
f(x)=2^{-m}\sum_{j\in\mathbb{F}_2^m} (-1)^{\,x\cdot j}\,\widetilde f(j)
\qquad  x\in\mathbb{F}_2^m.
\end{align}

\smallskip
\noindent (i)$\Rightarrow$(ii): If $\widetilde f(j)=0$ for $j\notin L$, then for any $v\in L^\perp$,
\begin{align}
f(x+v)\;=\;2^{-m}\sum_{j\in L} (-1)^{(x+v)\cdot j}\,\widetilde f(j)
\;=\;2^{-m}\sum_{j\in L} (-1)^{x\cdot j}\underbrace{(-1)^{v\cdot j}}_{=1}\,\widetilde f(j)
\;=\;f(x),
\end{align}
since $v\cdot j=0$ for all $j\in L$.

\smallskip
\noindent (ii)$\Rightarrow$(i): Suppose $f$ is constant on each coset of $L^\perp$. Fix $j\notin L=(L^\perp)^\perp$. Then there exists $v_0\in L^\perp$ with $v_0\cdot j=1$. Partition $\mathbb{F}_2^m$ into cosets $C=x_0+L^\perp$. On any such $C$,
\begin{align}
\sum_{x\in C} (-1)^{x\cdot j} f(x)
= f(x_0)\sum_{v\in L^\perp} (-1)^{(x_0+v)\cdot j}
= f(x_0)(-1)^{x_0\cdot j}\sum_{v\in L^\perp} (-1)^{v\cdot j}.
\end{align}
Pairing $v$ with $v+v_0$ yields cancellation because $(-1)^{(v+v_0)\cdot j}=-(-1)^{v\cdot j}$. Hence $\sum_{v\in L^\perp} (-1)^{v\cdot j}=0$, so each coset contributes $0$, and therefore
\begin{align}
\widetilde f(j)=\sum_{x\in\mathbb{F}_2^m} (-1)^{x\cdot j} f(x)=0.
\end{align}
This holds for every $j\notin L$.
\end{proof}

\section{Additional proofs}
\label{app:pfs}
This appendix collects proofs and supplementary lemmas omitted from the main text.

\subsection{Proof of \Cref{rem:stairs_wdi}} \label{app:pf_stairs_space_inst_wdi}

Below, we show the equivalence between the expression in \Cref{eq:stairs_wdi_1} and the definition of $m$-qubit implementable instruments (with delayed inputs) (\Cref{def:m_space_instr_wdi}).
\begin{proof}
  Let $\{\Lambda_k : \mathcal{L} (\mathcal H_{\mathrm{in}}) \to \mathcal{L}(\mathcal H_{\mathrm{out}}) \}_{k \in \Outk}$
  be a quantum instrument where $\mathcal H_{\mathrm{in}} \cong (\mathbb C^2)^{\otimes n_{\mathrm{in}}}$ and
  $\mathcal H_{\mathrm{out}} \cong (\mathbb C^2)^{\otimes n_{\mathrm{out}}}$ for $n_{\mathrm{in}},n_{\mathrm{out}}\in \mathbb{Z}_{\geq 0}$.
  We will show that $\Lambda$ is an $m$-qubit implementable instrument (with delayed inputs) as in \Cref{def:m_space_instr_wdi} if and only if $\Lambda$ admits the expression \Cref{eq:stairs_wdi_1,eq:stairs_wdi_2}.

  First, we prove a property of the elementary input-loading instrument.

  \medskip
  \noindent\textbf{Decomposition into single-qubit loading instruments.}

  Consider an input-loading operation on $S = \{s_1, s_2, \ldots, s_{|S|}\}$ and $J = \{j_1, j_2, \ldots, j_{|S|}\}$.
  The transformation on a set of unnormalized states $\{ \rho_k \in \mathcal{H}_\mathrm{m} \otimes \mathcal{H}_\mathrm{in}\}_{k \in \Outk}$ by the input-loading operation on $S$ and $J$ can be decomposed as
  \begin{align}
    \{ \rho_k \}_{k \in \Outk}
    &\mapsto
    \qty{ \qty(\bra{x_1}_{s_1} \otimes \mathbb{I}) \,\rho_k\, \qty(\ket{x_1}_{s_1} \otimes \mathbb{I}) }_{k x_1 \in \Outk \times \{0,1\}} \\
    &\mapsto
    \qty{ \qty(\bra{x_1, x_2}_{s_1, s_2} \otimes \mathbb{I}) \,\rho_k\, \qty(\ket{x_1, x_2}_{s_1, s_2} \otimes \mathbb{I}) }_{k x_1 x_2 \in \Outk \times \{0,1\}^2} \\
    &\mapsto
    \cdots \\
    &\mapsto
    \qty{ \qty(\bra{x_1, x_2, \cdots, x_{|S|}}_{S} \otimes \mathbb{I}) \,\rho_k\, \qty(\ket{x_1, x_2, \cdots, x_{|S|}}_{S} \otimes \mathbb{I}) }_{k x \in \Outk \times \{0,1\}^{|S|}},
  \end{align}
  where $x_i \in \{0,1\}$ for all $i \in \{1, 2, \ldots, |S|\}$ and $x := (x_1, x_2, \ldots, x_{|S|})$.
  This decomposition is illustrated in \Cref{fig:pf_decomp_input_loading}.
  Each step loads one qubit, so we may assume, without loss of generality, that every input-loading operation in the definition of $m$-qubit implementable instruments (with delayed inputs) loads a single qubit.

  \begin{figure}[H]
    \centering
    \includegraphics[width=0.7\linewidth]{./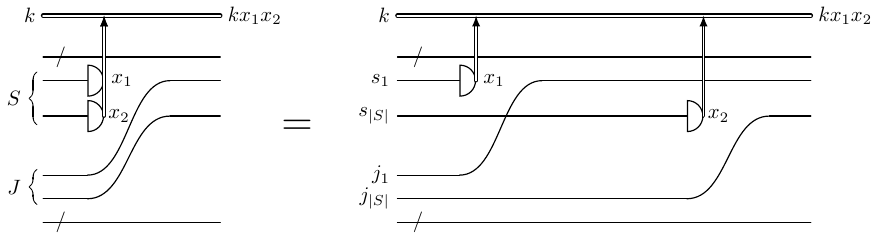}
    \caption{Decomposition of an input-loading operation into input-loading operations, each of which loads a single qubit. Thus, without loss of generality, we can assume that every input-loading operation in the definition of $m$-qubit implementable instruments (with delayed inputs) loads a single qubit.}
    \label{fig:pf_decomp_input_loading}
  \end{figure}

  \bigskip
  Now assume that $\Lambda$ is an $m$-qubit implementable instrument (with delayed inputs) as defined in \Cref{def:m_space_instr_wdi}, and thus
  \begin{align}
    \Lambda_k (\rho_{\mathrm{in}})
    =
    \Tr_{(\mathbb{C}^2)^{\otimes (m-n_\mathrm{out})}}
    \qty[
    \qty(
      \sum_{k_0, \cdots, k_{T-1}}
      \Phi^{(T)}_{k\mid k_{T-1}}
      \circ\cdots\circ
      \Phi^{(1)}_{k_1\mid k_0}
    )
    \qty( \ketbra{0}^{\otimes m} \otimes \rho_{\mathrm{in}} )
    ] \label{eq:stairs_wdi_pf_1}
    \quad \forall \rho_{\mathrm{in}} \in \mathcal{L}(\mathcal{H}_{\mathrm{in}}),
  \end{align}
  with the same notations as in \Cref{def:m_space_instr_wdi}. As shown above, we may assume that each input-loading operation loads one qubit.
  For $i \in \{1, 2, \ldots, n_\mathrm{in}\}$, let $t_i \in [T]$ be the round at which input $\text{A}_i$ is loaded, and set $t_0 := 0$ for notational convenience.

  For $i \in \{1, \ldots, n_\mathrm{in}\}$, set $\tau := t_{i-1}$ and $\tau' := t_{i}$ for brevity.
  Then the sequence of the elementary instruments from rounds $\tau + 1$ to $\tau'$ can be expressed by a single quantum instrument $\{\widetilde\Gamma^{(i)}_{k_{\tau'} \mid k_{\tau}}\}_{k_{\tau'}}$:
  \begin{align}
    \widetilde\Gamma^{(i)}_{k_{\tau'} \mid k_{\tau}}
    &:=
    \qty(
    \sum_{k_{\tau'-1}} \Phi^{(\tau')}_{k_{\tau'} \mid k_{\tau'-1}}
    )
    \circ
    \cdots
    \circ
    \qty(
    \sum_{k_{\tau}} \Phi^{(\tau+1)}_{k_{\tau+1} \mid k_{\tau}}).
  \end{align}
  By definition of the elementary instruments $\{\Phi^{(t)}_{k_t \mid k_{t-1}}\}_{k_t}$ in \Cref{def:m_space_instr_wdi}, the instrument $\{ \widetilde\Gamma^{(i)}_{k_{\tau'} \mid k_\tau}\}_{k_{\tau'}}$ acts trivially on $\mathcal{H}_{\mathrm{in}}$, which is $\mathcal{H}_{\mathrm{in}} = \mathcal{H}_{\text{A}_{i}}\otimes \ldots \otimes \mathcal{H}_{\text{A}_{n_\mathrm{in}}}$ between round $\tau +1$ and $\tau'$ by the update rule of $\mathcal{H}_\mathrm{in}$, and the part acting on $\mathcal{H}_\mathrm{m}$ is a composition of elementary instruments in the setting without delayed inputs (\Cref{def:space_instr_wo_delayed}), followed by a computational basis measurement on one qubit.
  Hence,
  \begin{align}
    \widetilde\Gamma^{(i)}_{k_{\tau'} \mid k_\tau} =
    \Gamma^{(i)}_{k_{\tau'} \mid k_\tau} \otimes \operatorname{id}_{\mathrm{A}_{i}, \cdots, \mathrm{A}_{n_\mathrm{in}}},
    \label{eq:pf_gammatilde_decomp}
  \end{align}
  where $\{\Gamma^{(i)}_{k_{\tau'} \mid k_\tau}\}_{k_{\tau'}}$ is an $m$-qubit implementable instrument (without delayed inputs) that has the input system $(\mathbb{C}^2)^{\otimes m}$ and the output system $(\mathbb{C}^2)^{\otimes (m-1)}$.
  For the sequence after round $ t_{n_\mathrm{in}}$, define
  \begin{align}
    \widetilde\Gamma^{(n_\mathrm{in}+1)}_{k  \mid k_{\tau}}
    &:=
    \sum_{k_{T-1}}
    \Phi^{(T)}_{k \mid k_{T-1}}
    \circ
    \cdots
    \circ
    \sum_{k_{\tau+1}} \Phi^{(\tau+1)}_{k_{\tau+1} \mid k_{\tau}}.
  \end{align}
  The same argument without the final computational basis measurement yields the decomposition in \Cref{eq:pf_gammatilde_decomp}, and here $\{ \widetilde\Gamma^{(n_\mathrm{in}+1)}_{k \mid k_\tau} \}_{k}$ is a quantum instrument that has both input and output systems $(\mathbb{C}^2)^{\otimes m}$.

  Substituting the above expressions into \Cref{eq:stairs_wdi_pf_1} and relabeling the outcome indices gives \Cref{eq:stairs_wdi_1,eq:stairs_wdi_2}.

  \bigskip
  Conversely, suppose $\Lambda$ is given by \Cref{eq:stairs_wdi_1,eq:stairs_wdi_2}.
  By the definition of $m$-qubit implementable instruments (without delayed inputs), each $\{\Gamma^{(t)}_{k_t \mid k_{t-1}} \}_{k_t \in \Outk_{t}}$ can be expressed as a composition of the elementary instruments as in \Cref{def:space_instr_wo_delayed}.
  Substituting these expressions into \Cref{eq:stairs_wdi_1,eq:stairs_wdi_2} yields a decomposition of $\Lambda$ as a composition of the elementary instruments in \Cref{def:m_space_instr_wdi}, and thus $\Lambda$ is an $m$-qubit implementable instrument (with delayed inputs).
\end{proof}

\subsection{Proof of \Cref{lem:output_dim_indecomp}} \label{app:pf_lem_output_dim_indecomp}

\begin{proof}
Let $E=\{E_k\}_{k \in \Outk}\subseteq \mathcal{L}(\mathcal{H}_{\mathrm{in}})$ be a POVM, and set
$r_\ast \coloneqq \max_{k \in \Outk}\operatorname{rank}(E_k)$.

\smallskip
\noindent\textbf{Existence.}
Fix a Hilbert space $\mathcal H_{\mathrm{out}}$ with $\dim\mathcal H_{\mathrm{out}}=r \geq r_\ast$.
For each $k$, $\dim\operatorname{Ran}(\sqrt{E_k})=\operatorname{rank}(E_k) \leq r_\ast \leq r$.
Thus, we can choose an isometry
\begin{align}
J_k:\operatorname{Ran}(\sqrt{E_k})\longrightarrow \mathcal H_{\mathrm{out}}.
\end{align}
Extend $J_k$ by $0$ on $(\operatorname{Ran}(\sqrt{E_k}))^\perp \subseteq \mathcal H_{\mathrm{in}}$, so $J_k$ is a partial isometry on $\mathcal H_{\mathrm{in}}$ with
$J_k^\dagger J_k=\mathbb{I}_{\operatorname{Ran}(\sqrt{E_k})}$.
Define a quantum instrument $\Gamma=\{\Gamma_k\}_{k \in \Outk}$ by
\begin{align}
L_k:=J_k\,\sqrt{E_k}\;:\;\mathcal H_{\mathrm{in}}\to \mathcal H_{\mathrm{out}},
\qquad
\Gamma_k(\rho):=L_k\,\rho\,L_k^\dagger.
\end{align}
Each $\Gamma_k$ has Kraus rank 1. Moreover,
\begin{align}
L_k^\dagger L_k
= \sqrt{E_k}\,J_k^\dagger J_k\,\sqrt{E_k}
= \sqrt{E_k}\,\mathbb{I}_{\operatorname{Ran}(\sqrt{E_k})}\,\sqrt{E_k}
= \sqrt{E_k}\,\sqrt{E_k}
= E_k,
\end{align}
so the associated POVM of $\Gamma$ is $\{E_k\}_k$. Finally,
$\sum_k L_k^\dagger L_k=\sum_k E_k=\mathbb I_{\mathcal H_{\mathrm{in}}}$ shows
$\sum_k\Gamma_k$ is trace preserving, i.e.\ $\Gamma$ is a valid quantum instrument with $\dim\mathcal H_{\mathrm{out}}=r$.

\smallskip
\noindent\textbf{Optimality.}
Let $\widehat\Gamma=\{\widehat\Gamma_k : \mathcal{L}(\mathcal H_{\mathrm{in}}) \to \mathcal{L}(\widehat{\mathcal H}_{\mathrm{out}}) \}_k$ be any quantum instrument whose associated POVM is $E$ and for which each $\widehat\Gamma_k$ has Kraus rank 1.
Then each $\widehat\Gamma_k$ has a single Kraus operator $\widehat L_k:\mathcal H_{\mathrm{in}}\to \widehat{\mathcal H}_{\mathrm{out}}$ and
\begin{align}
\widehat L_k^\dagger \widehat L_k = E_k.
\end{align}
Hence\footnote{For any linear map $L$, $\ker(L^\dagger L)=\ker(L)$ because for any $v \in \ker (L^\dagger L)$ we have $0 = \langle v, L^\dagger L v\rangle=\|Lv\|^2$. Since $L$ and $L^\dagger$ have the same domain, by the rank–nullity relationship, $\operatorname{rank}(L^\dagger L)=\operatorname{rank}(L)$; hence $\operatorname{rank}(E_k)=\operatorname{rank}(\widehat L_k)$.}
\begin{align}
\operatorname{rank}(E_k)=\operatorname{rank}(\widehat L_k)
\le \dim \widehat{\mathcal H}_{\mathrm{out}}\qquad \forall\; k  \in \Outk,
\end{align}
and taking the maximum over $k$ yields
$\dim \widehat{\mathcal H}_{\mathrm{out}}\ge r_\ast$.
\end{proof}

\subsection{Proof of \Cref{lem:pp_cond_indecomp_instr}} \label{app:pf_pp_cond_indecomp_instr}

\begin{proof}
Let $\Lambda := \{\Lambda_k\}_{k \in \Outk}$ and $\Gamma := \{\Gamma_l\}_{l \in \Outl}$ be quantum instruments such that each $\Lambda_k$ and each $\Gamma_l$ has Kraus rank~1 for every $k \in \Outk$ and $l \in \Outl$. Suppose further that there exists a projective measurement $\{P_m\}_m$ such that $\AssocPOVM{\Lambda} \lcircarrow \{P_m\}_m$.
Since the direction $\Gamma \rcircarrow \Lambda \Rightarrow \AssocPOVM{\Gamma} \lcircarrow \AssocPOVM{\Lambda}$ follows from \Cref{lem:nec_cond_pp_indecomp_instr}, we prove the converse.

Since each $\Gamma_l$ and each $\Lambda_k$ has Kraus rank~1, there exist Kraus operators $L_l,K_k$ such that
$\Gamma_l(\rho)=L_l\rho L_l^\dagger$, $\Lambda_k(\rho)=K_k\rho K_k^\dagger$.
Set the elements of the associated POVMs as $F_l:=L_l^\dagger L_l$ and $E_k:=K_k^\dagger K_k$.
By polar decomposition, there exist partial isometries $V_l$ and $W_k$ such that
\begin{align}
L_l=V_l\sqrt{F_l},\qquad K_k=W_k\sqrt{E_k},
\end{align}
and $V_l^\dagger V_l=\Pi_{F_l}$ and $W_k^\dagger W_k=\Pi_{E_k}$ where $\Pi_{F_l}$ and $\Pi_{E_k}$ are the projectors onto $\operatorname{Ran}(F_l)$ and $\operatorname{Ran}(E_k)$, respectively.

From $\qty{F_l}_l \lcircarrow \qty{E_k}_k \lcircarrow \qty{P_m}_m$, there exist column-stochastic matrices $\qty(\mu_{k,m})_{k,m}$ and $\qty(\nu_{l,k})_{l,k}$ such that
\begin{align}
  E_k = \sum_m \mu_{k,m}\,P_m,
\qquad
  F_l = \sum_k \nu_{l,k}\,E_k
      = \sum_m \qty(\sum_k \nu_{l,k}\,\mu_{k,m} ) P_m
      =: \sum_m \tau_{l,m}\,P_m .
\end{align}
Note that if $\tau_{l,m}=0$ then necessarily $\nu_{l,k} \mu_{k,m}=0$ for all $k$ since all entries are non–negative.

Define, for each $(k,l)$,
\begin{align}
X_{k\mid l}:=\sum_{m:\,\tau_{l,m}>0}\!\sqrt{\frac{\mu_{k,m}\nu_{l,k}}{\tau_{l,m}}}\,P_m,
\qquad
K_{k\mid l}:=W_k\,X_{k\mid l}\,V_l^\dagger.
\end{align}
Then a quantum instrument $\{\Theta_{k\mid l}(\sigma):=K_{k\mid l}\sigma K_{k\mid l}^\dagger\}_{k \in \Outk}$ for each $l \in \Outl$ satisfies
\begin{align}
\sum_{l}\Theta_{k\mid l} \circ \Gamma_l(\rho)
=\Lambda_k(\rho)
\qquad \forall \, k \in \Outk ,\,\rho \in \mathcal{L}(\mathcal{H}_{\mathrm{in}}).
\end{align}
Indeed,
\begin{align}
\sum_l \Theta_{k\mid l}\!\circ\!\Gamma_l(\rho)
&= \sum_l K_{k\mid l}\,L_l\,\rho\,L_l^\dagger \,K_{k\mid l}^\dagger \\
&= \sum_l W_k X_{k\mid l} V_l^\dagger V_l \sqrt{F_l}\,\rho\,\sqrt{F_l} V_l^\dagger V_l X_{k\mid l}^\dagger W_k^\dagger \\
&= W_k\qty(\sum_l X_{k\mid l}\sqrt{F_l}\,\rho\,\sqrt{F_l}X_{k\mid l}^\dagger)W_k^\dagger,
\end{align}
since $V_l^\dagger V_l=\Pi_{F_l}$ and $\Pi_{F_l}\sqrt{F_l}=\sqrt{F_l}$. Now compute the middle sum,
\begin{align}
X_{k\mid l}\sqrt{F_l}
&=\sum_{m:\,\tau_{l,m}>0}\sum_{m'}\sqrt{\frac{\mu_{k,m}\nu_{l,k}}{\tau_{l,m}}}\,P_m\,\sqrt{\tau_{l,m'}}\,P_{m'}
= \sum_m \sqrt{\mu_{k,m}\nu_{l,k}}\,P_m.
\end{align}
 Note that if $\tau_{l,m}=0$ then necessarily $\mu_{k,m}\nu_{l,k}=0$ for all $k$; hence we may freely extend sums over $m$ to all indices.
Summing over $l$ and using column–stochasticity $\sum_l \nu_{l,k}=1$,
\begin{align}
\sum_l X_{k\mid l}\sqrt{F_l}\,\rho\,\sqrt{F_l}X_{k\mid l}^\dagger
= \sum_{m,n}\sqrt{\mu_{k,m}\mu_{k,n}}\,P_m\,\rho\,P_n
= \Bigl(\sum_m \sqrt{\mu_{k,m}}\,P_m\Bigr)\rho \Bigl(\sum_n \sqrt{\mu_{k,n}}\,P_n\Bigr)
= \sqrt{E_k}\,\rho\,\sqrt{E_k}.
\end{align}
Therefore,
\begin{align}
\sum_l \Theta_{k\mid l}\!\circ\!\Gamma_l(\rho)
= W_k\,\sqrt{E_k}\,\rho\,\sqrt{E_k}\,W_k^\dagger
= K_k\,\rho\,K_k^\dagger
= \Lambda_k(\rho),
\end{align}
as claimed.

\medskip
\noindent\textbf{Instrument normalization via extra outcomes.}
Each instrument $\qty{\Theta_{k\mid l}}_{k \in \Outk}$ is not necessarily trace-preserving on all of $\mathcal H_{\mathrm{mid}}$, but it is on $\mathrm{Ran}\,L_l$. Indeed,\footnote{
  \Cref{eq:pi_mk_reduction} holds since $X_{k\mid l}=\sum_{m:\,\tau_{l,m}>0}\sqrt{\frac{\mu_{k,m}\nu_{l,k}}{\tau_{l,m}}}\,P_m$ and $E_k=\sum_m \mu_{k,m}P_m$,  $\operatorname{Ran} X_{k\mid l} \subseteq \operatorname{Ran}E_k$; hence $\Pi_{E_k}X_{k\mid l}=X_{k\mid l}$.
}
\begin{align}
\sum_{k \in \Outk} K_{k\mid l}^\dagger K_{k\mid l}
&=\sum_k V_l\,X_{k\mid l}^\dagger\,\underbrace{W_k^\dagger W_k}_{=\Pi_{E_k}}\,X_{k\mid l}\,V_l^\dagger\\
&= V_l\underbrace{\qty( \sum_k X_{k\mid l}^\dagger X_{k\mid l})}_{=\Pi_{F_l}} V_l^\dagger \label{eq:pi_mk_reduction}\\
&= V_l V_l^\dagger \\
&= \Pi_{L_l}.
\end{align}
The last equality uses that $V_l$ is the partial isometry in the polar decomposition of $L_l$, so $V_lV_l^\dagger$ is the projector onto $\mathrm{Ran}\,L_l$. Therefore $\{\Theta_{k\mid l}\}_{k \in \Outk}$ is trace-preserving on $\mathrm{Ran}\,L_l\subseteq\mathcal H_{\mathrm{mid}}$, which practically suffices since it is applied after $\Gamma_l(\rho)=L_l\rho L_l$.
If one insists on global trace preservation on $\mathcal H_{\mathrm{mid}}$, choose a finite index set $\Outk^0$ and operators $\{R_{k'\mid l}\}_{k'\in \Outk^0}$ with
$\sum_{k'\in \Outk^0} R_{k'\mid l}^\dagger R_{k'\mid l}=\mathbb{I}_{\mathcal H_{\mathrm{mid}}}-\Pi_{L_l}$, and add the extra outcomes
$\Theta_{k'\mid l}(\sigma):=R_{k'\mid l}\,\sigma\,R_{k'\mid l}^\dagger$. Replacing $\{\Theta_{k\mid l}\}_{k \in \Outk}$ by $\{\Theta_{j\mid l}\}_{j\in \Outk \cup \Outk^0}$ yields a trace-preserving instrument on $\mathcal H_{\mathrm{mid}}$ without changing the composed map $\sum_l \Theta_{k\mid l}\circ\Gamma_l$ (the added outcomes have zero-probability after composition).

\end{proof}

\subsection{Proof of \Cref{thm:ons_decomp}} \label{app:pf_thm_ons_decomp}

\begin{proof}
Let $\Lambda=\{\Lambda_k:\mathcal{L}(\mathcal{H}_\mathrm{A}\otimes\mathcal{H}_\mathrm{B}) \to \mathcal{L}(\mathcal{H}_\mathrm{C})\}_{k\in\Outk}$ be a quantum instrument.

\medskip
\noindent\textbf{(b) $\Rightarrow$ (a).}
Let $\{F_k\}_{k\in\Outk}$ be the POVM associated with $\Gamma$, i.e.\ $\Tr[\Gamma_k(\sigma)]=\Tr[F_k \sigma]$ for all $\sigma \in \mathcal{L}(\mathcal{H}_\mathrm{A})$. Since each $\mathcal{E}^{(k)}$ is trace-preserving,
\begin{align}
  \Tr[\Lambda_k(\rho)]
  \;=\; \Tr\!\bigl[ (\Gamma_k \otimes \operatorname{id}_\mathrm{B})(\rho) \bigr]
  \;=\; \Tr\!\bigl[ (F_k \otimes \mathbb{I}_\mathrm{B})\, \rho \bigr],
\end{align}
for all $\rho \in \mathcal{L}(\mathcal{H}_\mathrm{A}\otimes\mathcal{H}_\mathrm{B})$. Hence $\mathrm{B} \nrightarrow \mathrm{cl}$ holds.

\bigskip
\noindent\textbf{(a) $\Rightarrow$ (b).}
Assume $\mathrm{B} \nrightarrow \mathrm{cl}$. Then there exists a POVM $\{F_k\}_{k\in\Outk}$ on $\mathcal{H}_\mathrm{A}$ such that the POVM associated with $\Lambda$ factorizes as
\begin{equation}\label{eq:Ak-factor}
  A_k \;:=\; \AssocPOVM{\Lambda}_k \;=\; F_k \otimes \mathbb{I}_\mathrm{B} \qquad \forall\,k\in\Outk.
\end{equation}
Fix $k\in\Outk$ and choose a Kraus representation of $\Lambda_k$:
\begin{equation}\label{eq:LambdaKraus}
  \Lambda_k(\rho) \;=\; \sum_i K_{k,i}\, \rho \, K_{k,i}^\dagger,
  \qquad \sum_i K_{k,i}^\dagger K_{k,i} \;=\; A_k \;=\; F_k\otimes \mathbb{I}_\mathrm{B}.
\end{equation}

\medskip
\noindent\textbf{Construction of $\Gamma$ as the L\"uders instrument.}

Let $\Gamma$ be the L\"uders instrument for $\{F_k\}_k$ on $\mathcal{H}_\mathrm{A}$, and set $\mathcal{H}_\mathrm{X}=\mathcal{H}_\mathrm{A}$:
\begin{equation}\label{eq:LudersA}
  \Gamma_k(\sigma) \;=\; \sqrt{F_k}\, \sigma \, \sqrt{F_k} \qquad \forall\,\sigma \in \mathcal{L}(\mathcal{H}_\mathrm{A}).
\end{equation}
Then, for all $\rho \in \mathcal{L}(\mathcal{H}_\mathrm{A}\otimes\mathcal{H}_\mathrm{B})$,
\begin{equation}\label{eq:LudersAB}
  (\Gamma_k \otimes \operatorname{id}_\mathrm{B})(\rho)
  \;=\; (\sqrt{F_k}\otimes \mathbb{I}_\mathrm{B})\, \rho \, (\sqrt{F_k}\otimes \mathbb{I}_\mathrm{B})
  \;=\; \sqrt{A_k}\, \rho \, \sqrt{A_k}.
\end{equation}

\medskip
\noindent\textbf{Construction of $\mathcal{E}^{(k)}$.}

Let $P_k$ be the projector onto $\operatorname{Ran}(A_k)$ and let $A_k^{- 1/2}$ be the generalized inverse of $\sqrt{A_k}$, that is, for the spectral decomposition $\sqrt{A_k}=\sum_j \lambda_j \ketbra{\phi_j}$ with $\lambda_j\ge 0$, $A_k^{-1/2}:=\sum_{j:\,\lambda_j>0} \lambda_j^{-1} \ketbra{\phi_j}$.
Define
\begin{equation}\label{eq:Mki}
  M_{k,i} \;:=\; K_{k,i} \, A_k^{- 1/2}.
\end{equation}
Then
\begin{equation}\label{eq:Mnorm}
  \sum_i M_{k,i}^\dagger M_{k,i}
  \;=\; A_k^{- 1/2}\!\qty(\sum_i K_{k,i}^\dagger K_{k,i}\!)A_k^{- 1/2}
  \;=\; A_k^{- 1/2} A_k A_k^{- 1/2}
  \;=\; P_k.
\end{equation}
Choose any unit vector $\ket{\psi}\in\mathcal{H}_\mathrm{C}$ and an orthonormal basis $\{\ket{e_{k,\ell}}\}_\ell$ of $\ker A_k$. Define additional Kraus operators
\begin{equation}\label{eq:Nkl}
  N_{k,\ell} \;:=\; \ketbra{\psi}{e_{k,\ell}}.
\end{equation}
Then
\begin{align}
  \sum_i M_{k,i}^\dagger M_{k,i} + \sum_\ell N_{k,\ell}^\dagger N_{k,\ell}
  \;=\; P_k + (\mathbb{I}_{\mathrm{A, B}}-P_k) \;=\; \mathbb{I}_{\mathrm{A, B}}.
\end{align}
so the map $\mathcal{E}^{(k)}:\mathcal{L}(\mathcal{H}_\mathrm{A}\otimes\mathcal{H}_\mathrm{B})\to\mathcal{L}(\mathcal{H}_\mathrm{C})$ defined by the set of Kraus operators $\{M_{k,i}\}_i \cup \{N_{k,\ell}\}_\ell$ is a CPTP map on $\mathcal{H}_\mathrm{A}\otimes\mathcal{H}_\mathrm{B}$.

\medskip
\noindent\textbf{Verification of the composition identity.}

Using \Cref{eq:LudersAB} and $A_k^{- 1/2}\sqrt{A_k}=P_k$, while $N_{k,\ell} P_k=0$, we get
\begin{align*}
  \mathcal{E}^{(k)}\!\bigl( (\Gamma_k \otimes \operatorname{id}_\mathrm{B})(\rho) \bigr)
  &= \sum_i M_{k,i}\, \sqrt{A_k}\, \rho \, \sqrt{A_k}\, M_{k,i}^\dagger \\
  &= \sum_i K_{k,i}\, A_k^{- 1/2}\sqrt{A_k}\, \rho \, \sqrt{A_k} A_k^{- 1/2} K_{k,i}^\dagger \\
  &= \sum_i K_{k,i}\, \rho \, K_{k,i}^\dagger \\
  &= \Lambda_k(\rho),
\end{align*}
for all $\rho$.
Since the construction holds for each $k\in\Outk$, the desired decomposition follows.
\end{proof}

\subsection{Proof of \Cref{lem:ons_decomp_mindim}} \label{app:pf_lem_ons_decomp}

\begin{proof}
Let $\mathcal{H}_\text{A} \cong (\mathbb{C}^{2})^{\otimes n_\text{A}}, \mathcal{H}_\text{B} \cong (\mathbb{C}^{2})^{\otimes n_\text{B}}, \mathcal{H}_\text{C} \cong (\mathbb{C}^{2})^{\otimes n_\text{C}}$ for some $n_\text{A}, n_\text{B}, n_\text{C} \in \mathbb{Z}_{\geq 0}$ and $\Lambda := \{\Lambda_k : \mathcal{L}(\mathcal{H}_\text{A} \otimes \mathcal{H}_\text{B}) \to \mathcal{L}(\mathcal{H}_\text{C})\}_{k \in \Outk}$ be a quantum instrument where each $\Lambda_k$ has Kraus rank~1 for every $k \in \Outk$.

\medskip
\noindent\textbf{(b) $\Rightarrow$ (a).}
This is a special case of \Cref{thm:ons_decomp}: taking the trace on both sides of the equality in (b) immediately yields the outcome no-signaling condition.

\medskip
\noindent\textbf{(a) $\Rightarrow$ (b).}
Assume $\Lambda$ satisfies $\mathrm{B}\nrightarrow\mathrm{cl}$. Then, by \Cref{rem:ons_povm}, there exists a POVM $\{F_k\}_{k\in\Outk}$ on $\mathcal{H}_\mathrm{A}$ such that the POVM associated with $\Lambda$ factorizes as
\begin{equation}\label{eq:Ak-effect}
\AssocPOVM{\Lambda}_k \;=\; F_k\otimes \mathbb{I}_\mathrm{B}\qquad \forall\,k\in\Outk.
\end{equation}

Because each $\Lambda_k$ has Kraus rank~$1$, \Cref{lem:output_dim_indecomp} applied to the POVM in \Cref{eq:Ak-effect} implies
\begin{equation}\label{eq:dim-ineq}
\dim\mathcal{H}_\mathrm{C} \;\ge\; \max_{k}\operatorname{rank}\!\big(\AssocPOVM{\Lambda}_k\big)
\;=\;\max_{k}\operatorname{rank}(F_k)\cdot \dim\mathcal{H}_\mathrm{B}.
\end{equation}
In particular, $\dim\mathcal{H}_\mathrm{C}\ge \dim\mathcal{H}_\mathrm{B}$. Since all spaces are qubit systems, we have
$\dim\mathcal{H}_\mathrm{C}/\dim\mathcal{H}_\mathrm{B}=2^{\,n_\mathrm{C}-n_\mathrm{B}}\in\mathbb{N}$.
Set $\mathcal{H}_\mathrm{X}$ so that
\begin{equation}\label{eq:defHX}
\dim\mathcal{H}_\mathrm{X} \;=\; \frac{\dim\mathcal{H}_\mathrm{C}}{\dim\mathcal{H}_\mathrm{B}}
\;=\; 2^{\,n_\mathrm{C}-n_\mathrm{B}}.
\end{equation}
By \Cref{eq:dim-ineq}, this choice ensures $\dim\mathcal{H}_\mathrm{X}\ge \max_k\operatorname{rank}(F_k)$.

Now apply \Cref{lem:output_dim_indecomp} to the POVM $\{F_k\}_k$:
we obtain an instrument $\Gamma=\{\Gamma_k:\mathcal L(\mathcal H_{\mathrm A})\to\mathcal L(\mathcal H_{\mathrm X})\}_{k \in \Outk}$ with associated POVM $\{F_k\}_k$ such that each $\Gamma_k$ has Kraus rank~$1$.
Thus there exists a single Kraus operator $L_k:\mathcal{H}_\mathrm{A}\to\mathcal{H}_\mathrm{X}$ with
\begin{equation}\label{eq:GammaKraus}
\Gamma_k(\sigma)=L_k\,\sigma\,L_k^\dagger,
\qquad
L_k^\dagger L_k = F_k
\qquad \forall\,k\in\Outk.
\end{equation}

Because each $\Lambda_k$ has Kraus rank~$1$, there exists a single Kraus operator
$K_k:\mathcal{H}_\mathrm{A}\otimes\mathcal{H}_\mathrm{B}\to\mathcal{H}_\mathrm{C}$ such that
\begin{equation}\label{eq:lambdaKraus}
\Lambda_k(\rho)=K_k\,\rho\,K_k^\dagger,
\qquad
K_k^\dagger K_k=\AssocPOVM{\Lambda}_k=F_k\otimes\mathbb{I}_\mathrm{B}.
\end{equation}
Combining \Cref{eq:GammaKraus} and \Cref{eq:lambdaKraus} yields
\begin{align}
K_k^\dagger K_k=(L_k\otimes\mathbb{I}_\mathrm{B})^\dagger(L_k\otimes\mathbb{I}_\mathrm{B}).
\end{align}
Hence, by \Cref{lem:isometry_equivalence_of_kraus_op}, there exists a unitary
$U_k:\mathcal{H}_\mathrm{X}\otimes\mathcal{H}_\mathrm{B}\to\mathcal{H}_\mathrm{C}$ (the dimensions match by \Cref{eq:defHX}.) such that
\begin{align}
K_k = U_k\,(L_k\otimes\mathbb{I}_\mathrm{B}).
\end{align}
Finally,
\begin{align}
\Lambda_k(\rho)
=K_k\,\rho\,K_k^\dagger
=U_k\,(L_k\otimes\mathbb{I}_\mathrm{B})\,\rho\,(L_k^\dagger\otimes\mathbb{I}_\mathrm{B})\,U_k^\dagger
=U_k\,(\Gamma_k\otimes\operatorname{id}_\mathrm{B})(\rho)\,U_k^\dagger,
\end{align}
which is the desired form in (b).
\end{proof}

\subsection{Proof of \Cref{lem:povm_space}} \label{app:pf_povm_space}

\begin{proof}
  Let $E = \qty{E_k}_{k \in  K}$ be a POVM on $\mathcal{H} \cong (\mathbb{C}^2)^{\otimes (m-1)}$. If necessary, enlarge the outcome set by adding $E_k = 0$ so that $|K| = 2^{T}$ for some $T \in \mathbb{N}$, and let $k=k_{(1)} \concat k_{(2)} \concat \cdots \concat k_{(T)}\in\{0,1\}^{T}$ be its binary expression.

  In what follows, we show that the POVM $E$ can be written as
  \begin{align}
    \Tr[E_k \rho]
  = \Tr[
    \Gamma^{(T)}_{k_{(T)} \mid k_{(< T)}}\circ \cdots\circ
    \Gamma^{(2)}_{k_{(2)} \mid k_{(< 2)}}\circ
    \Gamma^{(1)}_{k_{(1)} \mid \emptyset}
    \qty(\rho)
    ]
    \qquad\forall \rho \in \mathcal{L}(\mathcal{H}),\, k \in  K,  \label{eq:pf_seq_const_povm}
  \end{align}
  where, for each round $t\in \{1,\cdots,T\}$ and previously observed binary string $k_{(< t)} := k_{(1)} \concat \cdots \concat k_{(t-1)} $, the quantum instrument $\{ \Gamma^{(t)}_{j \mid k_{(< t)}}:\mathcal L(\mathcal{H})\to \mathcal L(\mathcal{H}) \}_{j \in \{0,1\}}$ has a Kraus representation:
  \begin{align}
  \Gamma^{(t)}_{j \mid k_{(< t)}}(\rho)
  = K^{(t)}_{j\mid k_{(< t)}}\, \rho\, \qty(K^{(t)}_{j\mid k_{(< t)}})^\dagger,
  \qquad
  K^{(t)}_{j\mid k_{(< t)}} &:= \qty(\bra{j} \otimes \mathbb{I}_{\mathcal{H}})\, U^{(t)}_{k_{(< t)}} \qty(\ket{0}\otimes \mathbb{I}_{\mathcal{H}}),   \label{eq:pf_seq_const_povm_step}
  \end{align}
  for some unitary $U^{(t)}_{k_{(<t)}}$ on $\mathbb C^2\otimes\mathcal H$. See \Cref{fig:pf_povm_m_space_app} for an illustration of \Cref{eq:pf_seq_const_povm,eq:pf_seq_const_povm_step}.

\begin{figure}[H]
\centering
\includegraphics[width=0.7\linewidth]{./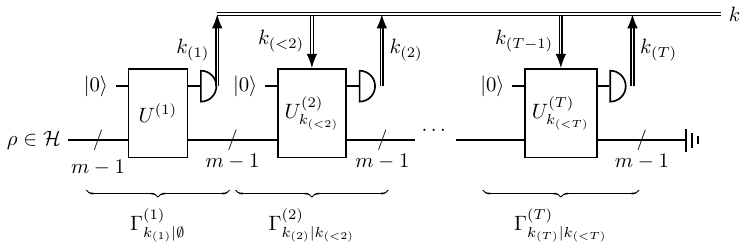}
\caption{Illustration of \Cref{eq:pf_seq_const_povm,eq:pf_seq_const_povm_step}. In each round $t \in \{1, \cdots, T\}$, the quantum instrument $\Gamma^{(t)}_{k_{(t)} \mid k_{(< t)}}$ appends an auxiliary qubit initialized to $\ket{0}$, applies an $m$-qubit unitary $U^{(t)}_{k_{(< t)}}$ depending on the previous measurement outcomes $k_{(< t)}$, and measures the auxiliary qubit in the computational basis.}
\label{fig:pf_povm_m_space_app}
\end{figure}

First, for any binary string $l$ of length $|l|<T$, define
\begin{align}
\mathcal K_l:=\{\,k\in\{0,1\}^{T}: \text{the first $|l|$ bits of $k$ are $l$}\,\},
\qquad
R_l \;:=\; \sum_{k\in \mathcal K_l} E_k \;\;(\ge 0),
\quad
R_{\emptyset}:=\mathbb{I}_\mathcal{H}.
\end{align}
Then $R_{l \concat 0}+R_{l \concat 1}=R_l$ for all $l$.

For round $t \in [T]$ and binary string $l\in\{0,1\}^{t-1}$, define
\begin{align}
K^{(t)}_{j\mid l}:=\sqrt{R_{l \concat j}}\;\sqrt{R_l^{-1}},\qquad
\Gamma^{(t)}_{j\mid l}(\rho):=K^{(t)}_{j\mid l}\,\rho\,(K^{(t)}_{j\mid l})^\dagger,
\end{align}
where $R_l^{-1}$ is the generalized inverse of $R_l$, that is, for the spectral decomposition $R_l=\sum_j \lambda_j \ketbra{\phi_j}$ with $\lambda_j\ge 0$, $R_l^{-1}:=\sum_{j:\,\lambda_j>0} \lambda_j^{-1} \ketbra{\phi_j}$.
Note that the set $\{\Gamma^{(t)}_{j \mid l}\}_{j \in \{0,1\}}$ forms a valid quantum instrument on $\operatorname{Ran}(R_l)$ because
\begin{align}
\sum_{j \in \{0, 1\}} \qty(K^{(t)}_{j\mid l} )^\dagger K^{(t)}_{j\mid l}
= \sum_{j \in \{0, 1\}} \sqrt{R_l^{-1}} R_{l \concat j} \sqrt{R_l^{-1}}
= \sqrt{R_l^{-1}} \qty(R_{l \concat 0}+R_{l \concat 1}) \sqrt{R_l^{-1}}
= \sqrt{R_l^{-1}} R_l \sqrt{R_l^{-1}}
= \Pi_{R_l},
\end{align}
where $\Pi_{R_l}$ is the projector onto $\operatorname{Ran}(R_l)$.
If necessary, one can make $\{\Gamma^{(t)}_{j \mid l}\}_{j \in \{0,1\}}$ a valid quantum instrument on the entire space $\mathcal{H}$ by adding an extra term $\sqrt{\mathbb{I}-\Pi_{R_l}}$ to the Kraus operator of one of the outcomes.  When $\{\Gamma^{(t)}_{j \mid l}\}_{j \in \{0,1\}}$ is applied after $\Gamma^{(t-1)}_{l_{(t-1)} \mid l_{(< t-1)}} \circ \cdots \circ \Gamma^{(1)}_{l_{(1)} \mid \emptyset}$, all states lie in $\operatorname{Ran}(R_l)$, so this additional term vanishes and contributes nothing. For notational simplicity, we therefore omit it in what follows.

In what follows, we will verify that $\{ \Gamma^{(t)}_{j \mid l} \}_{j \in \{0,1\}}$ actually satisfies \Cref{eq:pf_seq_const_povm,eq:pf_seq_const_povm_step}:

\medskip
\noindent\textbf{Verification of \Cref{eq:pf_seq_const_povm}.}
For round $t \in [T]$ and binary string $l \in \{0,1\}^{t}$,
the Kraus operator corresponding to the accumulated quantum instrument $
    \Gamma^{(t)}_{l_{(t)} \mid l_{(< t)}}\circ \cdots\circ
    \Gamma^{(2)}_{l_{(2)} \mid l_{(< 2)}}\circ
    \Gamma^{(1)}_{l_{(1)} \mid \emptyset}$ is
$
K^{\mathrm{acc}(t)}_l
:= K^{(t)}_{l_{(t)}\mid l_{(< t)}}\cdots K^{(2)}_{l_{(2)}\mid l_{(< 2)}}\,K^{(1)}_{l_{(1)}\mid \emptyset}.
$
We claim
\begin{equation}\label{eq:telescoping}
K^{\mathrm{acc}(t)}_l = \sqrt{R_l}
\qquad\text{for all }t\in[T],\ l\in\{0,1\}^t .
\end{equation}
This follows by induction on $t$. For $t=1$,
$K^{\mathrm{acc} (1)}_{j}= \sqrt{R_{j}} \sqrt{R_{\emptyset}^{-1}}=\sqrt{R_j}$ for $j\in\{0,1\}$.
Assuming the statement for $t-1$, and using $\operatorname{Ran}(R_{l \concat j})\subseteq\operatorname{Ran}(R_l)$,
\begin{align}
K^{\mathrm{acc}(t)}_{l \concat j}
=K^{(t)}_{j\mid l}\,K^{\mathrm{acc} (t-1)}_l
= \qty( \sqrt{R_{l \concat j}} \sqrt{R_l^{-1}}) \sqrt{R_l}
= \sqrt{R_{l \concat j}}\,\Pi_{R_l}
= \sqrt{R_{l \concat j}}.
\end{align}
Thus, the claim \Cref{eq:telescoping} holds. In particular, for $t=T$, that is, $k \in \{0,1\}^T$, we obtain
$
K^{\mathrm{acc}(T)}_k \;=\; \sqrt{R_k} \;=\; \sqrt{E_k}. \label{eq:Bleaf}
$
Therefore, for every state $\rho$,
\begin{align}
\Tr[ K^{\mathrm{acc}(T)}_k\,\rho\, \qty(K^{\mathrm{acc}(T)}_k)^\dagger ]
=\Tr[\sqrt{E_k}\,\rho\,\sqrt{E_k}]
=\Tr[E_k\rho],
\end{align}
which establishes \Cref{eq:pf_seq_const_povm}.

\medskip
\noindent\textbf{Verification of \Cref{eq:pf_seq_const_povm_step}:}
Since each round has two outcomes with Kraus operators $\{K^{(t)}_{j\mid l}\}_{j\in\{0,1\}}$, there exists
an isometry $V^{(t)}_l:\mathcal H\to \mathbb C^2\otimes\mathcal H$ satisfying
$V^{(t)}_l\ket{\phi}:=\sum_{j\in\{0,1\}}\ket{j}\otimes K^{(t)}_{j\mid l}\ket{\phi}$.
Extend $V^{(t)}_l$ to a unitary operator $U^{(t)}_l$ on $\mathbb C^2\otimes\mathcal H$. Then
\begin{equation}\label{eq:U-matrix-element}
(\bra{j}\otimes \mathbb{I}_{\mathcal{H}})\,U^{(t)}_l\,(\ket{0}\otimes \mathbb{I}_{\mathcal{H}})
=K^{(t)}_{j\mid l}\qquad (j\in\{0,1\}),
\end{equation}
which is exactly the Kraus representation in \Cref{eq:pf_seq_const_povm_step}.

\medskip
Finally, \Cref{eq:pf_seq_const_povm} can be rewritten as
  \begin{align}
    \Tr[E_k \rho]
    &= \Tr\qty[
      M_{k_{(T)}} U_{k_{(< T)}}
      \cdots
      M_{k_{(1)}} U^{(1)}
      \qty( \ketbra{0} \otimes \rho )
      {U^{(1)}}^\dagger
      { M_{k_{(1)}}}^\dagger
      \cdots
      {U_{k_{(< T)}}}^\dagger
      { M_{k_{(T)}}}^\dagger
    ]  \label{eq:pf_seq_const_povm_verify}
  \end{align}
  where $M_k := \ketbra{0}{k} \otimes \mathbb{I}_{\mathcal{H}}$. \Cref{eq:pf_seq_const_povm_verify} is a repetition of the elementary unitary operation and the elementary computational basis measurement specified in \Cref{eq:def_space_instr_wo_delayed}; therefore, the POVM $E$ is $m$-qubit implementable (without delayed inputs).
This completes the proof.

\end{proof}

\subsection{Proof of \Cref{thm:range_m_space_inst_wodi}} \label{app:pf_range_m_space_inst_wodi}

\begin{proof}
Assume there exist disjoint sets $\Outk_0, \Outk_1$ with $\Outk = \Outk_0 \cup \Outk_1$ and a projective measurement $\{P_b\}_{b \in \{0,1\}}$ on $\mathcal H_{\mathrm{in}}$ with
$\operatorname{rank}(P_b) \leq 2^{m-1}$ for each $b \in \{0,1\}$ such that the associated POVM $\{\AssocPOVM{\Lambda}_k\}_{k\in\Outk}$ of $\Lambda$ satisfies
\begin{align}
\sum_{k\in\Outk_b}\AssocPOVM{\Lambda}_k \;=\; P_b
\qquad \forall \, b\in\{0,1\}.
\end{align}

\noindent\textbf{Reduction to Kraus-rank-1 Instrument.}
Write a Kraus representation $\Lambda_k(\rho)=\sum_{\alpha_k} A_{k,\alpha_k}\rho A_{k,\alpha_k}^\dagger$.
Refine the outcome set to $\widetilde{\Outk}:=\{(k,\alpha_k):k\in\Outk\}$ and define
$\widetilde\Lambda_{(k,\alpha_k)}(\rho):=A_{k,\alpha_k}\rho A_{k,\alpha_k}^\dagger$,
which has Kraus rank~$1$ for every $(k,\alpha_k)$.
Its associated POVM satisfies
\begin{align}
\sum_{(k,\alpha_k):\,k\in\Outk_b}\!\AssocPOVM{\widetilde\Lambda}_{(k,\alpha_k)}
=\sum_{k\in\Outk_b}\sum_{\alpha_k} A_{k,\alpha_k}^\dagger A_{k,\alpha_k}
=\sum_{k\in\Outk_b} \AssocPOVM{\Lambda}_k
= P_b .
\end{align}
If $\widetilde\Lambda$ is $m$-qubit implementable (without delayed inputs), then so is $\Lambda$, since $\Lambda$ is obtained from $\widetilde\Lambda$ by the classical postprocessing $(k,\alpha_k)\mapsto k$.
Hence it suffices to treat the case where each $\Lambda_k$ has Kraus rank~$1$, i.e., $\Lambda_k(\rho)=A_k \rho A_k^\dagger$.

We now prove the claim via the decomposition illustrated in \Cref{fig:pf_half_cut}.
\begin{figure}[H]
\centering
\includegraphics[width=0.9\linewidth]{./figures/pf_half_cut.pdf}
\caption{Decomposition of the target instrument $\Lambda=\{\Lambda_k\}_{k\in\Outk}$ into three parts: the first instrument $\Gamma=\{\Gamma_b\}_{b\in\{0,1\}}$, the intermediate instrument $\Theta_{\mid b}=\{\Theta_{k\mid b}\}_{k\in\Outk_b}$, and the final channel $\mathcal{W}_{k}$. Each box in the figure represents an $m$-qubit unitary operation.}
\label{fig:pf_half_cut}
\end{figure}

\medskip
\noindent\textbf{First instrument $\Gamma$.}
Define a quantum instrument $\{\Gamma_b:\mathcal{L}(\mathcal{H}_{\mathrm{in}})\to\mathcal{L}(\mathcal{H}_{\mathrm{mid}})\}_{b\in\{0,1\}}$ whose associated POVM is $\{P_b\}_{b\in\{0,1\}}$ and each $\Gamma_b$ has Kraus rank~$1$.
By \Cref{lem:output_dim_indecomp}, we may choose $\dim \mathcal{H}_{\mathrm{mid}} = 2^{m-1} \ge \max_b \operatorname{rank} P_b$.
There exists a single Kraus operator $K_b:\mathcal{H}_{\mathrm{in}}\to\mathcal{H}_{\mathrm{mid}}$ with $\Gamma_b(\rho):=K_b \rho K_b^\dagger$ such that
\begin{align}
  K_b^\dagger K_b = P_b
  \qquad \forall \, b\in\{0,1\}.
\end{align}
We can define an isometry $\widetilde{V} : \mathcal{H}_{\mathrm{in}} \to \mathbb{C}^2 \otimes \mathcal{H}_{\mathrm{mid}}$ such that $ \widetilde{V}\ket{\psi} = \sum_{b} \ket{b} \otimes K_b \ket{\psi}$ for all $\ket{\psi} \in \mathcal{H}_{\mathrm{in}}$.
By extending $\widetilde{V}$ to an $m$-qubit unitary
$V: (\mathbb{C}^2)^{\otimes (m - n_{\mathrm{in}})} \otimes \mathcal{H}_{\mathrm{in}} \to \mathbb{C}^2 \otimes \mathcal{H}_{\mathrm{mid}}$, we have
\begin{align}
  \Gamma_b(\rho)
  =
  \Tr_{\mathbb{C}^2}
  \qty[
    (\ketbra{b} \otimes \mathbb{I}_\mathrm{mid})
    V
    (\rho \otimes \ketbra{0}^{\otimes (m-n_\mathrm{in})})
    V^\dagger
  ]
  \qquad
  \forall \rho \in \mathcal{L}(\mathcal{H}_{\mathrm{in}}).
\end{align}

\medskip
\noindent\textbf{Intermediate instrument $\Theta_{\mid b}$.}
Fix $b\in\{0,1\}$.
Define $N_{k \mid b} := K_b\, \AssocPOVM{\Lambda}_k\, K_b^\dagger$ for $k\in\Outk_b$, and adjust one element by
$N_{k_0\mid b} \leftarrow N_{k_0\mid b} + (\mathbb{I}_{\mathrm{mid}}-K_b K_b^\dagger)$.
Then $\{ N_{k \mid b} \}_{k \in \Outk_b}$ is a POVM on $\mathcal{H}_{\mathrm{mid}}$ because
\begin{align*}
\sum_{k\in\Outk_b} N_{k \mid b}
&= \sum_{k\in\Outk_b} K_b \AssocPOVM{\Lambda}_k K_b^\dagger + (\mathbb{I}_{\mathrm{mid}}-K_b K_b^\dagger) \\
&= K_b P_b K_b^\dagger + (\mathbb{I}_{\mathrm{mid}}-K_b K_b^\dagger) \\
&= K_b K_b^\dagger + (\mathbb{I}_{\mathrm{mid}}-K_b K_b^\dagger) \\
&= \mathbb{I}_{\mathrm{mid}},
\end{align*}
using $K_b P_b = K_b$ (since $P_b = K_b^\dagger K_b$ is a projector).

If needed, enlarge the outcome sets by adding $N_{k \mid b} =0$ to $\{N_{k \mid b}\}_{k \in \Outk_b}$ for each $b \in \{0,1\}$ so that $|\Outk_0|=|\Outk_1|=2^T$ for some $T \in \mathbb{Z}_{\geq 0}$.
As in the proof of Lemma~\ref{lem:povm_space}, the L\"uders instrument
\begin{align}
\Theta_{k\mid b}(\rho)
= \sqrt{N_{k \mid b}}\;\rho\;\sqrt{N_{k \mid b}},
\end{align}
can be implemented by a repetition of $m$-qubit unitary operations and computational basis measurements followed by the initialization to $\ket{0}$, as illustrated in \Cref{fig:pf_half_cut}.

\medskip
\noindent\textbf{Final channel $\mathcal{W}_{k}$.}
Consider the composed quantum instrument $\{ \Theta_{k \mid b}\circ \Gamma_b \}_{k \in \Outk}$. Here, we can index outcomes only by $k \in \Outk$ since $b$ can be uniquely identified from $k$ as $b$ satisfying $k \in \Outk_b$. Set
\begin{align}
(\Theta_{k\mid b}\circ \Gamma_b)(\rho)
= L_k\,\rho\,L_k^\dagger,
\qquad
L_k:=\sqrt{N_{k \mid b}}\,K_b .
\end{align}
Its associated POVM elements are
\begin{align}
L_k^\dagger L_k
= K_b^\dagger\!\qty(N_{k \mid b})\!K_b
= P_b \AssocPOVM{\Lambda}_k P_b
= \AssocPOVM{\Lambda}_k ,
\end{align}
where the last equality uses $\AssocPOVM{\Lambda}_k P_{b}=\AssocPOVM{\Lambda}_k$ for $k \in \Outk_b$.
Thus $L_k^\dagger L_k=A_k^\dagger A_k$.
By Lemma~\ref{lem:isometry_equivalence_of_kraus_op}, there exists a unitary operator
$\widetilde W_{k}:\operatorname{Ran}(L_k)\to \operatorname{Ran}(A_k)$ with
$A_k = \widetilde W_{k}\,L_k$.

We extend $\widetilde W_k$ to a unitary operator $W_k$ depending on the size of $\mathcal H_{\mathrm{out}}$:

\begin{enumerate}[label=(\roman*)]
\item If $n_{\mathrm{out}}\le m-1$, then $\dim\mathcal H_{\mathrm{out}}\le 2^{m-1}=\dim\mathcal H_{\mathrm{mid}}$.
Extend $\widetilde W_{k}$ to an $(m\!-\!1)$-qubit unitary
$W_{k}:\mathcal H_{\mathrm{mid}}\to (\mathbb C^2)^{\otimes(m-1-n_{\mathrm{out}})} \otimes \mathcal H_{\mathrm{out}}$
so that $ W_{k} \ket{\psi} = \ket{0}^{\otimes(m-1-n_{\mathrm{out}})} \otimes \widetilde W_{k} \ket{\psi}$ for all $\ket{\psi} \in \operatorname{Ran}(L_k)$.
Then
\begin{align}
\Lambda_k(\rho) = A_k \rho A_k^\dagger
= \Tr_{(\mathbb C^2)^{\otimes(m-1-n_{\mathrm{out}})}}\!\big[W_{k}\, L_k \rho L_k^\dagger\, W_{k}^\dagger\big].
\end{align}

\item If $n_{\mathrm{out}}=m-1$, then $\dim\mathcal H_{\mathrm{out}}=\dim\mathcal H_{\mathrm{mid}}$.
Extend $\widetilde W_k$ to an $(m\!-\!1)$-qubit unitary
$W_k:\mathcal H_{\mathrm{mid}}\to \mathcal H_{\mathrm{out}}$ so that $A_k=W_k L_k$, hence
\begin{align}
\Lambda_k(\rho) = A_k \rho A_k^\dagger = W_k L_k \rho L_k^\dagger W_k^\dagger.
\end{align}

\item If $n_{\mathrm{out}}=m$, then $\dim\mathcal H_{\mathrm{out}}=2\,\dim\mathcal H_{\mathrm{mid}}$.
Extend $\widetilde W_{k}$ to an $m$-qubit unitary
$W_{k}: \mathbb C^2 \otimes \mathcal H_{\mathrm{mid}}\to \mathcal H_{\mathrm{out}}$ so that
$A_k = W_{k} (\ket{0}\otimes L_k)$. Then
\begin{align}
\Lambda_k(\rho) = A_k \rho A_k^\dagger =
W_{k}\; \qty( L_k \rho L_k^\dagger \otimes \ketbra{0} )\; W_{k}^\dagger.
\end{align}
\end{enumerate}

Define the quantum channel $\mathcal{W}_{k} : \mathcal{L}(\mathcal{H}_{\mathrm{mid}}) \to \mathcal{L}(\mathcal{H}_{\mathrm{out}})$ by
\begin{align}
\mathcal{W}_k(\rho) := \Tr_{(\mathbb C^2)^{\otimes(m-n_{\mathrm{out}})}}\!\big[W_{k}\, (\ketbra{0} \otimes \rho ) \,W_{k}^\dagger\big],
\end{align}
where in cases (i)–(ii) we regard the $(m\!-\!1)$-qubit unitary $W_k$ as an $m$-qubit unitary by tensoring an identity on one extra qubit (we keep the same symbol for simplicity).
All three cases are then summarized by
$\Lambda_k(\rho) = \mathcal{W}_{k}(L_k \rho L_k^\dagger)$.

Combining the pieces, for all $\rho\in\mathcal L(\mathcal H_{\mathrm{in}})$ we have
\begin{align}
\Lambda_k(\rho) \;=\; \big(\mathcal{W}_k \circ \Theta_{k\mid b} \circ \Gamma_b\big)(\rho).
\end{align}
As indicated in \Cref{fig:pf_half_cut}, this composition uses only the elementary unitary operations and the elementary computational basis measurements specified in \Cref{def:space_instr_wo_delayed}. Hence $\Lambda$ is $m$-qubit implementable (without delayed inputs).
\end{proof}

\subsection{Proof of \Cref{thm:suff_qubit_reduction}} \label{app:pf_thm_suff_qubit_reduction}

\begin{proof}

Let $\Lambda := \{\Lambda_k\}_{k \in \Outk}$ with
$\Lambda_k(\rho) := \Tr_\mathrm{R}\!\bigl[(\ketbra{k}_\mathrm{R} \otimes \mathbb{I}_\mathrm{out} )\, U \rho \, U^\dagger\bigr]$,
where $U : \mathcal{H}_\mathrm{in} \to \mathcal{H}_\mathrm{R} \otimes \mathcal{H}_\mathrm{out}$ is unitary and
$\mathcal{H}_\mathrm{in} \cong (\mathbb{C}^2)^{\otimes n_\mathrm{in}},\;
  \mathcal{H}_\mathrm{out} \cong (\mathbb{C}^2)^{\otimes n_\mathrm{out}}$.
The associated POVM of $\Lambda$, given by $\AssocPOVM{\Lambda} = \{U^\dagger (\ketbra{k}_\mathrm{R} \otimes \mathbb{I}_\mathrm{out} ) U\}_{k \in \Outk}$, is a projective measurement.
Suppose projective measurements $E^{(t)} := \{ E^{(t)}_{k_t} \}_{k_t \in \Outk_t}$ for $t=1,\ldots,T$ and an ordered subset of input qubits
$\{\mathrm{A}_1,\ldots,\mathrm{A}_T\}$ satisfies the hypotheses required in the theorem statement. For convenience, set $m := n_\mathrm{in} - T$.

We prove that $\Lambda$ is $m$-qubit implementable by induction on $T \in \mathbb{N}$.

\medskip
\noindent\textbf{Base case $T=0$.}
Here $m = n_\mathrm{in}$.
By definition, the instrument $\Lambda$ is implemented by the $n_\mathrm{in}$-qubit unitary $U$, followed by computational basis measurements on $\mathcal{H}_\mathrm{R}$.
In the notation of \Cref{def:m_space_instr_wdi}, these are the elementary unitary operation and the elementary computational basis measurement (with the ancilla prepared in $\ket{0}$ traced out at the end).
Hence $\Lambda$ is $m$-qubit implementable.

\medskip
\noindent\textbf{Induction step.}
Assume the theorem holds for $T-1$ as the induction hypothesis.

First, define a quantum instrument $\Gamma :=\{\Gamma_{k_T}:\mathcal L(\mathcal H_{\mathrm{in}})\to\mathcal L(\mathcal{H}_{\mathrm{Y}})\}_{k_T \in \Outk_T}$ such that the associated POVM coincides with $E^{(T)}$ and each $\Gamma_{k_T}$ has Kraus rank~1. By Lemma~\ref{lem:output_dim_indecomp}, we may take $\dim\mathcal H_{\mathrm{Y}}= \max_{k_T} \rank E^{(T)} = 2^{m}$.
By \Cref{rem:ons_povm}, $\Gamma$ satisfies the same outcome no-signaling condition $\mathrm{A}_T \nrightarrow \mathrm{cl}$ as $E^{(T)}$.

\begin{figure}[H]
  \centering
  \includegraphics[width=0.7\linewidth]{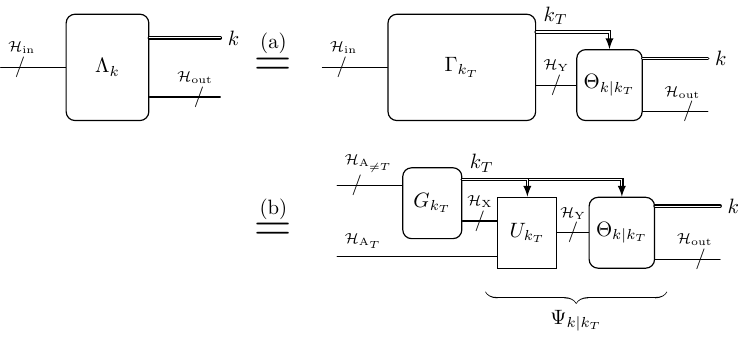}
  \caption{Decomposition of $\{\Lambda_k\}_{k}$ used in the induction step.
  The first equality follows from the composability of $\{ \Lambda_k\}_{k}$ from $\{ \Gamma_{k_T}\}_{k_T}$, as explained in part (a). Here, each $\Theta_{k\mid k_T}$ has Kraus rank 1.
  The second equality follows from the outcome no-signaling condition $\mathrm A_T\nrightarrow\mathrm{cl}$ for $\{ \Gamma_{k_T}\}_{k_T}$, as explained in part (b). Here, each $G_{k_T}$ has Kraus rank 1, $W_{k_T}$ is unitary, and $\dim\mathcal{H}_{\mathrm{X}}=2^{\,n_{\mathrm{in}}-T-1}=2^{\,m-1}$.}
  \label{fig:induction_step_decomp_pf}
\end{figure}

\medskip
\noindent\textbf{(a) Composability for $\Lambda$.}
Because $\AssocPOVM{\Gamma}$ is composable from $\AssocPOVM{\Lambda}$, \Cref{lem:pp_cond_indecomp_instr} implies that $\Lambda$ is composable from $\Gamma$.
Hence, there exists a quantum instrument $\Theta_{\mid k_T}=\{\Theta_{k\mid k_T}: \mathcal{L}(\mathcal{H}_{\mathrm{Y}})\to\mathcal{L}(\mathcal{H}_{\mathrm{out}})\}_{k \in \Outk}$ with each $\Theta_{k\mid k_T}$ having Kraus rank~1, such that
\begin{align}
\Lambda_k=\sum_{k_T}\Theta_{k\mid k_T}\circ \Gamma_{k_T},
\end{align}
as illustrated in \Cref{fig:induction_step_decomp_pf}.

\medskip
\noindent\textbf{(b) Outcome no-signaling condition for $\Gamma$.}
Applying Lemma~\ref{lem:ons_decomp_mindim} to the condition $\mathrm A_T\nrightarrow\mathrm{cl}$ for $\Gamma$, there exists a quantum instrument
$G=\{G_{k_T}:\mathcal L(\mathcal H_{A_{\neq T}})\to\mathcal L(\mathcal{H}_{\mathrm{X}})\}_{k_T}$ with each $G_{k_T}$ of Kraus rank~1,
and a unitary operator $W_{k_T}:\mathcal H_{\mathrm{X}}\otimes\mathcal H_{A_T}\to\mathcal H_{\mathrm{Y}}$ for each $k_T$ such that
\begin{align}
\Gamma_{k_T}(\rho)=W_{k_T}\bigl(G_{k_T}\otimes\mathrm{id}_{A_T}\bigr)(\rho)\,{W_{k_T}}^\dagger.
\label{eq:pf_ons_decomp}
\end{align}
Here, $A_{\neq T}$ represents the set of the input qubits other than $\mathrm A_T$.
Matching input/output dimensions of $W_{k_T}$ gives $\dim \mathcal{H}_{\mathrm{X}} = 2^{\,m-1}$.
Define a quantum instrument $\Psi_{\mid k_T} := \{\Psi_{k\mid k_T}\}_{k \in \Outk}$ by $\Psi_{k\mid k_T}(\sigma):=\Theta_{k\mid k_T}\!\bigl(W_{k_T}\;\sigma\;{W_{k_T}}^\dagger\bigr)$.
Then
\begin{align}
\Lambda_k=\sum_{k_T}\Psi_{k\mid k_T}\circ\qty(G_{k_T}\otimes\mathrm{id}_{A_T}),
\label{eq:lambda_decomp_induction_pf}
\end{align}
as illustrated in \Cref{fig:induction_step_decomp_pf}.

\medskip
\noindent\textbf{(c) $G$ is $m$-qubit implementable (with delayed inputs).}
Calculating the associated POVMs of both sides of \Cref{eq:pf_ons_decomp}, we have
\begin{align}
E^{(T)}_{k_T} =\AssocPOVM{G}_{k_T}\otimes\mathbb I_{A_T}.
\label{eq:pf_assocGt_factor}
\end{align}
Then, $\AssocPOVM{G}=\{ \AssocPOVM{G}_{k_T} \}_{k_T}$ is a projective measurement with $\rank \AssocPOVM{G}_{k_T}=2^{m-1}$ for all $k_T$. By \Cref{rem:equiv_def_target_instr}, there exists a unitary operator $U':\mathcal H_{A_{\neq T}}\to \mathcal{H}_\mathrm{R'} \otimes \mathcal{H}_{\mathrm{X}}$ with $\mathcal{H}_\mathrm{R'} \cong (\mathbb{C}^2)^{\otimes T}$ such that
\begin{align}
G_{k_T} (\rho ) = \Tr_\mathrm{R'} \qty[(\mathbb{I}_{\mathrm{X}} \otimes \ketbra{k_T}_\mathrm{R'}) U' \rho \, {U'}^\dagger]
\qquad
\forall \rho \in \mathcal{L}(\mathcal{H}_{A_{\neq T}}).
\label{eq:pf_G_unitary_rep}
\end{align}

By the definition of the composability of POVMs and \Cref{eq:pf_assocGt_factor}, every $E^{(t)}$ for $t<T$, being composable from $E^{(T)}$, also factorizes as
\begin{align}
E^{(t)}_{k_t} = F^{(t)}_{k_t} \otimes \mathbb I_{A_T},
\qquad \forall \, k_t \in \Outk_t,
\end{align}
where $\{F^{(t)}_{k_t}\}_{k_t \in \Outk_t}$ is a POVM satisfying $\operatorname{rank}F^{(t)}_{k_t}=2^{(n_{\mathrm{in}}-1)-t}$.
Moreover, $E^{(1)} \lcircarrow E^{(2)} \lcircarrow \cdots \lcircarrow E^{(T-1)} \lcircarrow E^{(T)}$ implies
\begin{align}
F^{(1)} \lcircarrow F^{(2)} \lcircarrow \cdots \lcircarrow F^{(T-1)} \lcircarrow \AssocPOVM{G},
\end{align}
and since $\mathrm A_t\nrightarrow\mathrm{cl}$ for $E^{(t)}$ with $\mathrm A_T\neq\mathrm A_t$, we also have $\mathrm A_t\nrightarrow\mathrm{cl}$ for $F^{(t)}$.
Applying the induction hypotheses (for $T-1$) to $G$, which has $(n_\mathrm{in}-1)$-qubit input system, shows that $G$ is $(n_{\mathrm{in}}-1)-(T-1)=m$-qubit implementable (with delayed inputs).

\medskip
\noindent\textbf{(d) $\Psi_{\mid k_T}$ is $m$-qubit implementable (without delayed inputs).}
From the composability $\AssocPOVM{\Gamma} \lcircarrow \AssocPOVM{\Lambda}$, there exists a column-stochastic matrix $\nu := ( \nu_{k_T, k} )_{k_T, k}$ such that
\begin{align}
\AssocPOVM{\Gamma}_{k_T} =\sum_{k}\nu_{k_T,k} \AssocPOVM{\Lambda}_k.
\end{align}
Because both $\AssocPOVM{\Gamma}$ and $\AssocPOVM{\Lambda}$ are projective measurements, the entries of $\nu$ must be $0/1$: indeed, ${(\AssocPOVM{\Gamma}_{k_T})}^2=\AssocPOVM{\Gamma}_{k_T}$ implies
$(\sum_k \nu_{k_T,k}\AssocPOVM{\Lambda}_k )^2=\sum_k \nu_{k_T,k} \AssocPOVM{\Lambda}_k$, and orthogonality of ${\AssocPOVM{\Lambda}_k}$ yields $(\nu_{k_T,k})^2=\nu_{k_T,k}$ for all $k_T,k$.
Define $B_{k_T}:=\{k:\nu_{k_T,k}=1\}$; then
\begin{align}
  \AssocPOVM{\Gamma}_{k_T} \;=\; \sum_{k\in B_{k_T}} \AssocPOVM{\Lambda}_k .
\end{align}
Since $\nu$ is column-stochastic, each $k$ belongs to exactly one set $B_{k_T}$.

Taking associated POVMs in \Cref{eq:lambda_decomp_induction_pf} and using \Cref{eq:pf_G_unitary_rep} gives
\begin{align}
\AssocPOVM{\Lambda}_k
=  U'^\dagger \sum_{k_T} \qty(\AssocPOVM{\Psi_{\mid k_T}}_k \otimes \ketbra{k_T}_\mathrm{R'}) U'
\qquad
\forall \, k \in \Outk.
\label{eq:pf_decomp_assocPOVM_Lambda}
\end{align}

Fix $\widetilde{k_T}$. Summing \eqref{eq:pf_decomp_assocPOVM_Lambda} over $k\in B_{\widetilde{k_T}}$ must reproduce $\AssocPOVM{\Gamma}_{\widetilde{k_T}}$, which is given by $U'^\dagger\bigl(\mathbb{I}_{\mathrm{X},\mathrm{A}_T}\otimes \ketbra{\widetilde{k_T}}_{\mathrm{R}'}\bigr)U'$ by \Cref{eq:pf_G_unitary_rep}. Hence, for each $k_T$,
\[
  \sum_{k\in B_{k_T}} \AssocPOVM{\Psi_{\mid k_T}}_k \;=\; \mathbb{I}_{\mathrm{X},\mathrm{A}_T}
  \quad\text{and}\quad
  \AssocPOVM{\Psi_{\mid k_T}}_k \;=\; 0 \ \ \text{for } k\notin B_{k_T}.
\]
Because each $k$ lies in exactly one $B_{k_T}$, the sum over $k_T$ in \eqref{eq:pf_decomp_assocPOVM_Lambda} has exactly one nonzero term for any fixed $k$, yielding
\begin{align}
  \AssocPOVM{\Lambda}_k
  \;=\; {U'}^\dagger \bigl(\AssocPOVM{\Psi_{\mid k_T}}_k \otimes \ketbra{k_T}_{\mathrm{R}'}\bigr) U',
  \qquad \forall\, k\in\Outk ,
  \label{eq:pf_decomp_assocPOVM_Lambda2}
\end{align}
where $k_T$ is the unique index with $k\in B_{k_T}$. Since $\AssocPOVM{\Lambda}$ is a projective measurement, so is $\AssocPOVM{\Psi_{\mid k_T}}$, and thus
\begin{align}
  \rank \AssocPOVM{\Psi_{\mid k_T}}_k \;=\;
  \begin{cases}
    0 & (k\notin B_{k_T}),\\[2pt]
    2^{n_{\mathrm{out}}} & (k\in B_{k_T}).
  \end{cases}
  \label{eq:pf_rank_Sk_cond}
\end{align}

For each fixed $k_T$, choose any disjoint sets $B_0, B_1$ that satisfy $B_{k_T} = B_0 \cup B_1$ and $|B_0|=|B_1|=|B_{k_T}|/2 = 2^{m-n_{\mathrm{out}}-1}$, and define $\Outk_0 := B_0$ and $\Outk_1 := \Outk \setminus B_0$. Then, we have  $\Outk=\Outk_0\cup \Outk_1$ and
\begin{align}
\sum_{k\in \Outk_j}\AssocPOVM{\Psi_{\mid k_T}}_k
= P_j
\qquad \forall \, j \in \{0,1\},
\end{align}
where $\{P_j\}_{j \in \{0,1\}}$ is a projective measurement on $\mathcal{H}_{\mathrm{X}} \otimes \mathcal{H}_{A_T}$ with $\rank P_j = 2^{m-1}$.
By \Cref{thm:range_m_space_inst_wodi}, $\{ \Psi_{k \mid k_T} \}_{k}$ is $m$-qubit implementable (without delayed inputs).

\medskip
Combining the above,
\begin{align}
  \Lambda_k=\sum_{k_T}\Psi_{k\mid k_T}\circ\bigl(G_{k_T}\otimes\mathrm{id}_{A_T}\bigr),
\end{align}
where $G$ is $m$-qubit implementable (with delayed inputs) and $\Psi_{\mid k_T}$ is $m$-qubit implementable (without delayed inputs).
Substituting the representation \Cref{eq:stairs_wdi_1} for $G$ shows that $\Lambda$ again admits the form \Cref{eq:stairs_wdi_1}.
Hence, by \Cref{rem:equiv_def_target_instr}, $\Lambda$ is $m$-qubit implementable (with delayed inputs).

\end{proof}

\subsection{Proof of \Cref{thm:necc_qubit_reduction}} \label{app:pf_thm_necc_qubit_reduction}

\begin{proof}

We first present a property of classical processing that holds independently of the theorem:
\noindent\textbf{Lemma (pushing classical processing to the final round).}

We show below that, without loss of generality, any $m$-qubit implementable instrument (with delayed inputs) can be written so that the elementary classical-processing operation appears only once, at the final round of the sequence of elementary operations.

Let $\{\rho_k \in \mathcal{L}(\mathcal{H}_\mathrm{m} \otimes \mathcal{H}_\mathrm{in})\}_{k \in \Outk}$ be a set of unnormalized states.
Given a function $f:\Outk \to \Outk'$, the elementary classical-processing operation maps $\{\rho_k\}_{k\in\Outk}\;\mapsto\; \{ \sum_{k \in f^{-1}(k')} \rho_k \}_{k' \in \Outk'}$.

\begin{enumerate}[label=(\alph*)]
  \item If an elementary unitary operation is applied thereafter, we obtain
  \begin{align}
    \qty{ \sum_{k \in f^{-1}(k')} \rho_k }_{k' \in \Outk'}
    \mapsto
    \qty{ \sum_{k \in f^{-1}(k')} U_{k'} \rho_k U_{k'}^\dagger }_{k' \in \Outk'},
  \end{align}
  where $U_{k'}$ is a unitary on $\mathcal{H}_\mathrm{m}\otimes\mathcal{H}_\mathrm{in}$ conditioned on $k'$.
  The same result is obtained by first applying the unitary conditioned on $k$, with $V_k:=U_{f(k)}$, and then applying the same classical processing:
  \begin{align}
    \{\rho_k\}_{k\in\Outk}
    \mapsto
    \{V_k\rho_k V_k^\dagger\}_{k\in\Outk}
    \mapsto
    \qty{ \sum_{k \in f^{-1}(k')} V_k \rho_k V_k^\dagger }_{k' \in \Outk'}
    =
    \qty{ \sum_{k \in f^{-1}(k')} U_{k'} \rho_k U_{k'}^\dagger }_{k' \in \Outk'}.
  \end{align}

  \item If an elementary computational basis measurement or an elementary input-loading operation is applied thereafter, we obtain
  \begin{align}
    \qty{ \sum_{k \in f^{-1}(k')} \rho_k }_{k' \in \Outk'}
    \mapsto
    \qty{ \sum_{k \in f^{-1}(k')} N_x \rho_k N_x^\dagger }_{k'x \in \Outk' \times \{0,1\}^{|S|}},
  \end{align}
  where, for $S := S(k') \subseteq [m]$, we take $N_x := (\ketbra{x}_S \otimes \mathbb{I})$ for the elementary computational basis measurement and $N_x := (\bra{x}_S \otimes \mathbb{I})$ for the elementary input-loading operation, with $x \in \{0,1\}^{|S|}$ in both cases.
  The same result is obtained by first applying the measurement/loading on $S := S(f(k))$, and then the classical processing with $\widetilde{f}:\Outk\times\{0,1\}^{|S|} \to \Outk'\times\{0,1\}^{|S|}$ defined by $\widetilde{f}(kx)=f(k)x$:
  \begin{align}
    \{\rho_k\}_{k\in\Outk}
    \mapsto
    \{N_x \rho_k N_x^\dagger\}_{kx}
    \mapsto
    \qty{ \sum_{kx \in \widetilde{f}^{-1}(k'x')} N_x \rho_k N_x^\dagger }_{k'x'}
    =
    \qty{ \sum_{k \in f^{-1}(k')} N_x \rho_k N_x^\dagger }_{k'x}.
  \end{align}

  \item If another elementary classical-processing operation with $g:\Outk'\to\Outk''$ follows, then
  \begin{align}
    \qty{ \sum_{k \in f^{-1}(k')} \rho_k }_{k' \in \Outk'}
    \mapsto
    \qty{ \sum_{k' \in g^{-1}(k'')} \sum_{k \in f^{-1}(k')} \rho_k }_{k'' \in \Outk''}
    =
    \qty{ \sum_{k \in (g\circ f)^{-1}(k'')} \rho_k }_{k'' \in \Outk''}.
  \end{align}
  Therefore, the two classical-processing steps can be merged into a single one with $g\circ f$.
\end{enumerate}
By iterating (a)–(c), all elementary classical-processing operations can be merged and postponed to the final round.

\bigskip
\noindent\textbf{Proof of the theorem.}
Let $\Lambda := \{\Lambda_k\}_{k \in \Outk}$ with
$\Lambda_k(\rho) := \Tr_\mathrm{R}\!\bigl[(\ketbra{k}_\mathrm{R} \otimes \mathbb{I}_\mathrm{out} )\, U \rho \, U^\dagger\bigr]$,
where $U : \mathcal{H}_\mathrm{in} \to \mathcal{H}_\mathrm{R} \otimes \mathcal{H}_\mathrm{out}$ is unitary and
$\mathcal{H}_\mathrm{in} \cong (\mathbb{C}^2)^{\otimes n_\mathrm{in}}$, $\mathcal{H}_\mathrm{out} \cong (\mathbb{C}^2)^{\otimes n_\mathrm{out}}$.

Assume $\Lambda$ is $(n_\mathrm{in}-T)$-qubit implementable (with delayed inputs). By \Cref{rem:stairs_wdi} and the above “push-to-final-round” argument, $\Lambda$ admits the form
\begin{multline}
  \Lambda_k (\rho_\mathrm{in}) =
  \sum_{k_{n_\mathrm{in}+1} \in f^{-1}(k)}
  \Tr_{(\mathbb{C}^2)^{\otimes (n_\mathrm{in} -T-n_\mathrm{out})}}
  \Bigl[
    \Bigl(
      \sum_{k_0, \ldots, k_{n_\mathrm{in}}}
      \widetilde\Gamma^{(n_\mathrm{in}+1)}_{k_{n_\mathrm{in}+1}\mid k_{n_\mathrm{in}}}
      \circ \cdots \circ
      \widetilde\Gamma^{(1)}_{k_1\mid k_0}
    \Bigr)
    \bigl(\ketbra{0}^{\otimes (n_\mathrm{in} -T)} \otimes \rho_{\mathrm{in}} \bigr)
  \Bigr]\\
  \forall\, \rho_{\mathrm{in}} \in \mathcal{L}(\mathcal H_{\mathrm{in}}),
  \label{eq:lambda_decomp_necc_pf}
\end{multline}
with
\begin{align}
  \widetilde{\Gamma}^{(t)}_{k_t \mid k_{t-1}}
  \;:=\;
  \Gamma^{(t)}_{k_t \mid k_{t-1}} \otimes \operatorname{id}_{\mathrm{A}_{t}, \ldots, \mathrm{A}_{n_\mathrm{in}}},
\end{align}
where each instrument $\Gamma^{(t)}_{\mid k_{t-1}} := \{\Gamma^{(t)}_{k_t \mid k_{t-1}}\}_{k_t \in \Outk_t}$ is $(n_\mathrm{in} - T)$-qubit implementable (without delayed inputs). See \Cref{fig:pf_stairs_wdi}. Unlike \Cref{rem:stairs_wdi}, there is now a single elementary classical-processing operation applied only at the end, and no $\Gamma^{(t)}_{\mid k_{t-1}}$ contains any classical-processing step.

\begin{figure}[H]
\centering
\includegraphics[width=0.99\linewidth]{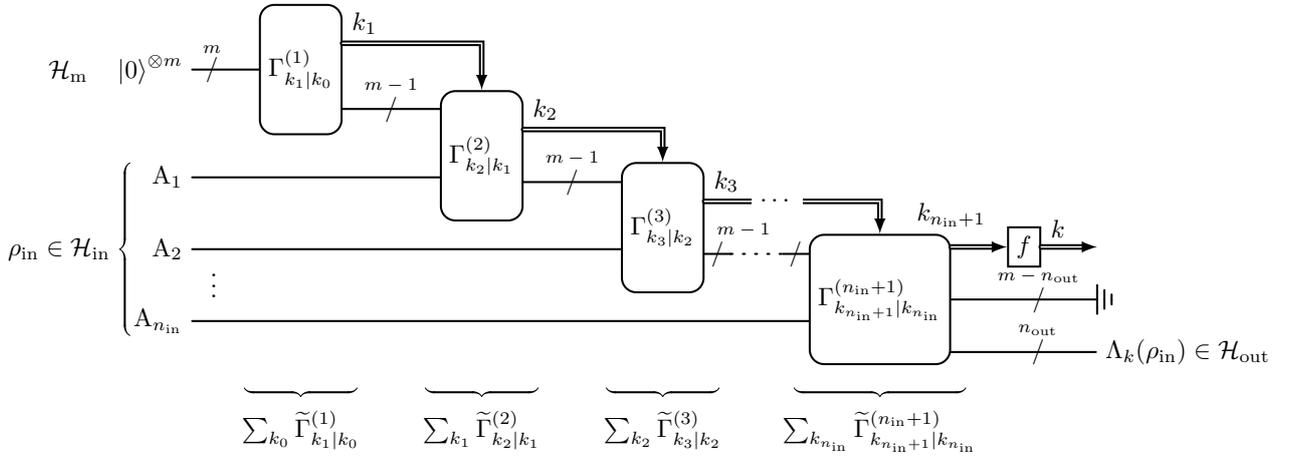}
\caption{An implementation of $\{\Lambda_k\}_k$ as an $(n_\mathrm{in}-T)$-qubit implementable instrument (with delayed inputs). Each instrument $\{\Gamma^{(t)}_{k_t \mid k_{t-1}}\}_{k_t}$ is $(n_\mathrm{in}-T)$-qubit implementable (without delayed inputs) and contains no elementary classical-processing operation. The final classical-processing $f$ is applied only once, at the end. In the figure, we may write $m := n_\mathrm{in} - T$ for brevity.}
\label{fig:pf_stairs_wdi}
\end{figure}

For each $t\in [n_\mathrm{in}]$, define the accumulated instrument $\Xi^{(t)} := \{\Xi^{(t)}_{k_t}\}_{k_t \in \Outk_t}$ by
\begin{align}
  \Xi^{(t)}_{k_t}
  \;:=\;
  \sum_{k_0, \ldots, k_{t-1}}
  \widetilde{\Gamma}^{(t)}_{k_t \mid k_{t-1}} \circ \cdots \circ \widetilde{\Gamma}^{(1)}_{k_1 \mid k_0}
  \qquad \forall \, k_t \in \Outk_t.
  \label{eq:def_Xi_pf}
\end{align}
Since no $\Gamma^{(t)}_{\mid k_{t-1}}$ contains classical processing, each $\Xi^{(t)}_{k_t}$ has Kraus rank~$1$. Indeed, by the definitions of the elementary unitary, computational basis measurement, and input-loading operations (see \Cref{eq:def_elementary_u_wdi,eq:def_elementary_m_wdi,eq:def_elementary_l_wdi}), any composition of them preserves the Kraus-rank-$1$ property.

Moreover, the output system of $\Xi^{(t)}$ has dimension $2^{\,2 n_\mathrm{in} - T - t}$; hence, by the Kraus-rank-$1$ property and \Cref{lem:output_dim_indecomp}, for each $t \in [n_\mathrm{in}]$,
\begin{align}
  \rank \AssocPOVM{\Xi^{(t)}}_{k_t} \;\le\; 2^{\,2 n_\mathrm{in} - T - t}
  \qquad \forall\, k_t \in \Outk_t.
\end{align}

From \Cref{eq:lambda_decomp_necc_pf} and \Cref{eq:def_Xi_pf} we obtain the composability conditions
\begin{equation}\label{eq:stair-pp-chain}
  (\Xi^{(1)} \rcircarrow \cdots )\; \Xi^{(n_\mathrm{in}-T+1)} \rcircarrow \cdots \rcircarrow \Xi^{(n_\mathrm{in})} \rcircarrow \Lambda,
\end{equation}
which, by \Cref{lem:pp_cond_indecomp_instr}, yields the corresponding composability conditions for the associated POVMs:
\begin{align}
  \bigl(\AssocPOVM{\Xi^{(1)}} \lcircarrow \cdots \bigr)
  \AssocPOVM{\Xi^{(n_\mathrm{in}-T+1)}}
  \lcircarrow \cdots \lcircarrow
  \AssocPOVM{\Xi^{(n_\mathrm{in})}}
  \lcircarrow \AssocPOVM{\Lambda}.
\end{align}

For each $t\in [n_\mathrm{in}]$, the instruments $\widetilde{\Gamma}^{(1)}_{\mid k_0},\ldots,\widetilde{\Gamma}^{(t)}_{\mid k_{t-1}}$ act trivially on $\mathrm{A}_t$, and hence so does $\Xi^{(t)}$. Thus, by \Cref{thm:ons_decomp}, $\Xi^{(t)}$ satisfies the outcome no-signaling condition $\mathrm{A}_t \nrightarrow \mathrm{cl}$; by \Cref{rem:ons_povm}, so does $\AssocPOVM{\Xi^{(t)}}$.

The associated POVMs $\AssocPOVM{\Xi^{(t)}}$ for $t=n_\mathrm{in}-T+1,\ldots,n_\mathrm{in}$ satisfy all the required conditions in the theorem statement. Indeed, set $E^{(s)} := \AssocPOVM{\Xi^{(n_\mathrm{in}-T+s)}}$ and $\mathrm{A}_s := \mathrm{A}_{n_\mathrm{in}-T+s}$ for $s \in [T]$. Then:
\begin{itemize}
  \item The composability conditions hold: $E^{(1)} \lcircarrow E^{(2)} \lcircarrow \cdots \lcircarrow E^{(T)} \lcircarrow \AssocPOVM{\Lambda}$.
  \item Each $E^{(s)}$ satisfies the outcome no-signaling condition $\mathrm{A}_s \nrightarrow \mathrm{cl}$.
  \item Each $E^{(s)} := \{ E^{(s)}_{k_s} \}_{k_s \in \Outk_s}$ satisfies $\rank E^{(s)}_{k_s} \le 2^{\,n_\mathrm{in} - s}$ for all $k_s \in \Outk_s$.
\end{itemize}
This completes the proof.
\end{proof}

\subsection{Proof of \Cref{thm:no_gap_ent_distill}} \label{app:pf_thm_no_gap_ent_distill}

\begin{proof}
Let $\mathcal{C}$ be an $[[n,k]]$ stabilizer code with stabilizer generators $\{g_1, \ldots, g_{n-k}\}$ and let $U_{\mathrm{enc}}: \mathcal{H}_\mathrm{in} \to \mathcal{H}_\mathrm{R} \otimes \mathcal{H}_\mathrm{out}$ be the encoding unitary of $\mathcal{C}$ satisfying
\begin{align}
U_{\mathrm{enc}} \big(  Z_i \otimes \mathbb{I}_\mathrm{out} \big) U_{\mathrm{enc}}^\dagger = g_i,
\label{eq:encoding_unitary_pf}
\end{align}
for all $i \in \{1, \cdots, n-k\}$, where $\mathcal{H}_\mathrm{in} \cong (\mathbb{C}^2)^{\otimes n}$, $\mathcal{H}_\mathrm{out} \cong (\mathbb{C}^2)^{\otimes k}$, and $\mathcal{H}_\mathrm{R} \cong (\mathbb{C}^2)^{\otimes (n-k)}$.
Let $\Lambda^{\mathrm{dist}} := \{\Lambda^{\mathrm{dist}}_s : \mathcal{L}(\mathcal{H}_\mathrm{in}) \to \mathcal{L}(\mathcal{H}_\mathrm{out})\}_{s \in \mathbb{F}_2^{n-k}}$ be the quantum instrument defined by
\begin{align}
  \Lambda_s^{\mathrm{dist}}(\rho) := \Tr_\mathrm{R} \qty[( \ketbra{s}_\mathrm{R} \otimes \mathbb{I}_\mathrm{out} )\, U_{\mathrm{enc}}^\dagger \rho \, U_{\mathrm{enc}}]
  \qquad \forall \rho \in \mathcal{L}(\mathcal{H}_\mathrm{in}), \quad s \in \mathbb{F}_2^{n-k}.
\end{align}
The associated POVM of $\Lambda^{\mathrm{dist}}$ is a projective measurement where each element has rank $2^k$.
Below, write $\Lambda := \Lambda^{\mathrm{dist}}$ and $U:=U_{\mathrm{enc}}^\dagger$ for brevity.

\bigskip
\noindent\textbf{Expansion of associated POVM $\AssocPOVM{\Lambda}$ in terms of the stabilizer generators.}

Using \Cref{eq:encoding_unitary_pf} and the identity
\begin{align}
  \ketbra{s}_{\mathrm R}=\frac{1}{2^{n-k}}\sum_{r\in\mathbb F_2^{n-k}}(-1)^{s\cdot r}\, Z^r,\qquad
  Z^r:=\bigotimes_{i=1}^{n-k}Z_i^{\,r_{(i)}},
\end{align}
we obtain
\begin{equation}\label{eq:Ps-Fourier_pf}
  \AssocPOVM{\Lambda}_s \;=\; \frac{1}{2^{n-k}}\sum_{r\in\mathbb F_2^{n-k}}(-1)^{s\cdot r}\, g^r,
  \qquad g^r:=\prod_{i=1}^{n-k} g_i^{\,r_{(i)}},
\end{equation}
where $r_{(i)}$ is the $i$-th entry of $r\in\mathbb F_2^{n-k}$ and $s\cdot r:=\sum_{i=1}^{n-k}s_{(i)}r_{(i)}$ is the standard inner product over $\mathbb F_2$.

\bigskip
\noindent\textbf{Hypotheses from the necessary conditions.}

Fix $T \in \{1, 2, \cdots, n\}$. We now assume the necessary condition stated in \Cref{thm:necc_qubit_reduction} holds for $\Lambda$, namely, there exist
POVMs $E^{(t)} := \{E^{(t)}_{s_t}\}_{s_t \in \Outs_t}$ for $t = 1, 2, \cdots, T$ such that
\begin{itemize}
  \item The composability conditions: $E^{(1)} \lcircarrow E^{(2)} \lcircarrow \cdots \lcircarrow E^{(T)} \lcircarrow \AssocPOVM{\Lambda} $.
  \item Each $E^{(t)}$ satisfies the outcome no-signaling condition $\mathrm{A}_t \nrightarrow \mathrm{cl}$, where $\{\mathrm{A}_1, \mathrm{A}_2, \ldots, \mathrm{A}_T\}$ is an ordered subset of the input qubits.
  \item Each $E^{(t)}$ satisfies $\rank E^{(t)}_{s_t} \leq 2^{n - t}$ for all $s_t \in \Outs_t$.
\end{itemize}
From the composability conditions, there exists a column-stochastic matrix $\nu^{(t)} := (\nu^{(t)}_{s_t, s})_{s_t \in \Outs_t, \, s \in \mathbb{F}_2^{n-k}}$ for each $t \in [T]$ such that
\begin{align}
  E^{(t)}_{s_t} = \sum_{s} \nu^{(t)}_{s_t, s}\, \AssocPOVM{\Lambda}_s.
  \label{eq:def_column_pf_ent}
\end{align}
Since $\{\AssocPOVM{\Lambda}_s\}_s$ are pairwise orthogonal projectors with rank $2^k$, the rank bound on $E^{(t)}$ is equivalent to
\begin{align}
  \text{each row of $\nu^{(t)}$ has at most $2^{\,n-k-t}$ nonzero entries.}
  \label{eq:nonzero_count_pf}
\end{align}

\medskip
\noindent\textbf{Equivalent form of the outcome no-signaling conditions.}

For each $t\in[T]$ we also have $\mathrm{A}_{t},\mathrm{A}_{t+1},\cdots,\mathrm{A}_T \nrightarrow \mathrm{cl}$ for $E^{(t)}$, since $E^{(t)}$ is related to each $E^{(\tau)}$ for $\tau \in \{t,\ldots,T\}$ by a column-stochastic matrix and hence $\mathrm{A}_{\tau} \nrightarrow \mathrm{cl}$ for $E^{(\tau)}$ implies the same condition for $E^{(t)}$.

By definition, the condition $\mathrm{A}_t,\ldots,\mathrm{A}_T \nrightarrow \mathrm{cl}$ for $E^{(t)}$ is equivalent to saying that $E^{(t)}$ acts trivially on $\mathrm{A}_t,\ldots,\mathrm{A}_T$.
Expanding in the Pauli basis, for each $\tau \in \{t, \ldots, T\}$ and $\Pi \in \{X, Y, Z\}$ we have
\begin{align}
  \Tr_{{\mathrm{A}_\tau}} \!\big[ \Pi_{\mathrm{A}_\tau} \,E^{(t)}_{s_t} \big] = 0
  \qquad \forall \ s_t \in \Outs_t.
\end{align}
Substituting \Cref{eq:Ps-Fourier_pf,eq:def_column_pf_ent} yields
\begin{align}
  \sum_{r \in \mathbb{F}_2^{n-k}} \qty(\sum_{s} \nu^{(t)}_{s_t, s} (-1)^{s \cdot r} )\,
  \Tr_{{\mathrm{A}_\tau}} \!\big[ \Pi_{\mathrm{A}_\tau} g^r \big] = 0
  \qquad \forall \, s_t \in \Outs_t. \label{eq:ons-expansion}
\end{align}
For each $\tau$, define $x_\tau,z_\tau\in\mathbb F_2^{n-k}$ so that the $i$-th entry of $x_\tau$ (resp. $z_\tau$) is $1$ iff the generator $g_i$ contains Pauli-$X$ (resp. Pauli-$Z$) on qubit $\mathrm{A}_\tau$. Equivalently, these are the column vectors for $\mathrm{A}_\tau$ in the check matrix of $\mathcal{C}$. Define
\begin{align}
  R_{X_\tau} &:= \{ r \in \mathbb{F}_2^{n-k} : r \cdot x_\tau = 1,\ r \cdot z_\tau = 0 \},\\
  R_{Y_\tau} &:= \{ r \in \mathbb{F}_2^{n-k} : r \cdot x_\tau = 1,\ r \cdot z_\tau = 1 \},\\
  R_{Z_\tau} &:= \{ r \in \mathbb{F}_2^{n-k} : r \cdot x_\tau = 0,\ r \cdot z_\tau = 1 \}.
\end{align}
The sum in \Cref{eq:ons-expansion} can be restricted to $r \in R_{\Pi_\tau}$, since $\Tr_{{\mathrm{A}_\tau}}[ \Pi_{\mathrm{A}_\tau} g^r ] = 0$ whenever $r \notin R_{\Pi_\tau}$. Orthogonality of different Pauli operators then gives, for each $\tau \in \{t, \ldots, T\}, \Pi \in \{X, Y, Z\}$, and $s_t \in \Outs_t$,
\begin{align}
  \sum_{s} \nu^{(t)}_{s_t, s} (-1)^{s \cdot r} = 0 \quad \forall\, r \in R_{\Pi_\tau}.
\end{align}
Define the subspace
\begin{equation}\label{eq:L-subspace}
  \Outl_{[t,T]}\;:=\;\bigl\{r\in\mathbb F_2^{n-k}:\ r\!\cdot\!x_\tau=0\ \text{and}\ r\!\cdot\!z_\tau=0\ \ \forall\,\tau\in\{t,\dots,T\}\bigr\}.
\end{equation}
Its orthogonal complement is given by
\begin{align}
  \Outl_{[t,T]}^\perp = \operatorname{span} \{x_\tau, z_\tau : \tau \in \{t, \ldots, T\}\}. \label{eq:L-perp}
\end{align}
Consequently, $\mathrm{A}_{t},\ldots,\mathrm{A}_T \nrightarrow \mathrm{cl}$ for $E^{(t)}$ is equivalent to
\begin{align}
  \sum_{s} \nu^{(t)}_{s_t, s} (-1)^{s \cdot r} = 0 \quad \forall\, r \notin \Outl_{[t,T]},\ \forall \, s_t \in \Outs_t.
\end{align}
By \Cref{lem:fourier_support}, this is equivalent to the coset-constancy condition: for each $s_t \in \Outs_t$,
\begin{align}
  \nu^{(t)}_{s_t,(s + \ell)} = \nu^{(t)}_{s_t,s} \quad \forall\, s \in \mathbb{F}_2^{n-k},\ \forall\, \ell \in \Outl_{[t,T]}^\perp.
  \label{eq:coset_constancy_pf}
\end{align}
In words, each row of $\nu^{(t)}$ has the same entries in every coset of $\Outl_{[t,T]}^\perp$. Here and throughout, for a subspace $V$ of $\mathbb F_2^{n-k}$, a coset of $V$ is a subset of $\mathbb{F}_2^{n-k}$ of the form $s + V := \{ s + v : v \in V \}$ for some $s \in \mathbb{F}_2^{n-k}$.

Let $d_t:=\dim \Outl_{[t,T]}^\perp$. Then $d_t\le n-k-t$ must hold; otherwise \Cref{eq:nonzero_count_pf} and \Cref{eq:coset_constancy_pf} cannot be simultaneously satisfied.

\bigskip
\noindent\textbf{Construction of projective measurements required in the sufficient conditions.}

Let $u_1, u_2, \cdots, u_{n-k} \in \mathbb{F}_2^{n-k}$ be the vectors obtained by scanning the ordered set $\{ x_T, z_T, x_{T-1}, z_{T-1}, \ldots, x_1, z_1 \}$ and removing any vector that lies in the span of the previously selected ones.
For each $t\in[T]$, the definition of $\Outl_{[t,T]}^\perp$ (\Cref{eq:L-perp}) implies $\operatorname{span} \{u_1, \ldots, u_{d_t}\} = \Outl_{[t,T]}^\perp$.
Define $J_t := \operatorname{span}\{u_1,\ldots,u_{n-k-t}\}$.
The number of cosets of $J_t$ is given by $2^{n-k}/2^{n-k-t} = 2^{t}$, and hence we can label the cosets of $J_t$ by elements of $\mathbb{F}_2^{t}$.
Also, define a projective measurement $P^{(t)} := \{P^{(t)}_{s'_t}\}_{s'_t \in \mathbb{F}_2^{t}}$ by
\begin{align}
  P^{(t)}_{s'_t} := \sum_{ s \in \text{$s'_t$-th coset of } J_t } \AssocPOVM{\Lambda}_s
  \qquad \forall s'_t \in \mathbb{F}_2^{t}.
\end{align}
Equivalently, the column-stochastic matrix $\mu^{(t)} := (\mu^{(t)}_{s'_t, s})_{s'_t, s}$ for $P^{(t)}$ (i.e., $P^{(t)}_{s'_t} = \sum_s \mu^{(t)}_{s'_t, s} \AssocPOVM{\Lambda}_s$) is
\begin{align}
  \mu^{(t)}_{s'_t, s} =
  \begin{cases}
    1, & \text{if $s$ is in the $s'_t$-th coset of $J_t$,}\\
    0, & \text{otherwise.}
  \end{cases}
\end{align}

The projective measurements $P^{(1)},\ldots,P^{(T)}$ satisfy the conditions required by \Cref{thm:suff_qubit_reduction}, as follows.
Since $\AssocPOVM{\Lambda}$ is a projective measurement, $\operatorname{rank} P^{(t)}_{s'_t} = |J_t| \cdot \operatorname{rank} \AssocPOVM{\Lambda}_s = 2^{n-k-t} \cdot 2^k = 2^{n-t}$ for all $s'_t \in \Outs_t$.
By definition of $J_t$, every coset of $J_t$ decomposes into two disjoint cosets of $J_{t+1}$, and hence $P^{(t)} \lcircarrow P^{(t+1)}$ for each $t\in[T-1]$.
More generally, every coset of $J_t$ decomposes into disjoint cosets of $J_{t'}$ for any $t' \geq t$, so each row of $\mu^{(t)}$ is constant on cosets of $J_{t'}$ for any $t' \ge t$.
Since $d_t \leq n-k-t$, there exists $t' \geq t$ with $d_t = n-k-t'$, and hence
\begin{align}
  J_{t'} = \operatorname{span}\{u_1,\ldots,u_{n-k-t'}\}
  = \operatorname{span}\{u_1,\ldots,u_{d_t}\}
  = \Outl_{[t,T]}^\perp.
\end{align}
Therefore each row of $\mu^{(t)}$ is constant on cosets of $\Outl_{[t,T]}^\perp$, and by \Cref{eq:coset_constancy_pf} the POVM $P^{(t)}$ satisfies the outcome no-signaling condition $\mathrm{A}_t \nrightarrow \mathrm{cl}$. This completes the proof.

\end{proof}

\end{document}